\pgfplotsset{compat=1.18}
\definecolor[named]{urlblue}{cmyk}{1,0.58,0,0.21}
\tikzstyle{vertex}=[draw,circle,fill=white,minimum size=7pt,inner sep=0pt]
\newtheorem{theorem}{Theorem}[section]
\newtheorem{lemma}[theorem]{Lemma}
\newtheorem{corollary}[theorem]{Corollary}
\newtheorem{conjecture}[theorem]{Conjecture}
\newtheorem{claim}[theorem]{Claim}
\newtheorem{fact}[theorem]{Fact}
\theoremstyle{definition}
\newtheorem{definition}[theorem]{Definition}
\theoremstyle{remark}
\newtheorem{remark}[theorem]{Remark}
\newenvironment{claimproof}{\begin{proof}}{\end{proof}}
\newcommand{\CA}{{\mathcal A}}
\newcommand{\CB}{{\mathcal B}}
\newcommand{\CC}{{\mathcal C}}
\newcommand{\CL}{{\mathcal L}}
\newcommand{\NN}{{\mathbb N}}
\newcommand{\ZZ}{{\mathbb Z}}
\newcommand{\RR}{{\mathbb R}}
\newcommand{\QQ}{{\mathbb Q}}
\newcommand{\FF}{{\mathbb F}}
\newcommand{\ba}{\boldsymbol{a}}
\newcommand{\bb}{\boldsymbol{b}}
\newcommand{\bc}{\boldsymbol{c}}
\newcommand{\bx}{\boldsymbol{x}}
\newcommand{\by}{\boldsymbol{y}}
\newcommand{\bz}{\boldsymbol{z}}
\newcommand{\im}{\mathrm{im}}
\newcommand{\numindsub}[2]{\#\mathrm{IndSub}(#1 \to #2)}
\newcommand{\numindsubstar}[1]{\#\mathrm{IndSub}(#1 \to \,\star\,)}
\newcommand{\numsub}[2]{\#\mathrm{Sub}(#1 \to #2)}
\newcommand{\numsubstar}[1]{\#\mathrm{Sub}(#1 \to \,\star\,)}
\newcommand{\subpoly}[1]{\mathrm{Sub}_{#1}}
\newcommand{\indsubpoly}[1]{\mathrm{IndSub}_{#1}}
\newcommand{\numhom}[2]{\#\mathrm{Hom}(#1 \to #2)}
\newcommand{\pclique}{\#\textnormal{\textsc{Clique}}}
\newcommand{\pmodclique}[1]{\#_{#1}\textnormal{\textsc{Clique}}}
\newcommand{\psub}[1]{\#\textnormal{\textsc{ParSub}}(#1)}
\newcommand{\pcolsub}[1]{\#\textnormal{\textsc{ParColSub}}(#1)}
\newcommand{\pmodcolsub}[2]{\#_{#1}\textnormal{\textsc{ParColSub}}(#2)}
\newcommand{\colsub}[1]{\#\textnormal{\textsc{ColSub}}(#1)}
\newcommand{\modcolsub}[2]{\#_{#1}\textnormal{\textsc{ColSub}}(#2)}
\newcommand{\deccolsub}[1]{\textnormal{\textsc{ColSub}}(#1)}
\newcommand{\pindsub}[1]{\#\textnormal{\textsc{ParIndSub}}(#1)}
\newcommand{\pmodindsub}[2]{\#_{#1}\textnormal{\textsc{ParIndSub}}(#2)}
\newcommand{\sindsub}[1]{\#\textnormal{\textsc{IndSub}}(#1)}
\newcommand{\smodindsub}[2]{\#_{#1}\textnormal{\textsc{IndSub}}(#2)}
\newcommand{\kindsub}[2]{\#\textnormal{\textsc{IndSub}}(#1,#2)}
\newcommand{\slice}[2]{#1^{(#2)}}
\newcommand{\altenum}[1]{#1^{\downarrow}}
\newcommand{\hw}{{\sf hw}}
\newcommand{\True}{{\sf True}}
\newcommand{\False}{{\sf False}}
\newcommand{\alleven}{\Phi_{\sf even}}
\newcommand{\altalleven}{\altenum{\Phi_{\sf even}}}
\newcommand{\can}[1]{#1^{\textnormal{\textsf{can}}}}
\DeclareMathOperator{\tw}{tw}
\DeclareMathOperator{\vc}{vc}
\DeclareMathOperator{\supp}{supp}
\DeclareMathOperator{\Aut}{Aut}
\DeclareMathOperator{\Sym}{Sym}
\newcommand{\sharpwone}{\textnormal{\textsf{\#W[1]}}}
\newcommand{\modwone}[1]{\textnormal{\textsf{Mod}}_{#1}\textnormal{\textsf{W[1]}}}
\newcommand{\sharpP}{\textnormal{\textsf{\#P}}}
\newcommand{\orcid}[1]{\href{https://orcid.org/#1}{\includegraphics[height=1.8ex]{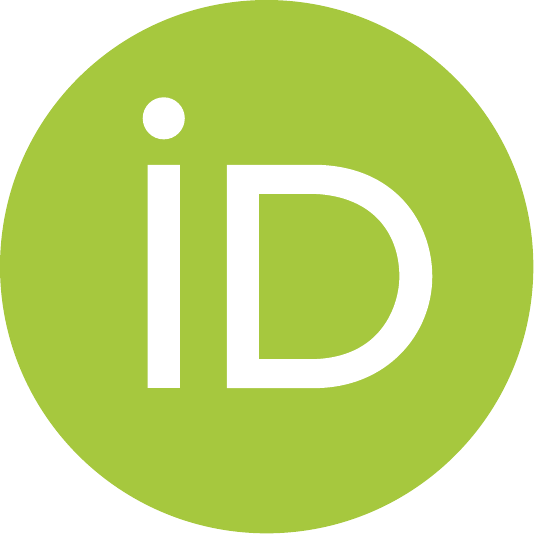}}}
\title{Counting Small Induced Subgraphs:\\ Hardness via Fourier Analysis%
\thanks{\rightskip=5cm 
Funded by the European Union (ERC, CountHom, 101077083). Views and opinions expressed are however those of the author(s) only and do not necessarily reflect those of the European Union or the European Research Council Executive Agency. Neither the European Union nor the granting authority can be held responsible for them.}
}
\author{Radu Curticapean \orcid{0000-0001-7201-9905}\\
University of Regensburg and IT University of Copenhagen
\and
Daniel Neuen \orcid{0000-0002-4940-0318}\\
University of Regensburg and Max Planck Institute for Informatics}
\date{}
\begin{document}

\maketitle
\begin{textblock}{5}(8.5, 8.75) \includegraphics[width=120px]{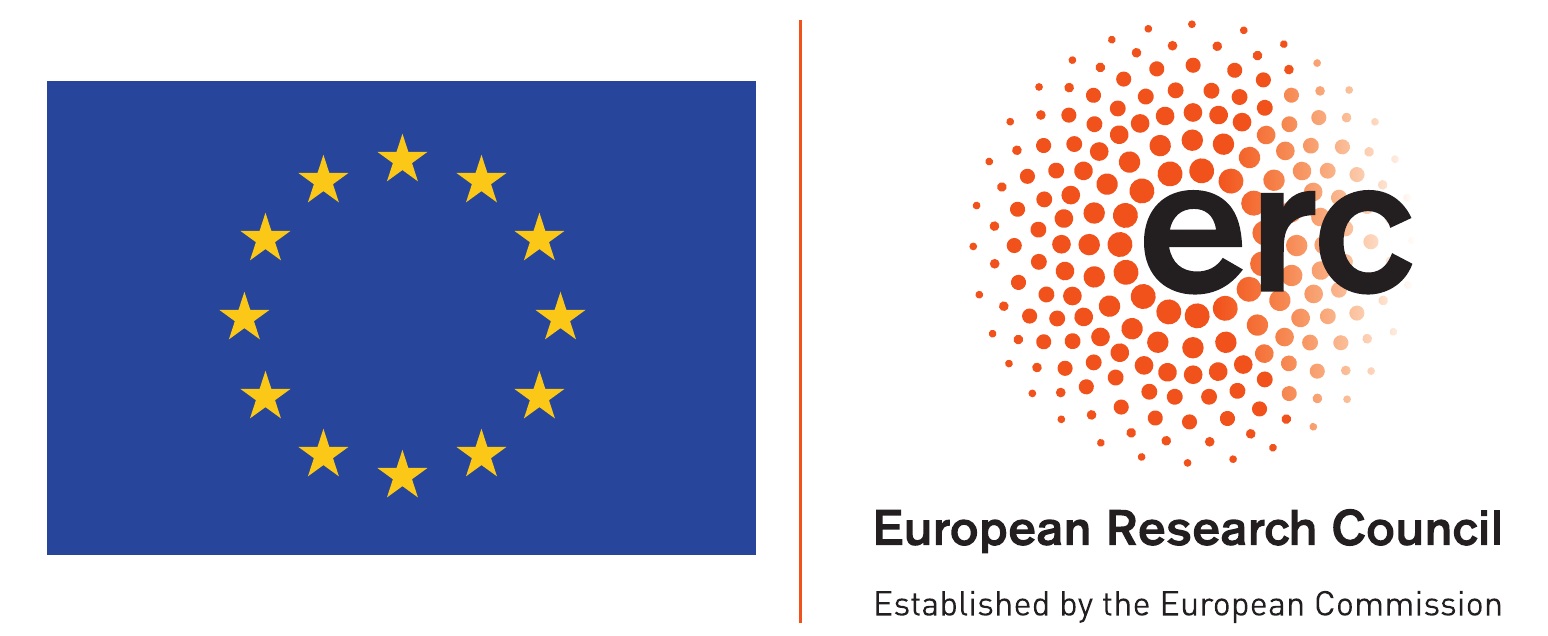} \end{textblock}

\begin{abstract}
    For a fixed graph property $\Phi$ and integer $k \geq 1$, consider the problem of counting the induced $k$-vertex subgraphs satisfying $\Phi$ in an input graph $G$.
    This problem can be solved by brute-force in time $O(n^{k})$.
    Under ETH, we prove several lower bounds on the optimal exponent in this running time:
    \begin{itemize}
        \item If $\Phi$ is edge-monotone (i.e.,  closed under deleting edges), then ETH rules out $n^{o(k)}$ time algorithms for this problem. This strengthens a recent lower bound by D{\"{o}}ring, Marx and Wellnitz~[STOC 2024].
        Our result also holds for counting modulo fixed primes.
        \item If at most $(2-\varepsilon)^{\binom{k}{2}}$ graphs on $k$ vertices satisfy $\Phi$, for some $\varepsilon > 0$, then ETH also rules out an exponent of $o(k)$.
        This holds even when the graphs in $\Phi$ have arbitrary individual weights, generalizing previous results for hereditary properties by Focke and Roth~[SIAM J.\ Comput.\ 2024].
        \item If $\Phi$ is non-trivial and excludes $\beta_\Phi$ edge-densities, then the optimal exponent under ETH is $\Omega(\beta_\Phi)$.
        This holds even when the graphs in $\Phi$ have arbitrary individual weights, generalizing previous results by Roth, Schmitt and Wellnitz~[SIAM J.\ Comput.\ 2024].
    \end{itemize}
    In all cases, we also obtain $\sharpwone$-hardness if $k$ is part of the input and considered as the parameter.
    We also obtain lower bounds on the Weisfeiler-Leman dimension.

    As opposed to the nontrivial techniques from combinatorics, group theory, and simplicial topology used before,
    our results follow from a relatively straightforward ``algebraization'' of the problem in terms of polynomials, combined with applications of simple algebraic facts, which can also be interpreted in terms of Fourier analysis.
    Most of the $\sharpwone$-hardness results known in the area are subsumed by our paper, and our hardness results often hold in a more general setting.
\end{abstract}

\section{Introduction}
Counting small patterns in graphs is of fundamental importance in computer science, with practical applications in network analysis and bioinformatics~\cite{MiloSIKCA02}.
From a theoretical perspective, pattern counting problems can be formalized in the following ways:

\begin{enumerate}
    \item For a \emph{fixed pattern} $H$, consider the function $\numindsubstar{H}$ that, on input a graph $G$, outputs the number $\numindsub{H}{G}$ of induced $H$-copies in $G$. An example is the number of triangles in $G$.
    Each such function can be computed in polynomial time, and we wish to use hardness assumptions such as the Exponential-Time Hypothesis (ETH)~\cite{ImpagliazzoPZ01} to bound the exponent of this running time, depending on $H$.
    \item We may also view both $H$ and $G$ as input when computing $\numindsub{H}{G}$ and then consider $k \coloneqq |V(H)|$ as a parameter, in the sense of \emph{parameterized complexity}~\cite{CyganFKLMPPS15,FlumG06}.
    Here, we seek fixed-parameter tractable algorithms with running time $f(k) \cdot n^{O(1)}$ for computable functions $f$.
    Similar to the classical notion of $\sharpP$-hardness, the notion of $\sharpwone$-hardness in parameterized complexity rules out such algorithms.
\end{enumerate}

In the fixed-pattern perspective, it is known that ETH implies a constant $\alpha > 0$ such that, for every $k$-vertex graph $H$, the number  $\numindsubstar{H}$ cannot be computed in time $O(n^{\alpha \cdot k})$~\cite{ChenCFHJKX05,DoringMW24}.
Note that this lower bound holds for every pattern $H$.

From the parameterized perspective, the problem of counting induced $H$-copies in graphs $G$ with parameter $|V(H)|$ is $\sharpwone$-hard, since it subsumes the canonical $\sharpwone$-hard problem of counting $k$-cliques~\cite{FlumG06}.
For a more refined picture, we consider the problems $\pindsub{\mathcal H}$ for fixed recursively enumerable graph classes $\mathcal H$:
Here, the input is a graph $H\in \mathcal H$ and a graph $G$, the output is $\numindsub{H}{G}$, and the parameter is $|V(H)|$. Every such problem is trivially polynomial-time solvable if $\mathcal H$ is finite; otherwise it is $\sharpwone$-complete~\cite{ChenTW08}.

\subsection{Counting Induced Subgraphs Satisfying a Fixed Property}

Rather than counting induced $H$-copies for fixed $H$, one may be interested in fixing a \emph{set} $\Phi$ of $k$-vertex graphs and counting induced subgraphs of $G$ that are isomorphic to some graph $H \in \Phi$.
We denote this value by $\numindsub{\Phi}{G}$.
The systematic study of such counting problems was initiated by Jerrum and Meeks \cite{JerrumM15,JerrumM15b,JerrumM17}:
Given a fixed graph property $\Phi$, they define the parameterized problem $\pindsub{\Phi}$ that asks, on input $(G,k)$, to determine the number $\numindsub{\slice{\Phi}{k}}{G}$ of induced $k$-vertex subgraphs of $G$ that satisfy $\Phi$.
Here, the \emph{slice} $\slice{\Phi}{k}$ denotes the restriction of $\Phi$ to graphs on exactly $k$ vertices, while $\Phi$ itself may contain graphs of arbitrary sizes.

It has been shown that $\pindsub{\Phi}$ is either fixed-parameter tractable or $\sharpwone$-hard, for every computable property $\Phi$~\cite{CurticapeanDM17}, with an explicit tractability criterion.
However, this criterion is indirect, meaning it is often highly nontrivial to determine whether a given property $\Phi$ meets it.
In particular, the criterion has not helped in identifying nontrivial tractable properties or ruling out their existence.
To elaborate, if all but finitely many slices $\slice{\Phi}{k}$ are trivial (we also say $k$-trivial) in the sense that they are empty or contain all $k$-vertex graphs, then $\pindsub{\Phi}$ is easy to compute.
Such properties $\Phi$ are called \emph{meager}; for all other properties, it was conjectured that $\pindsub{\Phi}$ is $\sharpwone$-hard~\cite{DorflerRSW22,FockeR24,RothSW24}; see also \cite{Roth26} for a recent survey.

It was shown very recently that this conjecture does not hold in the full generality stated \cite{CurticapeanDN25}.
Nevertheless, the conjecture has been verified for several very natural classes of properties; see~\cite[Section~1.1]{DoringMW24}.
We list some of these known results; whenever we say in the following that ``ETH implies an exponent of $\Omega(f(k))$'', we mean that ETH implies a constant $\alpha > 0$ such that $\numindsub{\slice{\Phi}{k}}{G}$ cannot be computed in time $O(n^{\alpha \cdot f(k)})$ on $n$-vertex graphs, for an infinite set of integers $k$. In the following, we assume $\Phi$ to be computable.

\begin{enumerate}
    \item If $\Phi$ is closed under deleting edges (i.e., edge-monotone) \cite{DoringMW24} or closed under deleting vertices (i.e., hereditary) \cite{FockeR24}, then $\pindsub{\Phi}$ is $\sharpwone$-hard. This strengthens an earlier result that required $\Phi$ to satisfy both closure properties \cite{RothSW24}.
    It has also been shown that ETH implies an exponent of $\Omega(\sqrt{\log k} / \log \log k)$ for edge-monotone properties, and an exponent of $\Omega(k)$ for hereditary properties.
    \item Given $k\in \NN$, let $\beta_\Phi(k)$ be the number of edge-counts avoided by $\Phi$ on $k$-vertex graphs.
    If there is an infinite subset of $\NN$ on which $\Phi$ is slice-wise nontrivial and $\beta_\Phi$ grows like $\omega(k)$, then $\pindsub{\Phi}$ is $\sharpwone$-hard~\cite{RothSW24}.
    Moreover, ETH implies an exponent of $\Omega(\beta_\Phi(k)/\sqrt{\log \beta_\Phi(k)})$.
    \item If $\Phi(H)$ depends only on $|E(H)|$, then the problem $\pindsub{\Phi}$ is $\sharpwone$-hard and ETH implies an exponent of $\Omega(k/\sqrt{\log k})$ \cite{RothSW24}.
    \item If there are infinitely many prime powers $t$ such that $\Phi$ holds for exactly one of the graphs $K_{t,t}$ (the complete bipartite graph) or $I_{2t}$ (the edge-less graph),
    then $\pindsub{\Phi}$ is $\sharpwone$-hard and ETH implies an exponent of $\Omega(k)$~\cite{DorflerRSW22}.
\end{enumerate}

In this paper, we strengthen most of the above hardness results.
However, we consider our main contribution to be the introduction of simple algebraic techniques to the study of $\pindsub{\Phi}$, which was previously mostly approached through sophisticated techniques from group theory. Indeed, we derive many of the above results as relatively straightforward consequences of known results such as the Schwartz-Zippel lemma.

\subsection{The Known Technique: Using the Sub-Expansion}

Following other recent works, we approach $\pindsub{\Phi}$ by interpreting its $k$-vertex slices $\slice{\Phi}{k}$ as linear combinations of (not necessarily induced) \emph{subgraph} counts.
Let us fix $k \in \NN$ and abbreviate $f(\star) = \numindsubstar{\slice{\Phi}{k}}$.
Then
\begin{equation}
    \label{eq:intro-indsub-lincomb}
    f(\star) = \underbrace{\sum_{H \in \slice{\Phi}{k}} \alpha_H \cdot \numindsubstar{H}}_{\text{ind-expansion of } f(\star)},
\end{equation}
with $H$ ranging over unlabeled $k$-vertex graphs, where we have $\alpha_H = 1$ for all $H \in \slice{\Phi}{k}$ and $\alpha_H = 0$ otherwise.
Note that $\slice{\Phi}{k}$ is finite; we could generally allow linear combinations involving arbitrary finite sets of graphs.

By known formulas~\cite[(5.17)]{Lovasz12}, there exists a set $\mathcal H$ of unlabeled $k$-vertex graphs, together with non-zero coefficients $\beta_H \in \ZZ$ for $H \in \mathcal H$, such that
\begin{equation}
    \label{eq:intro-sub-lincomb}
    f(\star) = \underbrace{\sum_{H \in \mathcal H}\beta_{H} \cdot \numsubstar{H}}_{\text{sub-expansion of } f(\star)}.
\end{equation}

We call the linear combinations on the right-hand side of \eqref{eq:intro-indsub-lincomb} and \eqref{eq:intro-sub-lincomb} the \emph{ind-expansion} and \emph{sub-expansion} of $f(\star)$, respectively.
These expansions are specified by the involved patterns and their (non-zero) coefficients.
Both expansions are unique for a given function $f$, as the involved functions $\numindsubstar{H}$ and $\numsubstar{H}$ are linearly independent for simple unlabeled graphs, see, e.g., \cite[Corollary~5.45]{Lovasz12}.

\paragraph*{Complexity Monotonicity.}
Linear combinations of $k$-vertex subgraph pattern counts as above enjoy a useful ``complexity monotonicity'': Under certain conditions, the presence of a single hard function $\numsubstar{H}$ in the linear combination can render the entire linear combination hard.
This phenomenon was already observed and exploited for \emph{homomorphism} counts~\cite{CurticapeanDM17}, for which it holds without further conditions.
For subgraph counts, a sufficient condition for such a ``complexity monotonicity'' is that all patterns have the same number $k$ of vertices, and moreover not only $\numsubstar{H}$ is hard, but even its \emph{colorful} variant is hard.
(In this variant, given a vertex-colored graph $G$, we count subgraph copies of $H$ that are colorful, i.e., which contain exactly one vertex of each color in $G$.
The colorful variant admits a parameterized reduction to the uncolored variant, but the converse does not hold, e.g., for matchings.)

We can thus understand the complexity of counting induced subgraphs with property $\Phi$ by answering two questions: First, for which $H$ is counting colorful $H$-subgraphs hard? Second, which patterns $H$ appear in the sub-expansion \eqref{eq:intro-sub-lincomb} derived from $\Phi$?
Similar approaches were taken in~\cite{DorflerRSW22,DoringMW24,RothSW24}.
We discuss these two questions in the following.

\paragraph{Which Subgraphs are Hard to Count?}
The complexity of counting subgraphs is well-understood in both the uncolored and colorful settings~\cite{CurticapeanM14,CurticapeanDM17}.
First, let us observe that counting $H$-subgraphs may be easier than counting induced $H$-subgraphs, e.g., when $H$ is edgeless.

In the \emph{uncolored} setting, it is known that the subgraph variant $\psub{\mathcal H}$ of the induced subgraph counting problem $\pindsub{\mathcal H}$ is polynomial-time solvable if the maximum vertex-cover number among graphs in $\mathcal H$ is finite;
otherwise it is $\sharpwone$-hard and almost tight lower bounds are known under ETH~\cite{CurticapeanM14,CurticapeanDM17}.

The \emph{colorful} variant admits additional tractable patterns: Standard techniques yield an $O(n^{\tw(H)+1})$ time algorithm for counting colorful $H$-subgraphs in vertex-colored $n$-vertex graphs, where $\tw(H)$ is the \emph{treewidth} of $H$, which measures how similar $H$ is to a tree for algorithmic purposes.
For recursively enumerable $\mathcal H$ of bounded treewidth,
the problem $\pcolsub{\mathcal H}$ of counting colorful $H$-subgraphs in $G$ is therefore polynomial-time solvable, where both $H\in \mathcal H$ and $G$ are inputs.
If the treewidth of $\mathcal H$ is unbounded, then the problem is $\sharpwone$-hard, see~\cite[Theorem~3.4]{CurticapeanM14} or~\cite[Theorem~5.6]{Curticapean15}.
Moreover, ETH implies a constant $\beta > 0$ such that for every graph $H$ with treewidth $t$, the number of colorful $H$-subgraph copies in $n$-vertex graphs cannot be computed in time $O(n^{\beta \cdot t / \log t})$~\cite{Marx10}.

In this paper, we rely on large edge-density of $H$ as a sufficient proxy condition to guarantee large treewidth of $H$ and thus hardness of counting colorful $H$-subgraph copies.
Indeed, it is well-known that graphs $H$ of treewidth $t$ have at most $t \cdot |V(H)|$ edges.
Moreover, the following lower bound under ETH was recently shown.

\begin{theorem}[{\cite[Theorem 1.3]{CurticapeanDNW25}}]
    \label{fact:density-makes-hard}
    Assuming ETH, there is a constant $\beta > 0$ such that, for every pattern $H$ with $k$ vertices and $\ell$ edges, colorful $H$-subgraph copies in $n$-vertex cannot be counted in $O(n^{\beta \cdot \ell/k})$ time.
\end{theorem}

A similar version of this theorem holds with regard to $\sharpwone$-hardness.
We stress that large edge-density is only a \emph{sufficient} criterion for hardness.
Indeed, expanders provide an example of patterns $H$ that have degree $3$ and treewidth $\Omega(|V(H)|)$.

\paragraph{Which Patterns Appear in the Sub-Expansion?}
Theorem~\ref{fact:density-makes-hard} with the complexity monotonicity of subgraph counts allow us to show hardness of $f(\star)=\numindsubstar{\slice{\Phi}{k}}$ by finding $k$-vertex patterns $H$ with $\omega(k)$ edges and non-zero coefficients in the sub-expansion~\eqref{eq:intro-sub-lincomb} of $f(\star)$.

As a first step, it can be shown that the coefficient $\beta_H$ of $H$ in the sub-expansion equals a certain combinatorial quantity, the \emph{alternating enumerator}.
We define this quantity as
\[
\altenum{\Phi}(H) = (-1)^{|E(H)|} \sum_{S \subseteq E(H)} (-1)^{|S|} \cdot \Phi(H[S]),
\]
where $H[S]$ is the subgraph obtained from $H$ by keeping all vertices and only the edges in $S$.
Our definition includes an extra factor of $(-1)^{|E(H)|}$ compared to previous works~\cite{DorflerRSW22,Roth19,RothS20}.

However, showing that $\altenum{\Phi}(H) \neq 0$ turns out to be nontrivial:
With some exceptions~\cite{FockeR24,RothSW24}, most approaches towards this task this can be traced back to a seminal paper~\cite{RothS20} that interprets edge-monotone properties $\Phi$ defined on $k$-vertex graphs as abstract simplicial complexes.
Up to sign, $\altenum{\Phi}(K_k)$ then is the (reduced) \emph{Euler characteristic} of the simplicial complex: Viewing $t$-edge graphs as dimension-$t$ facets of the complex corresponding to $\Phi$, we obtain that $\altenum{\Phi}(K_k)$ is the difference between the counts of odd-dimension and even-dimension facets, similar to Euler's formula for plane graphs.
By analyzing group actions on simplicial complexes that preserve the Euler characteristic modulo primes, it is then shown that $\altenum{\Phi}(K_k) \neq 0$ for several properties $\Phi$, e.g., for edge-monotone properties $\Phi$ that are false on all odd cycles.
In subsequent works~\cite{DorflerRSW22,DoringMW24,Roth0W21}, transitive group actions on graphs are used to derive more general results, also for related pattern counting problems, culminating in the recent result on edge-monotone properties~\cite{DoringMW24} mentioned above.

\subsection{Our Approach: Understanding Sub-Expansions via Polynomials}

To summarize the method described in the previous section, we have a graph parameter
\[
f(\star) = \sum_{H} \alpha_H \cdot \numindsubstar{H} = \sum_{H} \beta_H \cdot \numsubstar{H}
\] 
involving $k$-vertex graphs $H$ in the sums. Both representations are unique, and the coefficients $\beta_H$ are given by the $\alpha_H$ via alternating enumerators.
As outlined above, $f(\star)$ is hard if there is a graph $H$ with $\beta_H \neq 0$ and $\omega(k)$ edges.
Our goal is to find such dense graphs $H$.

To attain this goal, we propose the following new strategy: First, we generalize graph parameters $f(\star)$ of the above form to multilinear polynomials.
Related polynomials were already studied before in different contexts in algebraic complexity theory~\cite{BhargavCC025,DawarPS25,KomarathPR23}.
Then, we observe that the degree of the polynomial corresponding to $f$ is precisely the maximum edge-count $|E(H)|$ among graphs $H$ with $\beta_H \neq 0$.
Finally, we establish degree lower bounds on the polynomials using various algebraic techniques.

\paragraph{Polynomials for Graph Invariants.}
 Let $t \in \mathbb N$, let $K_t$ be the complete graph on $t$ vertices, and let $x_{uv}$ for $uv \in E(K_t)$ be indeterminates.
We define the \emph{subgraph polynomial}
\begin{equation}
    \label{eq:sub-polynomial}
    \subpoly{H,t} := \sum_{\substack{F \subseteq K_t \\F \cong H}} \, \prod_{uv \in E(F)} x_{uv},
\end{equation}
where $F$ ranges over all subgraphs of $K_t$ isomorphic to $H$.
This can be interpreted as a weighted $H$-subgraph count in $K_t$.
We also define the \emph{induced subgraph polynomial}
\begin{equation}
    \label{eq:indsub-polynomial}
    \indsubpoly{H,t} := \sum_{\substack{F \subseteq K_t \\F \cong H}} \, \prod_{uv \in E(F)} x_{uv} \prod_{uv \in \binom{V(F)}{2} \setminus E(F)} (1-x_{uv}).
\end{equation}

These polynomials generalize subgraph and induced subgraph counts: 
Given a graph $G$ with $V(G) = [t]$ for $t \in \mathbb N$, and writing $m = \binom{t}{2}$, let $\ba_G \in \{0,1\}^m$ be the characteristic vector of $E(G)$ as a subset of $\binom{[t]}{2}$. Then we have
\begin{align}
\label{eq:intro-fourier-subexpansion}
    \subpoly{H,t}(\ba_G) & = \numsub{H}{G}, \\
    \indsubpoly{H,t}(\ba_G) & = \numindsub{H}{G}.
\end{align}

Since multilinear polynomials are determined by their evaluations on $0$-$1$-inputs, the  polynomials $\subpoly{H,t}$ and $\indsubpoly{H,t}$ are the unique multilinear polynomials that agree, on $t$-vertex graphs, with subgraph counts from $H$ or induced subgraph counts from $H$.
In other words, not only do these  polynomials allow us to recover (induced) subgraph counts on $t$-vertex graphs at $0$-$1$-inputs, the polynomials are even \emph{determined} by these counts.

This holds more generally for linear combinations of such counts:
As in \eqref{eq:intro-sub-lincomb}, let $f(\star) = \sum_F \beta_F \cdot \numsubstar{F}$ be a linear combination of subgraph counts.
Then the linear combination of polynomials $q_{f,t} = \sum_F \beta_F \cdot \subpoly{F,t}$ is the unique polynomial that agrees with $f$ on $t$-vertex graphs. We call $q_{f,t}$ the \emph{polynomial representation}
of $f$ on $t$-vertex graphs.

\paragraph{Large Degree Implies Hardness.}
For all graphs $H$ and integers $t \geq |V(H)|$, the degree of the subgraph  polynomial $\subpoly{H,t}$ is $|E(H)|$.
This follows immediately from the monomial expansion \eqref{eq:sub-polynomial}.
More generally, for every linear combination $f(\star) = \sum_F \beta_F \cdot \numsubstar{F}$, the degree of the  polynomial $q_{f,t} = \sum_F \beta_F \cdot \subpoly{F,t}$ is at most the maximum of $|E(F)|$ over all graphs $F$ with $\beta_F \neq 0$. (In fact, provided that $t \geq k$, the degree is even \emph{equal} to this maximum. This follows from the linear independence of the polynomials, but it is not required in our arguments.)

This simple fact turns out to be very useful:
Given a graph parameter $f(\star)$ that is a finite linear combination of subgraph counts, if the degree of its  polynomial representation
is at least $d$ for some $t \in \NN$, then the sub-expansion of $f(\star)$ must contain a graph $F$ with at least $d$ edges.
In particular, if $d\in \omega(k)$, then the strategy described in the beginning of this subsection comes to fruition and we obtain the desired hardness results.

As an example, reasoning along this strategy shows that $f(\star) = \numindsubstar{H}$ is hard for \emph{every} $k$-vertex graph $H$, since its unique representation as a multilinear polynomial \eqref{eq:indsub-polynomial} clearly has degree $\binom{k}{2}$. This means that the sub-expansion of $\numindsubstar{H}$ contains a graph on $\binom{k}{2}$ edges, which is necessarily $K_k$.

Linear combinations of the functions $\numindsubstar{H}$ for different $H$ may however incur cancellations that make $K_k$ or even all dense graphs disappear from the sub-expansion.
We use various techniques to prove lower bounds on the degrees of polynomials $q_{f,t}$ associated with general graph parameters $f(\star)$, and thus lower-bound the number of edges among the graphs in their sub-expansion. These techniques include several elementary facts from algebra, e.g., a multilinear version of the Schwartz-Zippel lemma~\cite{WilliamsWWY15,BjorklundDH15}, and the rich toolkit of \emph{Fourier analysis}~\cite{Jukna12,ODonnell14}.

\subsection{Our Results}

As outlined above, we can study the complexity of linear combinations $f(\star)$ of $k$-vertex pattern subgraph counts by showing that their polynomial representations
have large degree.
In the following, we discuss concrete results on the complexity of $\pindsub{\Phi}$ obtained this way.

\paragraph{Sparse and Hereditary Properties.}
We first show hardness of properties that hold only with probability $\exp(-\Omega(k^2))$ on $k$-vertex graphs.
This implies known results on hereditary properties~\cite{FockeR24} as a special case.
Moreover, our result holds even for ``weighted properties'':
In the following, a \emph{graph invariant} $\Phi$ is an isomorphism-invariant function from graphs to a field $\FF$. 
We call $\Phi$ a \emph{$k$-vertex graph invariant} if it is supported only on $k$-vertex graphs and define
\[\numindsub{\Phi}{G} = \sum_F \Phi(F) \cdot \numindsub{F}{G}.\]

The Schwartz-Zippel lemma~\cite{DemilloL78,Schwartz80,Zippel79} can be used to show that every non-zero function $f:\{0,1\}^m \to \FF$ of small support necessarily has large degree:
If $1 \leq |\supp(f)| \leq 2^{m-\ell}$ for $\ell \in \NN$, then the degree of the polynomial representation of $f$ over $\FF$ is at least $\ell$.
Using this, we show that the subgraph basis representation of $\numindsub{\Phi}{\star}$ contains a graph $H$ with $\ell$ edges, rendering this problem hard if $\ell$ is large.
In particular, when $\ell \geq \delta \cdot k^2$ for a constant $\delta > 0$, we obtain a graph $H$ with $\delta \cdot k^2$ edges, which implies tight lower bounds under ETH.
Since the Schwartz-Zippel lemma works over arbitrary fields, our results also apply to modular counting.

\begin{theorem}
    \label{thm:intro-hardness-small}
    For every $0 < \varepsilon < 1$ there is $\delta > 0$ such that, for every $k$-vertex graph invariant $\Phi$ with
    $1 \leq |\supp(\Phi)| \leq (2 - \varepsilon)^{\binom{k}{2}}$,
    no algorithm computes $\numindsub{\Phi}{G}$ in time $O(n^{\delta \cdot k})$ on $n$-vertex graphs $G$ unless ETH fails. 

    Moreover, for every computable graph invariant $\Phi$ such that
    $1 \leq |\supp(\slice{\Phi}{k})| \leq (2 - \varepsilon)^{\binom{k}{2}}$
    holds for infinitely many slices $\slice{\Phi}{k}$, the problem $\pindsub{\Phi}$ is $\sharpwone$-hard.
    This holds in particular when $\Phi$ is a hereditary property that is not meager.

    For every fixed prime $p$, the lower bounds hold even when the value $\numindsub{\Phi}{G}$ is only to be computed modulo $p$, replacing ETH by the randomized ETH and $\sharpwone$ by $\modwone{p}$.
\end{theorem}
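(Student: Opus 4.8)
The plan is to reduce the statement to the \emph{uncertainty principle} of Fourier analysis by way of the polynomial dictionary from the introduction, and then to feed the dense pattern thus obtained into the known ETH-hardness of counting (not necessarily induced) subgraphs.

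\emph{Step 1: extracting a dense subgraph pattern.} Fix $k$ and regard the relevant $k$-vertex graph invariant as a function $f\colon\{-1,1\}^{m}\to\QQ$ with $m=\binom{k}{2}$, identifying a point of $\{-1,1\}^{m}$ with a labeled graph on vertex set $[k]$ and setting $f(G)=\numindsub{\Phi}{G}$ (or $\numindsub{\slice{\Phi}{k}}{G}$ in the parameterized setting). For a $k$-vertex graph $G$ and any $k$-vertex graph $F$ we have $\numindsub{F}{G}=1$ if $F\cong G$ and $0$ otherwise, so $f(G)=\Phi(G)$; thus $\supp(f)$ is exactly the set of $k$-vertex graphs on which $\Phi$ does not vanish, up to the immaterial $2^{o(k^{2})}$ discrepancy between labeled and unlabeled counts. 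The hypothesis therefore gives $1\le|\supp(f)|\le(2-\varepsilon)^{m}$, and the uncertainty principle — insensitive to $\Phi$ being rational- rather than $\{0,1\}$-valued — supplies $\delta=\delta(\varepsilon)>0$ with Fourier degree $\deg(f)\ge\delta m$. By the dictionary recalled in the introduction, $\deg(f)$ equals the degree of the multilinear polynomial representing $f$, which in turn equals the largest number of edges of a $k$-vertex graph appearing with nonzero coefficient in the subgraph-basis expansion $f(\star)=\sum_{H}\beta_{H}\,\numsub{H}{\star}$, where $H$ ranges over graphs on exactly $k$ vertices. Hence some pattern $H^{*}$ in this expansion satisfies $|E(H^{*})|\ge\delta\binom{k}{2}=\Omega(k^{2})$, so its vertex-cover number (and treewidth) is $\Omega(k)$.

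\emph{Step 2: transferring hardness.} Since all patterns in the expansion have $k$ vertices, complexity monotonicity shows that evaluating $f(\star)$ is at least as hard — under polynomial-time Turing reductions, and under parameterized reductions when $k$ is the parameter — as evaluating $\numsub{H^{*}}{\star}$. As $H^{*}$ is a $k$-vertex pattern of vertex-cover number $\Omega(k)$, the ETH lower bound for counting subgraph copies of dense patterns recalled in the introduction applies to $\numsub{H^{*}}{\star}$, and hence to $\numindsub{\Phi}{\star}$, giving the claimed bound $n^{\Omega(k/\sqrt{\log k})}$; renaming the constant yields the first assertion. For the $\sharpwone$-hardness of $\pindsub{\Phi}$ we run Step 1 for each of the infinitely many slices $\slice{\Phi}{k}$ with $1\le|\supp(\slice{\Phi}{k})|\le(2-\varepsilon)^{\binom{k}{2}}$; since $\Phi$ is computable, the coefficients $\beta_{H}$ and hence a witnessing pattern $H^{*}_{k}$ are computable, so $\CH\coloneqq\{H^{*}_{k}\}$ is a recursively enumerable class of unbounded vertex-cover number. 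Thus $\psub{\CH}$ is $\sharpwone$-hard, and complexity monotonicity provides a parameterized reduction from $\psub{\CH}$ to $\pindsub{\Phi}$.

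\emph{Step 3: the hereditary case, and where the difficulty lies.} If $\Phi$ is hereditary and non-meager, then taking induced subgraphs of graphs in arbitrarily large non-empty slices shows $\slice{\Phi}{j}\neq\emptyset$ for every $j$; fixing $k_{0}$ and a graph $G_{0}\notin\slice{\Phi}{k_{0}}$, every slice $\slice{\Phi}{j}$ with $j\ge k_{0}$ consists only of graphs with no induced copy of $G_{0}$, of which there are at most $2^{(1-1/c+o(1))\binom{j}{2}}$ labeled ones for a constant $c=c(G_{0})\ge 2$, by the classical bound on the speed of a hereditary graph property. Hence $1\le|\supp(\slice{\Phi}{j})|\le(2-\varepsilon)^{\binom{j}{2}}$ for all large $j$ with a fixed $\varepsilon=\varepsilon(G_{0})>0$, and Step 2 applies. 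The only step carrying genuine content is Step 1 — the observation that the induced-subgraph-to-subgraph basis change is a Fourier transform, so that a lower bound on Fourier degree exactly certifies the presence of a heavy, and therefore hard, surviving subgraph pattern. Everything downstream is a black-box combination of the uncertainty principle, complexity monotonicity, ETH-hardness of dense subgraph counting, and the structure theory of hereditary properties; in line with the paper's thesis, no substantial new obstacle appears.
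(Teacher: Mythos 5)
Your proposal is correct and follows essentially the same route as the paper: the uncertainty principle plus the binomial/entropy count yields Fourier degree $\geq \delta\binom{k}{2}$, the affine change of variables identifies this with a $k$-vertex pattern $H^*$ of $\Omega(k^2)$ edges and nonzero coefficient in the subgraph-basis expansion (the paper's Theorem~\ref{thm:alternating-enumerator-small} via Lemma~\ref{lem:alternating-enumerator-vs-fourier-coefficient}), and the conclusion follows from the black-boxed complexity monotonicity and dense-pattern hardness results (Theorems~\ref{thm:hardness-w} and~\ref{thm:hardness-eth}), with the hereditary case handled exactly as in the paper via the Pr\"omel--Steger speed bound. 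No gaps.
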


The claimed consequences for hereditary properties follow from the fact that, for fixed graphs $F$ and large enough $k$, every non-empty property excluding induced $F$-subgraphs contains at most $(2-\varepsilon)^{\binom{k}{2}}$ graphs in the $k$-th slice, cf.~\cite{PromelS92} or a self-contained proof in Appendix~\ref{sec:F-avoid}.

\paragraph{Edge-Monotone Properties.}
Let $\Phi$ be a nontrivial $k$-vertex graph property that is monotone, i.e., closed under the deletion of edges.\footnote{In the literature, this condition is often called ``edge-monotone'' (see e.g.,  \cite{DoringMW24,FockeR24,RothSW24}).}
By adapting known results on the Fourier degrees of monotone functions with symmetry conditions~\cite{DodisK99,RivestV76}, 
we show that the subgraph expansion of $\Phi$ contains a graph $H$ that in turn contains a complete bipartite graph $K_{\ell,\ell}$ for $\ell > k/4$ as a subgraph.
This gives us particularly strong bounds under ETH.
As before, our results remain valid when the subgraph expansion is reduced modulo a prime $p$; then we obtain $\ell > k / p^2$.

\begin{theorem}
    \label{thm:intro-hardness-monotone}
    There is $\delta > 0$ such that, for every $k$-vertex graph property that is monotone and not $k$-trivial,
    no algorithm computes $\numindsub{\Phi}{G}$ in time $O(n^{\delta \cdot k})$ on $n$-vertex graphs $G$ unless ETH fails.
    Moreover, for every computable graph property $\Phi$ that, for infinitely many $k$, is monotone on $k$-vertex graphs and not $k$-trivial, the problem $\pindsub{\Phi}$ is $\sharpwone$-hard.

    For every fixed prime $p$, the lower bounds hold even when the value $\numindsub{\Phi}{G}$ is only to be computed modulo $p$, replacing ETH by the randomized ETH and $\sharpwone$ by $\modwone{p}$.
\end{theorem}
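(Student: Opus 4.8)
The plan is to route the claim, via the subgraph basis and complexity monotonicity, to a purely Fourier-analytic statement about monotone graph properties, and then to prove that statement by restricting $\Phi$ to a biclique-shaped subcube on which a free $p$-group action forces a top Fourier coefficient to survive modulo $p$. For the reduction, note first that a $k$-vertex input graph $G$ has $\numindsub{\Phi}{G}=\Phi(G)$, so that $\numindsub{\Phi}{\star}$, restricted to $k$-vertex graphs, is $\Phi$ itself, viewed as a function $\{0,1\}^{\binom{k}{2}}\to\{0,1\}$; the coefficient of the monomial $\prod_{ij\in E(H)}x_{ij}$ in the multilinear polynomial representing $\Phi$ is exactly $\widehat{\Phi}(H)$, and since this polynomial and the Fourier expansion of $\Phi$ differ only by an affine change of variables, the Fourier toolkit applies verbatim. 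Rewriting $\numindsub{\Phi}{\star}$ over the subgraph basis as in~\eqref{eq:intro-sub-lincomb} puts the coefficient $\widehat{\Phi}(H)$ on each $k$-vertex pattern $H$, so by complexity monotonicity computing $\numindsub{\Phi}{G}$ is at least as hard as computing $\numsubstar{H}$ for any $H$ with $\widehat{\Phi}(H)\neq 0$. By the known lower bounds for counting subgraph copies of dense patterns~\cite{CurticapeanDM17}, $\numsubstar{H}$ admits no $n^{o(\ell)}$-time algorithm under \ETH{} once $H$ contains a biclique $K_{\ell,\ell}$, and $\pindsub{\Phi}$ becomes $\sharpwone$-hard once $\ell$ is unbounded. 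Hence it suffices to show: for every non-$k$-trivial monotone property $\Phi$ on $k$ vertices, some $H$ with $\widehat{\Phi}(H)\neq 0$ (resp.\ $\widehat{\Phi}(H)\not\equiv 0\bmod p$) contains $K_{\ell,\ell}$ with $\ell>k/4$ (resp.\ $\ell>k/p^{2}$).

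To produce a nonzero (mod-$p$) value $\widehat{\Phi}(H)$ with $E(H)$ containing a prescribed biclique $E(K_{A,B})$, it is enough to exhibit a $(\Sym(A)\times\Sym(B))$-invariant restriction $\rho$ that fixes all edges outside $E(K_{A,B})$ to $0$ or $1$ and leaves $\Phi|_{\rho}\colon\{0,1\}^{E(K_{A,B})}\to\{0,1\}$ nonconstant: unfolding $\rho$ expresses the top coefficient of $\Phi|_{\rho}$ as a $\{0,1\}$-combination of the integers $\widehat{\Phi}(H)$ with $E(H)\supseteq E(K_{A,B})$, so if that top coefficient is nonzero mod $p$ then so is one of these. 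We choose $|A|=|B|=\ell$ to be the largest power of $p$ with $2\ell\le k$, so that $\ell>k/(2p)\ge k/p^{2}$, and $\ell>k/4$ when $p=2$. Then $\ZZ_{\ell}\times\ZZ_{\ell}$ acts on $E(K_{A,B})\cong\ZZ_{\ell}\times\ZZ_{\ell}$ by translation; this action is free and the group is a $p$-group, so by the elementary orbit-counting congruence $\tilde{\chi}(\Delta)\equiv\tilde{\chi}(\Delta^{\Gamma})\pmod p$ for $p$-group actions $\Gamma$ on finite complexes $\Delta$, the reduced Euler characteristic of the simplicial complex of the monotone function $\Phi|_{\rho}$ agrees mod $p$ with that of its subcomplex of invariant faces. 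Since the action is transitive on coordinates and $\Phi|_{\rho}$ is nonconstant and monotone, the only invariant face is the empty one, whose reduced Euler characteristic is $-1$; hence the top coefficient of $\Phi|_{\rho}$ is $\pm 1 \pmod p$, and we get the desired $H$ with $\widehat{\Phi}(H)\not\equiv 0\bmod p$. This is the adaptation of the symmetry-based degree bounds of~\cite{DodisK99,RivestV76} promised in the introduction; running the argument with $p=2$ already yields a coefficient $\widehat{\Phi}(H)\neq 0$ over $\ZZ$ with $\ell>k/4$, giving the non-modular statement and exact counting.

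It remains to exhibit a suitable restriction. As $\Phi$ is non-$k$-trivial and monotone we have $I_{k}\in\Phi$ and $K_{k}\notin\Phi$, and fixing all non-biclique edges to ``absent'' already works whenever $K_{\ell,\ell}\notin\Phi$ (then $\Phi|_{\rho}$ is the restriction of $\Phi$ to subgraphs of $K_{\ell,\ell}$, which is monotone and nonconstant because it separates $I_{k}$ from $K_{\ell,\ell}$). When $\Phi$ instead accepts all bicliques of the relevant size, one varies the invariant setting of the remaining edges --- all-present inside a side, all-present towards the extra vertices, and so on --- and uses monotonicity to force nonconstancy for at least one such setting, while a handful of degenerate properties, e.g.\ those whose complex is a single point or the boundary of a simplex such as $\Phi=\{I_{k}\}$ or $\Phi=\{G:G\not\cong K_{k}\}$, are settled by evaluating $\widehat{\Phi}$ directly. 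I expect this case analysis --- guaranteeing a nonconstant biclique-restriction while keeping the biclique side linear in $k$ --- to be the main obstacle; the rest is bookkeeping. Finally, the $\sharpwone$-hardness and the statement for $\pindsub{\Phi}$ follow from the per-$k$ bounds together with the standard parameterized reductions for counting subgraphs from classes of unbounded vertex-cover number, and the mod-$p$ variant is the same argument run through the modular versions of complexity monotonicity and of the subgraph-counting lower bounds, for which the randomized \ETH{} is what is required.
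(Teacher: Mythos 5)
Your overall architecture matches the paper's: pass to the subgraph basis, invoke complexity monotonicity and the dense-pattern/biclique lower bounds as black boxes, and reduce everything to showing that some $H$ with $\widehat{\Phi}(H)\not\equiv 0 \bmod p$ contains $K_{\ell,\ell}$ for $\ell > k/p^2$. Your degree lemma is also sound and is essentially the paper's: a transitive action of a $p$-group on a ground set of prime-power size forces every orbit of proper nonempty subsets to have size divisible by $p$, so the top coefficient of a restriction that is invariant, monotone and nonconstant survives modulo $p$; your free $\ZZ_\ell\times\ZZ_\ell$ translation action is an instance of this, and the unfolding of the restriction into a $\{0,1\}$-combination of alternating enumerators with edge sets containing the biclique is exactly the paper's restriction identity for Fourier coefficients.

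The genuine gap is the step you yourself flag as "the main obstacle": exhibiting, for \emph{every} monotone non-$k$-trivial $\Phi$, an invariant setting of the edges outside $E(K_{A,B})$ under which the restriction is nonconstant. Your easy case ($K_{\ell,\ell}$ plus isolated vertices not in $\Phi$, all outside edges absent) is fine, but the complementary case is where all the work lies, and "vary the invariant setting --- all-present inside a side, all-present towards the extra vertices, and so on" is not a proof; it is not even clear that the handful of settings you list suffices, and your choice of the cyclic translation group imposes an extra constraint you do not address, namely that the fixed part must itself be invariant under the shifts of $A$ and $B$ (so, e.g., the graph induced on $A$ must be a circulant), which further limits the available settings. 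The paper resolves this with a two-stage threshold construction: first a nested chain $F_0\subseteq\cdots\subseteq F_\ell$ of disjoint unions of $p^i$-cliques locates a threshold $i^*$ with $\Phi(F_{i^*})=1$ and $\Phi(F_{i^*+1})=0$; then a second chain $H_0\subseteq\cdots\subseteq H_p$ interpolates between $F_{i^*}$ and a supergraph of $F_{i^*+1}$ by adding complete bicliques between the $p$ blocks of size $p^{\ell-1}$ one at a time, and the threshold $j^*$ in this chain pins down both the biclique $V\times W$ to restrict to and the fixed part $H^*=H_{j^*}$, whose automorphism group is transitive on $V\times W$ by construction and whose all-absent/all-present endpoints provably differ. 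A separate induction on $k-p^\ell$ (deleting a universal or isolated vertex, or a direct $K_{r+s}+I_s$ construction) handles $k$ that are not prime powers. Without an argument of this kind --- or some other device that simultaneously guarantees invariance, nonconstancy, and a biclique side of order $k/p^2$ --- the proof is incomplete at its central point.
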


\paragraph{Fully Symmetric Properties.}
For \emph{fully symmetric} properties, i.e., properties that depend only on the number of edges, we obtain strong hardness by straightforward applications of known Fourier analysis results for fully symmetric functions.
Special cases of such properties were studied before~\cite{JerrumM15,JerrumM17,RothS20}, and it has been shown that non-trivial and fully symmetric graph properties $\Phi$ induce hard problems $\pindsub{\Phi}$~\cite{RothSW24}.
We obtain tight lower bounds under ETH with an arguably simpler proof.

\begin{theorem}
    \label{thm:intro-hardness-symmetric}
    There is $\delta > 0$ such that, for every $k$-vertex graph property $\Phi$ that is fully symmetric and not $k$-trivial, no algorithm computes $\numindsub{\Phi}{G}$ in time $O(n^{\delta \cdot k})$ on $n$-vertex graphs $G$ unless ETH fails.

    Moreover, for every computable graph property $\Phi$ that is fully symmetric on $k$-vertex graphs and not $k$-trivial for infinitely many $k$, the problem $\pindsub{\Phi}$ is $\sharpwone$-hard.
\end{theorem}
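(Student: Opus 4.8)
The plan is to reduce \cref{thm:intro-hardness-symmetric} to the polynomial/Fourier-degree framework set up in the introduction, exactly as for the other hardness theorems. Given a fully symmetric $k$-vertex property $\Phi$, the quantity $\numindsubstar{\Phi}$ is a linear combination $\sum_H \alpha_H \numsubstar{H}$ over $k$-vertex patterns $H$, with $\alpha_H = \widehat{\Phi}(H)$, and by complexity monotonicity its evaluation is as hard as that of $\numsubstar{H}$ for the pattern $H$ of largest edge count $|E(H)|$ with $\widehat\Phi(H)\neq 0$. Since a $k$-vertex graph with $\delta' k^2$ edges has vertex-cover number $\Omega(\delta' k)$, the known lower bound $O(n^{\beta t/\log t})$ for counting subgraphs with vertex-cover number $t$ gives running-time lower bound $n^{\Omega(k)}$ (the $\log$ factor is absorbed since $t=\Theta(k)$), and $\sharpwone$-hardness for the parameterized version. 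So the entire theorem follows once we show: there is a constant $c>0$ such that, for every non-$k$-trivial fully symmetric $\Phi$ on $k$ vertices, the multilinear polynomial representing $\numindsubstar{\Phi}$, equivalently the Fourier expansion of the Boolean function $f_\Phi\colon\{-1,1\}^{\binom k2}\to\RR$ it induces, has degree at least $c\binom k2$.

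Next I would set up the Fourier-analytic reduction. Because $\Phi$ depends only on the edge count, the induced-subgraph count $\numindsubstar{\Phi}$, viewed as $f_\Phi\colon\{-1,1\}^{m}\to\RR$ with $m=\binom k2$, is itself \emph{symmetric} under the full symmetric group $\Sym(m)$ acting on coordinates: flipping which pairs of the $k$ fixed vertices carry edges does not change the count of induced $k$-vertex subgraphs with a given number of edges. (One should double-check this: on a $k$-vertex host, the induced $k$-vertex subgraph is the whole host, so $\numindsub{\slice\Phi k}{G}$ is literally $1$ if $G$ has an edge count in the allowed set and $0$ otherwise — so $f_\Phi$ is precisely the indicator of a union of Hamming levels, and in particular symmetric and Boolean-valued. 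For larger hosts one uses complexity monotonicity and it suffices to analyze the $t=k$ polynomial anyway.) Thus $f_\Phi$ is a \emph{fully symmetric Boolean function} on $m$ bits that is non-constant, and the task reduces to the textbook fact that a non-constant symmetric Boolean function on $m$ variables has Fourier degree $\Omega(m)$ — in fact at least $m-O(m^{0.525})$ by the relationship between degree and sensitivity / the result of von zur Gathen and Roche on the degree of symmetric Boolean functions, but even the crude bound $\deg(f)\ge m/3$ or any linear bound suffices here. I would cite \cite{ODonnell14} for this.

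The key steps, in order: (1) Observe $\numindsubstar{\Phi}=\sum_H\widehat\Phi(H)\numsubstar H$ over $k$-vertex $H$, and invoke complexity monotonicity so hardness follows from a dense pattern surviving; equivalently, reduce to lower-bounding $\deg f_\Phi$. (2) Show $f_\Phi$ is fully symmetric and non-constant: it is the indicator that the edge count of the input (a graph on the $k$ labelled vertices) lies in the prescribed set $S\subseteq\{0,\dots,m\}$, which is a proper non-empty subset precisely because $\Phi$ is non-$k$-trivial. (3) Apply the symmetric-function degree bound to get $\deg f_\Phi\ge c\,m=\Omega(k^2)$. (4) Translate: the subgraph-basis representation contains a $k$-vertex graph $H$ with $\Omega(k^2)$ edges, hence vertex-cover number $t=\Omega(k)$; plug into the $n^{\Omega(t/\log t)}=n^{\Omega(k)}$ ETH lower bound of \cite{CurticapeanDM17} and into its $\sharpwone$-hardness counterpart. (5) For the parameterized statement, note that infinitely many non-$k$-trivial symmetric slices give infinitely many $k$ for which $\pindsub{\Phi}$ embeds the hard subgraph-counting problem, yielding $\sharpwone$-hardness.

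The main obstacle, such as it is, is a bookkeeping one rather than a conceptual one: making sure the translation between ``Fourier degree $\Omega(m)$'' and ``a pattern of vertex-cover number $\Omega(k)$ survives'' is tight enough that the $\log t$ loss in the subgraph-counting lower bound is harmless. Since $t=\Theta(k)$ here (a $k$-vertex graph with $\Theta(k^2)$ edges necessarily has linear vertex-cover number, as an independent set of size $k-t$ can host no edges, forcing $\binom{k-t}{2}$ non-edges), the exponent $t/\log t$ is $\Theta(k/\log k)$ — which is \emph{not} linear. To get the claimed tight $n^{\Omega(k)}$ bound one must instead use the stronger per-pattern lower bound: counting subgraphs from a pattern with $\Theta(k^2)$ edges requires $n^{\Omega(k)}$ under ETH because such patterns contain, as a (topological) minor or via the grid-like structure, a witness of treewidth $\Omega(k)$; concretely, the paper's own remark that ``$k$-vertex pattern graphs with at least $tk$ edges require time $O(n^{\beta t})$'' (with $t$ a free integer, so choose $t=\Theta(k)$) already gives the linear exponent directly. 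I would use exactly that phrasing from the introduction, so the obstacle dissolves. The rest is routine.
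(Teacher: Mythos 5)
Your overall framework (induced-to-subgraph basis change, alternating enumerators as coefficients, complexity monotonicity, symmetry of $f_\Phi$, von zur Gathen--Roche) matches the paper's, and your derivation of \sharpwone-hardness is sound: a surviving pattern with $\Omega(k^2)$ edges has unbounded treewidth, so Theorem~\ref{thm:hardness-w}\ref{item:hardness-w-1} applies. But there is a genuine gap in the quantitative part, and you half-noticed it before dismissing it incorrectly. Knowing only that some $H$ with $\widehat{\Phi}(H)\neq 0$ has $\Omega(k^2)$ edges does \emph{not} give an $n^{\Omega(k)}$ lower bound. The formal edge-count-based bound (Theorem~\ref{thm:hardness-eth}\ref{item:hardness-eth-1}) yields only $n^{\Omega(k/\sqrt{\log k})}$, and this loss is intrinsic to that route: the reduction extracts a clique minor via Kostochka's theorem, and a $k$-vertex graph with $\Theta(k^2)$ edges is in general only guaranteed a clique minor of order $\Theta(k/\sqrt{\log k})$ (random graphs are tight). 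The sentence you quote from the introduction (``at least $tk$ edges requires time $O(n^{\beta t})$'') is an informal overstatement not backed by any formal theorem in the paper; relying on it does not make the obstacle dissolve. With your argument you would recover only the $n^{o(k/\sqrt{\log k})}$ bound already known from Roth--Schmitt--Wellnitz, whereas the tight $n^{\Omega(k)}$ bound is exactly the new content of this theorem.

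The missing idea is to control the \emph{location} of the high-degree Fourier monomial, not just its size. The paper fixes a set $J\subseteq E(K_k)$ with $|J|=p-1$ for a suitable prime $p$ and $E(K_\ell)\subseteq J$ for $\ell\geq k/2$, restricts $f_\Phi$ to the subcube on $J$ (showing some restriction is still symmetric and non-constant --- this itself needs a short argument, since individual restrictions can be constant), and applies the von zur Gathen--Roche bound to get \emph{full} degree $|J|$ on that subcube, i.e., $\widehat{f^*}(J)\neq 0$. Lifting via Equation~\eqref{eq:fourier-restriction} and taking a maximal superset, one obtains $\widehat{\Phi}(H)\neq 0$ for an $H$ containing $K_\ell$ as a subgraph, and hence $K_{\lfloor \ell/2\rfloor,\lfloor \ell/2\rfloor}$; this is what unlocks Theorem~\ref{thm:hardness-eth}\ref{item:hardness-eth-2}, whose explicit biclique hypothesis avoids Kostochka and gives the clean $n^{\Omega(k)}$ bound. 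Your proof as written establishes the second claim of the theorem and a weakened version of the first, but not the theorem as stated.
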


\paragraph{Properties Avoiding many Hamming Weights.}
We also consider properties $\Phi$ that avoid a superlinear number of Hamming weights, see~\cite{JerrumM15,RothSW24}.
Given a $k$-vertex graph invariant $\Phi$, we define its \emph{Hamming weight set} as
$\hw(\Phi) \coloneqq \{|E(H)| \mid H \in \Phi\}$,
i.e., the set of all edge-counts among the graphs in $\Phi$.
We then define a value $\beta_\Phi$ that equals $0$ when $\Phi$ is empty and $\binom{k}{2}+1-|\hw(\Phi)|$ otherwise.
For a general graph invariant $\Phi$ and $k\in\NN$, we define $\beta_\Phi(k) \coloneqq \beta_{\slice{\Phi}{k}}$.
By a symmetrization argument previously used in the study of Boolean functions~\cite{NisanS94}, we obtain:

\begin{theorem}
    \label{thm:intro-hardness-hw}
    There is $\delta > 0$ such that, for every $k$-vertex graph invariant $\Phi$,
    no algorithm computes $\numindsub{\Phi}{G}$ in time $O(n^{\delta \cdot \beta_\Phi/k})$ on $n$-vertex graphs $G$ unless ETH fails.
    Moreover, $\pindsub{\Phi}$ is $\sharpwone$-hard for computable graph invariants $\Phi$ with $\beta_\Phi(k) \in \omega(k)$.
\end{theorem}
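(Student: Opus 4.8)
The plan is to reduce the statement to a lower bound on the \emph{polynomial degree} of $\Phi$, viewed as a rational-valued function on the $\binom{k}{2}$ edge-indicators of $k$-vertex graphs, and then to establish this degree bound by a Nisan--Szegedy-style symmetrization argument.

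\emph{Reduction to a degree bound.} Since $\numindsub{F}{G}$ equals $1$ if $F\cong G$ and $0$ otherwise whenever $F$ and $G$ both have $k$ vertices, we get $\numindsub{\Phi}{G}=\Phi(G)$ for every $k$-vertex graph $G$; so, interpreted as a function $\{0,1\}^{\binom k2}\to\QQ$, the invariant $\numindsub{\Phi}{\star}$ \emph{is} $\Phi$. Writing it in the subgraph-count basis, $\numindsub{\Phi}{X_k}=\sum_{H}\widehat{\Phi}(H)\cdot\numsub{H}{X_k}$ with $H$ ranging over $k$-vertex graphs, this is a multilinear polynomial agreeing with $\Phi$ on $0$-$1$ inputs, hence the unique multilinear representation of $\Phi$; as $\numsub{H}{X_k}$ is homogeneous of degree $|E(H)|$, distinct degrees cannot cancel and therefore $\deg(\Phi)=\max\{\,|E(H)|:\widehat{\Phi}(H)\ne 0\,\}$. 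Thus some $k$-vertex pattern $H$ with $|E(H)|=\deg(\Phi)$, hence vertex-cover number at least $\deg(\Phi)/k$, survives in the expansion; by complexity monotonicity, computing $\numindsub{\Phi}{\star}$ is at least as hard as computing $\numsub{H}{\star}$, and the known hardness of counting dense subgraph patterns~\cite{CurticapeanDM17} translates a bound $\deg(\Phi)\geq\Omega(dk/\sqrt{\log d})$ into the claimed running-time lower bound $n^{\delta\cdot d/\sqrt{\log d}}$; similarly, $\deg(\Phi)\to\infty$ along infinitely many $k$, together with the $\sharpwone$-hardness of counting subgraphs from recursively enumerable classes of unbounded vertex-cover number~\cite{Curticapean15,CurticapeanDM17}, yields $\sharpwone$-hardness of $\pindsub{\Phi}$. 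It remains to prove the degree bound.

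\emph{The degree bound, clean case.} The hypothesis $\beta_\Phi\geq dk$ means $\Phi$ vanishes on every $k$-vertex graph whose number of edges lies in an ``avoided'' set $A\subseteq\{0,\dots,\binom k2\}$ with $|A|=\beta_\Phi+1\geq dk$. Symmetrization à la~\cite{NisanS94}: the univariate function $\bar\Phi(\ell):=\mathbb{E}_{|x|=\ell}[\Phi(x)]$, the average of $\Phi$ over all labeled $k$-vertex graphs with $\ell$ edges, is a polynomial of degree at most $\deg(\Phi)$ vanishing on all of $A$; hence if $\bar\Phi\not\equiv 0$ then $\deg(\Phi)\geq|A|-1\geq dk-1$, which already exceeds what is needed (and would in fact give the stronger bound $n^{\Omega(d)}$).

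\emph{The obstacle: mean-zero invariants with arbitrary weights.} The real difficulty is the case $\bar\Phi\equiv 0$, i.e.\ $\Phi$ has mean zero on every Hamming slice; this genuinely occurs for weighted invariants (for instance, a centered invariant such as $H\mapsto \#\{\text{triangles of }H\}-c$, or any nonzero weighting of a single edge-slice that is orthogonal to the constant function). Here the plan is to first restrict $\Phi$ to a structured subcube whose symmetry group is large and whose free slots can witness a large degree: fix the non-edges so that the $\Omega(k^2)$ free edge-slots form a complete bipartite block $K_{a,a}$ on $\sim k/2$ versus $\sim k/2$ vertices — combined, when $\Phi$ avoids large edge-counts, with fixed cliques on the two sides so that the total edge-count still ranges over an interval of length $\Omega(k^2)$ — and then symmetrize \emph{within} that subcube. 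Restricting to a bipartite/split-graph gadget is precisely what defeats the cancellation above (bipartite graphs do not see triangles, etc.), so the restricted symmetrization need not vanish; if it still does, one iterates, choosing the gadget adaptively relative to a slice on which $\Phi$ is nonzero. Because $\Phi$ vanishes on full Hamming slices, the restricted function still vanishes on $\Omega(dk)$ of its own total-weight levels, so the root-counting argument reapplies. I expect the two delicate points to be: (i) guaranteeing that some such restriction survives as a \emph{non-trivial} (sub-)symmetric function — this is where being a graph invariant, rather than an arbitrary low-degree function, is used — and (ii) the quantitative bookkeeping: passing through the robust, Markov-inequality-flavoured versions of the symmetrization argument needed for rational-weighted rather than $0$-$1$-valued functions is exactly what costs the $\sqrt{\log d}$ factor in the exponent, in the same spirit as the square-root loss in~\cite{NisanS94}. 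This last step is the main obstacle.
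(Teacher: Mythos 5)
Your framework is the right one and matches the paper's: reduce to showing that some $k$-vertex graph $H$ with $\widehat{\Phi}(H)\neq 0$ has $\Omega(dk)$ edges, then invoke complexity monotonicity and the hardness of counting dense subgraph patterns. Your ``clean case'' is also essentially the paper's argument for \emph{nonnegative} invariants: symmetrize $q_\Phi$ over all $\binom{k}{2}!$ permutations of the edge variables, compress to a univariate polynomial \`a la Minsky--Papert/Nisan--Szegedy, and count roots to get degree at least $\beta_\Phi$. You also correctly identify the real obstacle: for signed invariants the symmetrization can vanish identically on every Hamming slice, and nonnegativity is exactly what rules out cancellation.

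However, your resolution of that obstacle is a genuine gap. The adaptive subcube-restriction-plus-iterated-symmetrization plan is not carried out, and you concede it is ``the main obstacle''; as written there is no argument that some restriction survives with a nonvanishing symmetrization, nor that the surviving restriction still certifies degree $\Omega(dk)$ rather than something much smaller. The paper sidesteps all of this with a short trick you are missing: pass to the pointwise square $\Psi := \Phi\cdot\Phi$, which is nonnegative and satisfies $\hw(\Psi)=\hw(\Phi)$, so the clean case yields a graph $H'$ with $\widehat{\Psi}(H')\neq 0$ and $|E(H')|\geq\beta_\Phi$. Then the convolution identity
\[
\widehat{\Psi}(H') \;=\; \sum_{\substack{S_1,S_2\subseteq E(H')\colon\\ S_1\cup S_2=E(H')}} \widehat{\Phi}(H'[S_1])\cdot\widehat{\Phi}(H'[S_2])
\]
(which comes from multiplying the two multilinear representations and reducing exponents) forces some cover $S_1\cup S_2=E(H')$ with both $\widehat{\Phi}(H'[S_i])\neq 0$, and one of the $S_i$ has at least $\beta_\Phi/2$ edges. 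This loses only a factor $2$ and requires no case analysis on whether the symmetrization vanishes. A secondary error: you attribute the $\sqrt{\log d}$ loss to a ``robust, Markov-inequality-flavoured'' symmetrization for rational weights. In the paper the degree bound itself is exact (no such robust version is needed); the $\sqrt{\log d}$ factor arises entirely in the hardness template, from extracting a clique minor of order $\Omega(\ell/\sqrt{\log\ell})$ out of a $k$-vertex pattern with $k\ell$ edges (Kostochka's theorem), not from the Fourier/degree side of the argument.
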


\paragraph{Invariants with Small Image.}
Next, we consider a useful tool for $k$-vertex graph invariants $\Phi$ with a sufficiently small image.
We define $\iota_\Phi \coloneqq |\im(\Phi)| - 1$, where $\im(\Phi)$ denotes the set of all $a \in \FF$
attained as $\Phi(G)$ for some graph $G$.
Given an arbitrary mapping $\tau\colon \im(\Phi) \to \FF$, we can ``rename'' the images of $\Phi$ using $\tau$ to obtain a new invariant $\Phi \circ \tau$ defined via $(\Phi \circ \tau)(G) \coloneqq \tau(\Phi(G))$.

Suppose we can choose $\tau$ in such a way that the polynomial representation
of $\Phi \circ \tau$ has large degree $d$.
Since $\tau$ can be represented by a (univariate) polynomial of degree at most $\iota_\Phi$ using standard interpolation arguments, we can conclude that the polynomial representation
of $\Phi$ has degree at least $d/\iota_\Phi$.
In other words, if $\iota_\Phi$ is small and $\Phi \circ \tau$ has large degree $d$, then the large degree $d$ must stem from $\Phi$ itself, not from the composition by $\tau$.
In particular, if $\iota_\Phi$ is sufficiently small (compared to $d$), this implies that $\pindsub{\Phi}$ is $\sharpwone$-hard.

We apply this method to show $\sharpwone$-hardness for monotone graph invariants $\Phi$ whose images have sublinear size, i.e., $|\im(\slice{\Phi}{k})| = o(k)$.
This strengthens a result from \cite{DoringMW25}, which obtains hardness only if $|\im(\slice{\Phi}{k})| \leq (1 - \varepsilon) \cdot \sqrt{k}$.

\paragraph{The All-Even Property.}
Finally, we consider the curious property $\alleven$ containing exactly those graphs whose vertices all have even degree.
This property turns out to be interesting for several reasons. Firstly, it encodes a vector space over $\ZZ_2$, the cycle space of $K_k$, and we can show that precisely the bipartite graphs on $k$ vertices appear in its subgraph expansion.
This implies that $\pindsub{\alleven}$ is $\sharpwone$-hard, with tight lower bounds under ETH.

\begin{theorem}
    \label{thm:intro-hardness-alleven-small}
    There is $\delta > 0$ such that no algorithm computes $\numindsub{\alleven}{G}$ in time $O(n^{\delta \cdot k})$ on $n$-vertex graphs $G$ unless ETH fails.
    Moreover, $\pindsub{\alleven}$ is $\sharpwone$-hard.
\end{theorem}

Secondly, the fact that only bipartite graphs appear in the sub-expansion of $\alleven$ also yields non-trivial \emph{upper bounds} on its complexity.
This in turn also implies \emph{lower bounds} for other properties, as we can take pointwise products with $\alleven$ to filter out unwanted graphs from a property $\Phi$ and translate degree lower bounds and complexity implications for the filtered version back to $\Phi$; see Section~\ref{sec:even} for details.

\subsubsection*{Weisfeiler-Leman Dimension}

Our techniques allow us to analyze the Weisfeiler-Leman dimension of graph parameters.
The $k$-dimensional Weisfeiler-Leman algorithm ($k$-WL) is a standard heuristic for testing isomorphism of graphs with connections to various other areas~\cite{Grohe17,GroheN21,Kiefer20,MorrisLMRKGFB22}.
The \emph{Weisfeiler-Leman dimension} of a graph invariant $f$ is the minimal $k \geq 1$ such that $k$-WL distinguishes between all graphs $G,G'$ for which $f(G) \neq f(G')$.
This complexity measure captures inexpressibility in certain logics and lower bounds for symmetric computational models.

We bound the WL dimension of $\numindsubstar{\Phi}$ for various $k$-vertex graph invariants $\Phi$:
For example, for monotone $k$-vertex graph properties $\Phi$, the WL dimension of $\numindsubstar{\Phi}$ is at least $k/4$, and the WL dimension of $\numindsubstar{\slice{\alleven}{k}}$ is equal to $\left\lfloor k / 2 \right\rfloor$.

\subsection{Related Results}

Theorem~\ref{thm:intro-hardness-monotone} was independently proved by D{\"{o}}ring, Marx and Wellnitz \cite{DoringMW25}.
Their hardness proof also shows that, for monotone graph properties $\Phi$, we have $\altenum{\Phi}(H) \neq 0$ for suitable graphs $H$ containing a large bi-clique.
While we use polynomial representations, D{\"{o}}ring et al.\ use group-theoretic arguments to analyze the alternating enumerator, extending arguments from \cite{DoringMW24}.
As indicated above, \cite{DoringMW25} even obtains hardness for monotone graph invariants $\Phi$ assuming the image of $\Phi$ is sufficiently small.
Following the first publication of \cite{DoringMW25}, we realized that our methods can be used to obtain even stronger results for monotone invariants with small images.
We added those results to the paper (see Section \ref{sec:hardness-small}); we stress that the results were not present in the first version of this work which was published independently from \cite{DoringMW25}.

We also point out that \cite{CurticapeanDNW25}, which proves Theorem \ref{fact:density-makes-hard}, only appeared after the first publication of this work. 
Building on \cite{CurticapeanDNW25}, we obtain tight lower bounds under ETH for Theorems~\ref{thm:intro-hardness-small} and \ref{thm:intro-hardness-hw}.
Earlier versions of this work only established a lower bound of $\Omega(k / \sqrt{\log k})$ on the running time exponent.
In particular, the tight lower bounds in these theorems were obtained in \cite{CurticapeanDNW25} and should, at least in part, be attributed to that work.
In fact, these tight lower bounds were the main impetus for that paper.

\subsection{Organization of the Paper}

In Section~\ref{sec:prelim}, we give formal definitions of the graph-theoretic concepts used, brief introductions to parameterized and exponential-time complexity, and a synopsis of basic concepts from Boolean function analysis.
We describe a general framework for proving hardness of counting induced subgraphs based on polynomial representations in Section~\ref{sec:hardness-templates}.
Based on this foundation, we then prove our main hardness results in Section~\ref{sec:hardness-fourier} and the results on the all-even property in Section~\ref{sec:even}.
Finally, we adapt the proofs to obtain lower bounds for the Weisfeiler-Leman dimension of induced subgraph counts in Section~\ref{sec:wl}.

\section{Preliminaries}
\label{sec:prelim}

We write $\ZZ$ for the integers and $\ZZ_p = \{0,1,\dots,p-1\}$ for the ring with addition and multiplication taken modulo $p$.
For an integer $n \geq 1$, we define $[n] \coloneqq \{1,\dots,n\}$.
For a set $X$ and an integer $k \geq 0$, we use $\binom{X}{k}$ to denote the collection of $k$-element subsets of $X$.
For a tuple $\ba \in X^k$, we commonly write $a_i$ to denote the $i$-th entry of $\ba$, i.e., $\ba = (a_1,\dots,a_k)$.

\subsection{Graphs}

We use standard graph notation.
All graphs considered in this paper are undirected and simple, i.e., they contain no multiedges or self-loops.
We write $V(G)$ and $E(G)$ to denote the vertex set and edge set of a graph $G$, respectively.
Also, we use $vw$ as a shorthand for an edge $\{v,w\} \in E(G)$.

A graph $H$ is a \emph{subgraph} of another graph $G$, denoted $H \subseteq G$, if $V(H) \subseteq V(G)$ and $E(H) \subseteq E(G)$.
For a graph $G$ and a set $X \subseteq V(G)$, we write $G[X] = (X,\{vw \in E(G) \mid v,w \in X\})$ to denote the \emph{(vertex-)induced subgraph} of $G$ by $X$.
Similarly, for $S \subseteq E(G)$, we write $G[S] = (V(G),S)$ to denote the \emph{(edge-)induced subgraph} of $G$ by $S$.
It is usually clear from the context whether we are taking a vertex- or an edge-induced subgraph.

For $k \geq 1$, we write $K_k$ for the complete graph with $k$ vertices, i.e., $V(K_k) = [k]$ and $E(K_k) = \binom{[k]}{2}$.
For $k,\ell \geq 1$, we write $K_{k,\ell}$ to denote the complete bipartite graph with $k+\ell$ vertices, i.e., $V(K_{k,\ell}) = [k + \ell]$ and $E(K_{k,\ell}) = \{\{i,k+j\} \mid i \in [k], j \in [\ell]\}$.
We remark that we explicitly define the graph $K_k$ to have vertex set $[k]$.
In particular, for $S \subseteq \binom{[k]}{2}$, we commonly use $K_k[S]$ to refer to the graph with vertex set $[k]$ and edge set $S$.
If the index $k$ is clear from context, we sometimes omit it and simply write $K[S]$.

A \emph{graph invariant} $\Phi$ (over a field $\FF$) maps each graph $G$ to a value $\Phi(G) \in \FF$ such that $\Phi(G) = \Phi(G')$ for all isomorphic graphs $G,G'$.
In this work, we mostly consider graph invariants over the rationals $\QQ$ and over the $p$-element field $\ZZ_p$ for a prime $p$.
Unless stated otherwise, graph invariants are defined over the rationals.
A graph invariant is a \emph{graph property} if $\Phi(G) \in \{0,1\}$ for all graphs $G$.
For a graph $G$ and a graph property $\Phi$, we write $G \in \Phi$ as a shorthand for $\Phi(G) = 1$.
For $k \geq 1$, a \emph{$k$-vertex graph invariant/property} is a graph invariant/property $\Phi$ such that $\Phi(G) = 0$ if $|V(G)| \neq k$.
The \emph{$k$-th slice} $\slice{\Phi}{k}$ of a graph invariant $\Phi$ is the $k$-vertex graph invariant that agrees with $\Phi$ on $k$-vertex graphs (and is $0$ otherwise).
For $k \geq 1$, a graph invariant $\Phi$ (over $\FF$) is \emph{$k$-trivial} if there is some $c \in \FF$ such that $\Phi(G) = c$ for all $k$-vertex graphs $G$.
A graph invariant $\Phi$ is \emph{meager} if it is $k$-trivial for all but finitely many $k$.

For a graph invariant $\Phi$, we define
\[\numindsub{\Phi}{G} \coloneqq \sum_{X \subseteq V(G)} \Phi(G[X]).\]
Observe that, if $\Phi$ is a graph property, then $\numindsub{\Phi}{G}$ is the number of vertex-induced subgraphs of $G$ that satisfy $\Phi$.
We write $\numindsubstar{\Phi}$ for the map $G \mapsto \numindsub{\Phi}{G}$;
this is again a graph invariant (over the same field as $\Phi$).

\subsection{Parameterized Complexity}

We give a brief introduction to parameterized counting complexity and refer the reader to the textbooks~\cite{CyganFKLMPPS15,FlumG06} and theses~\cite{Curticapean15,Roth19} for additional background.

A \emph{parameterized counting problem} consists of a function $P\colon \Sigma^* \to \QQ$ and a \emph{parameterization} $\kappa \colon \Sigma^* \to \ZZ_{\geq 0}$.
For an instance $x \in \Sigma^*$, the number $\kappa(x)$ is called the \emph{parameter} of $x$.
The problem $(P,\kappa)$ is called \emph{fixed-parameter tractable (fpt)} if there is a computable function $f$, a constant $c \in \ZZ_{\geq 1}$ and a (deterministic) algorithm $\CA$ that, given an instance $x \in \Sigma^*$, computes $P(x)$ in time $f(\kappa(x)) \cdot |x|^{c}$.

A \emph{parameterized Turing reduction} from $(P,\kappa)$ to $(P',\kappa')$ is a deterministic fpt-algorithm $\CA$ with oracle access to $P'$ that, given an instance $x$, computes $P(x)$ and only makes oracle calls to instances $y$ that satisfy $\kappa'(y) \leq g(\kappa(x))$ for some computable function $g\colon \ZZ_{\geq 0} \to \ZZ_{\geq 0}$. 
The problem $\pclique$ takes as input a graph $G$ and an integer $k \geq 1$ and asks to determine the number of $k$-cliques in $G$.
The number $k$ is the parameter of the instance $(G,k)$.
A parameterized counting problem $(P,\kappa)$ is \emph{$\sharpwone$-hard} if there is a parameterized Turing reduction from $\pclique$ to $(P,\kappa)$.
A \emph{parsimonious} parameterized reduction is a parameterized Turing reduction that makes exactly one oracle call $y$ and then outputs the oracle response on $y$ without further processing.

For a graph invariant $\Phi$, the parameterized counting problem $\pindsub{\Phi}$ takes as input a graph $G$ and an integer $k \geq 1$ (the parameter) and outputs $\numindsub{\slice{\Phi}{k}}{G}$.
When studying the complexity of $\pindsub{\Phi}$, it will mostly suffice to restrict our attention to the \emph{$k$-th slice} of the problem, for fixed integers $k \geq 1$, where the input consists only of a graph $G$.
This motivates us to define the following problem:
Given $k \geq 1$ and a $k$-vertex graph invariant $\Phi$,
the problem $\sindsub{\Phi}$ takes as input a graph $G$ and outputs $\numindsub{\Phi}{G}$.
For a graph invariant $\Phi$, we also write $\kindsub{\Phi}{k}$ instead of $\sindsub{\slice{\Phi}{k}}$.
Compared to $\pindsub{\Phi}$, this problem is polynomial-time solvable for every fixed $\Phi$ and $k$.

\paragraph{Modular Counting.}

Given a prime $p$, we write $\#_p$ to denote the modular counting version of a problem.
For example, $\pmodclique{p}$ denotes the problem of counting $k$-cliques modulo $p$.
We say a problem $(P,\kappa)$ is $\modwone{p}$-hard if there is a parameterized Turing reduction from $\pmodclique{p}$ to $(P,\kappa)$.
Note that a parsimonious parameterized reduction from $\pclique$ to the non-modular counting version $(P',\kappa)$ of $P$ implies such a reduction.

If $\Phi$ is a graph invariant over $\ZZ_p$, then $\numindsub{\Phi}{G} \in \ZZ_p$ for every graph $G$, which leads to the computational problems $\pmodindsub{p}{\Phi}$ and $\smodindsub{p}{\Phi}$.
Observe that, if $\Phi$ is a property, we can meaningfully consider the modular counting versions $\pmodindsub{p}{\Phi}$ and $\smodindsub{p}{\Phi}$ for every prime~$p$, since $0,1\in \ZZ_p$.

\paragraph{Fine-Grained Lower Bounds.}

To obtain more fine-grained lower bounds on the complexity of $\pindsub{\Phi}$ as well as $\sindsub{\Phi}$, we also rely on the \emph{Exponential Time Hypothesis (ETH)} dating back to \cite{ImpagliazzoPZ01} (see, e.g., {\cite[Conjecture 14.1]{CyganFKLMPPS15}}).

\begin{conjecture}[Exponential Time Hypothesis (ETH)]
    There is a real number $\varepsilon > 0$ such that the problem $3$-\textsc{Sat} cannot be solved in time $O(2^{\varepsilon n})$, where $n$ denotes the number of variables of the input formula.
\end{conjecture}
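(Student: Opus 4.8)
The statement is the Exponential Time Hypothesis, which is a \emph{conjecture} by design: it postulates an unconditional exponential lower bound on the running time of a deterministic algorithm for $3$-\textsc{Sat}, and no superpolynomial lower bound is currently known for any problem in $\textsf{NP}$. Since ETH implies $\textsf{P} \neq \textsf{NP}$ --- indeed, it rules out even subexponential-time algorithms --- any proof would resolve the $\textsf{P}$ versus $\textsf{NP}$ question and would have to circumvent the relativization and natural-proofs barriers that obstruct all known circuit and algorithmic lower-bound techniques. I therefore do not expect to be able to prove it; in this paper it is used purely as a hypothesis under which fine-grained lower bounds are derived, exactly as $\textsf{P} \neq \textsf{NP}$ is used in classical complexity.

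What one can reasonably do instead is to record robustness and evidence. First, I would invoke the Sparsification Lemma of Impagliazzo, Paturi and Zane~\cite{ImpagliazzoPZ01} to note that ETH is insensitive to whether instance size is measured by the number of variables or the number of clauses, so the precise formulation above is not an artifact of the chosen size parameter. Second, I would observe that the fastest known algorithms for $3$-\textsc{Sat} run in time $2^{(1-c)n}$ for a small constant $c > 0$ and are thus fully consistent with --- and far from refuting --- the conjecture. Third, ETH follows from various stronger but also widely believed assumptions, such as suitable circuit lower bounds. None of this constitutes a proof; it merely explains why the hypothesis is regarded as safe enough to build conditional hardness theory on top of.

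The ``main obstacle'' is hence not a missing lemma in a proof outline but the complete absence of techniques for establishing exponential time lower bounds against general algorithms for $\textsf{NP}$ problems. Accordingly, in the remainder of the paper we \emph{assume} ETH (and, for the modular-counting statements of Theorem~\ref{thm:intro-hardness-monotone}, its randomized variant) without proof, and all of our running-time lower bounds are stated as conditional on it; the novel content of the paper lies in the Fourier-analytic degree bounds and the reductions they feed into, not in any progress toward establishing ETH itself.
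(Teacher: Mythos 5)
You correctly identify that this statement is the Exponential Time Hypothesis, a standard complexity-theoretic conjecture that the paper states (with a citation to Impagliazzo, Paturi and Zane) and assumes without proof; the paper offers no argument for it and none is expected. Your treatment — declining to prove it and noting it serves purely as a hardness hypothesis — matches the paper's exactly.
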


In the setting of modular counting, we rely on the following slightly stronger conjecture that even rules out randomized algorithms.

\begin{conjecture}[Randomized Exponential Time Hypothesis (rETH)]
    There is a real number $\varepsilon > 0$ such that the problem $3$-\textsc{Sat} cannot be solved in time $O(2^{\varepsilon n})$ by a randomized algorithm with error probability at most $1/3$, where $n$ denotes the number of variables of the input formula.
\end{conjecture}

\subsection{Representing Boolean Functions}
\label{sec:boolean-functions}

We represent Boolean functions $f$ (that in turn represent graph parameters) as polynomials $p$ and lower-bound the degrees of these polynomials to obtain complexity-theoretic lower bounds.
In the following, we present the relevant theory for general Boolean functions; the specifics for graph parameters are discussed in Section~\ref{sec:sub-vs-fourier}.

\paragraph{Polynomial Representation.}
Let $\FF$ be a field.
Every Boolean function $f\colon \{0,1\}^m \to \FF$ of arity $m$ can be uniquely represented as a multilinear polynomial in $\FF[x_1,\ldots,x_m]$ via
\begin{equation}
    \label{eq: prelim-interpolation}
    q_f(\bx) = \sum_{\ba \in \{0,1\}^m} f(\ba) \prod_{i \in [m]:a_i = 1} x_i \prod_{i \in [m]:a_i = 0}(1 - x_i),    
\end{equation}
where $\bx = (x_1,\dots,x_m)$.
We call $q_f$ the \emph{polynomial representation of $f$}.
For $S \subseteq [m]$, let us denote the monomial corresponding to $S$ as
\[x^S \coloneqq \prod_{i \in S}x_i,\]
with the convention that $x^\emptyset = 1$.
By expanding \eqref{eq: prelim-interpolation} and collecting terms, we obtain unique coefficients $\altenum{f}(S) \in \FF$ for every $S \subseteq [m]$ such that 
\begin{equation}
    q_f(\bx) = \sum_{S \subseteq [m]} \altenum{f}(S) \cdot x^S.
\end{equation}
This notation is inspired by our notation for the alternating enumerator; see Definition~\ref{def:alt-enum} and Remark~\ref{rem:def-alternating-enumerator} for more on the origin of this notation.

We write $\deg_{\FF}(f)$ to denote the degree of the polynomial $q_f$, i.e.,
\[\deg_{\FF}(f) = \max\{|S| \mid S \subseteq [m] \text{ with } \altenum{f}(S) \neq 0\}.\]
If $\FF = \RR$, we simply omit the index and write $\deg(f)$, and we write $\deg_p(f)$ if $\FF = \ZZ_p$.
We are often dealing with functions $f\colon \{0,1\}^m \to \{0,1\}$ which can be interpreted over any field $\FF$.
In this situation, observe that $\deg_p(f) \leq \deg(f)$.

We regularly need to restrict Boolean functions to subcubes.
Let $f\colon \{0,1\}^m \to \FF$ be a Boolean function of arity $m$ and let $J \subseteq [m]$.
We write $\overline{J} \coloneqq [m] \setminus J$ for the complement of $J$ in $[m]$.
Also let $\bz = (z_j)_{j \in \overline{J}} \in \{0,1\}^{\overline{J}}$.
Then we define the \emph{restriction of $f$ to $J$ using $\bz$}, denoted by $f_{J|\bz}\colon\{0,1\}^{J} \to \FF$, to be the subfunction of $f$ given by fixing the coordinates in $\overline{J}$ to the values in $\bz$.
Slightly abusing notation, for $\by \in \{0,1\}^{J}$, we write $(\by,\bz)$ for the composite tuple in $\{0,1\}^{m}$, even tough $\by$ and $\bz$ are not actually concatenated.
In this notation, we have $f_{J|\bz}(\by) = f(\by,\bz)$.
Then
\begin{equation}
    \label{eq:fourier-restriction}
    \altenum{f_{J|\bz}}(S) = \sum_{T \subseteq \overline{J}} \altenum{f}(S \cup T) \prod_{j \in T}z_j
\end{equation}
for all $S \subseteq J$.
In particular, $\deg_\FF(f) \geq \deg_\FF(f_{J|\bz})$.

\paragraph{Fourier Representation.}
Boolean functions can also be represented by their \emph{Fourier expansion}.
In an earlier version of the paper, several proofs relied on this representation, while the current version instead works only in the representation over $\{0,1\}^m$ described above.
Nevertheless, we briefly discuss the Fourier representation here, because it can provide an alternative perspective on some of our results, and it originally led us to their discovery. Pragmatic readers and readers familiar with Fourier analysis can safely skip this part.

For simplicity, we restrict our attention to functions that map into $\RR$.
In Fourier analysis, we consider the input space of Boolean functions to be $\{-1,1\}^m$, and the reader is encouraged to associate $-1$ with \True\ and $1$ with \False.
Let $f\colon \{-1,1\}^m \to \RR$.
As before, there is a unique multilinear polynomial in $\RR[x_1,\ldots,x_m]$ representing $f$, i.e., there are unique so-called \emph{Fourier coefficients} $\widehat{f}(S) \in \RR$ for $S \subseteq [m]$ with 
\begin{equation}
    f(\bx) = \sum_{S \subseteq [m]} \widehat{f}(S) \cdot x^S.
\end{equation}
The polynomial representation of $f$ is called the \emph{Fourier expansion of $f$}.
We refer to the textbook \cite{ODonnell14} for extensive background on the Fourier expansion of Boolean functions.
We write
\[
\begin{aligned}
\supp(f) & \coloneqq \{\bx \in \{-1,1\}^m \mid f(\bx) \neq 0\},\\
\supp(\widehat{f}) & \coloneqq \{S \subseteq [m] \mid \widehat{f}(S) \neq 0\}.
\end{aligned}
\]
By the \emph{uncertainty principle}, one of these two sets is always large; see \cite[Exercise 3.15]{ODonnell14}.

\begin{lemma}[Uncertainty Principle]
    \label{lem:uncertainty}
    If $f\colon \{-1,1\}^m \to \RR$ is not the zero function, then
    \[|\supp(f)| \cdot |\supp(\widehat{f})| \geq 2^m.\]
\end{lemma}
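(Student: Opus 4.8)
The plan is to prove the uncertainty principle via the standard Cauchy–Schwarz / Parseval argument. First I would recall the two basic facts from Fourier analysis on the Boolean cube, both of which are available in \cite{ODonnell14}: Parseval's identity, which states that
\[
\mathbb{E}_{\bx \sim \{-1,1\}^m}\!\left[f(\bx)^2\right] = \sum_{S \subseteq [m]} \widehat{f}(S)^2,
\]
and the formula $\widehat{f}(S) = \mathbb{E}_{\bx}[f(\bx)\,x^S]$ for each Fourier coefficient. The proof then splits into bounding $|\supp(f)|$ from below by a Fourier-weight quantity, and bounding $|\supp(\widehat{f})|$ from below by the same quantity, so that their product dominates $2^m$.

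For the first bound, I would normalize so that $\max_{\bx}|f(\bx)|$ is attained; more cleanly, let $b = \|f\|_\infty = \max_{\bx}|f(\bx)| > 0$. Then
\[
\sum_S \widehat{f}(S)^2 = \mathbb{E}_{\bx}[f(\bx)^2] = 2^{-m}\sum_{\bx \in \supp(f)} f(\bx)^2 \leq 2^{-m}\,|\supp(f)|\,b^2.
\]
For the second bound, apply the formula for the coefficients together with Cauchy–Schwarz in the form $|\widehat{f}(S)| = |\mathbb{E}_{\bx}[f(\bx)x^S]| \leq \|f\|_\infty = b$ for every $S$ (since $|x^S| = 1$ pointwise and the average of something bounded by $b$ is bounded by $b$). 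Hence
\[
\sum_S \widehat{f}(S)^2 = \sum_{S \in \supp(\widehat{f})} \widehat{f}(S)^2 \leq |\supp(\widehat{f})|\,b^2.
\]
Since $f$ is not identically zero, $\sum_S \widehat{f}(S)^2 > 0$, so I may divide. Multiplying the rearranged inequalities $|\supp(f)| \geq 2^m \big(\sum_S \widehat{f}(S)^2\big)/b^2$ and $|\supp(\widehat{f})| \geq \big(\sum_S \widehat{f}(S)^2\big)/b^2$ would, however, give an extra factor; so instead I would multiply the two \emph{upper} bounds on $\sum_S \widehat{f}(S)^2$ more carefully: from $\sum_S \widehat{f}(S)^2 \leq 2^{-m}|\supp(f)|\,b^2$ and $b^2 = \max_\bx f(\bx)^2 \leq \sum_S \widehat{f}(S)^2$ (this last step is the key: the squared sup-norm is at most the total Fourier weight, because $f(\bx)^2 = \big(\sum_S \widehat{f}(S)x^S\big)^2 \leq |\supp(\widehat{f})| \sum_S \widehat{f}(S)^2$ by Cauchy–Schwarz applied to the sum defining $f(\bx)$), one gets $\sum_S \widehat{f}(S)^2 \le 2^{-m}|\supp(f)| \cdot |\supp(\widehat{f})| \sum_S \widehat{f}(S)^2$, and dividing by the positive quantity $\sum_S \widehat{f}(S)^2$ yields exactly $|\supp(f)|\cdot|\supp(\widehat{f})| \geq 2^m$.

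The main subtlety — and the step I expect to need the most care — is the pointwise Cauchy–Schwarz bound $f(\bx)^2 \leq |\supp(\widehat f)| \cdot \sum_{S}\widehat f(S)^2$: one writes $f(\bx) = \sum_{S \in \supp(\widehat f)} \widehat f(S)\, x^S$ and applies Cauchy–Schwarz to this sum of $|\supp(\widehat f)|$ terms, using $|x^S| = 1$, to get $f(\bx)^2 \leq |\supp(\widehat f)| \sum_{S \in \supp(\widehat f)} \widehat f(S)^2 = |\supp(\widehat f)| \sum_S \widehat f(S)^2$. Everything else is bookkeeping. Since this is precisely Exercise 3.15 in \cite{ODonnell14}, I would present it compactly, citing that reference, and note that the only place non-zeroness of $f$ is used is to license the final division.
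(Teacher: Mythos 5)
Your argument is correct and is exactly the standard proof of the uncertainty principle that the paper does not reproduce but cites as \cite[Exercise 3.15]{ODonnell14}: bound $\sum_S \widehat f(S)^2 = \mathbb{E}[f^2] \leq 2^{-m}|\supp(f)|\,\|f\|_\infty^2$ via Parseval, bound $\|f\|_\infty^2 \leq |\supp(\widehat f)| \sum_S \widehat f(S)^2$ via Cauchy--Schwarz on the Fourier expansion, chain the two, and divide by the positive quantity $\sum_S \widehat f(S)^2$. One small slip in the write-up: the displayed claim ``$b^2 = \max_{\bx} f(\bx)^2 \leq \sum_S \widehat f(S)^2$'' is false as literally stated (Parseval together with $\max \geq$ mean gives the \emph{reverse} inequality), but your parenthetical justification and the final chain both use the correct version with the extra factor $|\supp(\widehat f)|$, so the proof itself is sound.
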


The \emph{Fourier degree} $\deg(f)$ of $f$ is the degree of the Fourier expansion, i.e.,
\[\deg(f) \coloneqq \max\{|S| \mid S \in \supp(\widehat f) \}.\]

It is possible to change between the $\{0,1\}$-based and $\{-1,1\}$-based representations by a simple affine shift of variables. More concretely, to every function $f\colon \{0,1\}^m \to \RR$, we can canonically associate a function $f^{*}\colon \{-1,1\}^m \to \RR$
by setting $f^*(\ba^*) \coloneqq f(\ba)$, where $\ba$ is obtained from $\ba^*$ by replacing every $1$ by $0$, and every $-1$ by $1$.
We have that
\begin{equation}
    \label{eq:poly-basis-change}
    f^*(\bx)=\sum_{S \subseteq [m]} \widehat{f^*}(S) x^S = q_f\left(-\frac{1}{2}x_1+\frac{1}{2},\dots,-\frac{1}{2}x_m+\frac{1}{2}\right).
\end{equation}

As the right-hand side equals $q_f$ up to an affine shift of variables, we obtain $\deg(f) = \deg(f^*)$.
Even stronger, $q_f$ and the polynomial representing $f^*$ have the same inclusion-wise maximal monomials.

\section{General Approach for Hardness Proofs}
\label{sec:sub-vs-fourier}

While previous works (see, e.g., \cite{DorflerRSW22,DoringMW24,RothS20}) analyze the alternating enumerator directly to obtain hardness results, we observe a close connection to the coefficients of the polynomial representation of $f_\Phi$, i.e., the Boolean function associated with $\Phi$.
This allows us to use algebraic techniques to obtain hardness results for $\sindsub{\Phi}$ for various different types of graph invariants $\Phi$.

\subsection{Polynomial Representations and Alternating Enumerators}

Given a $k$-vertex graph invariant $\Phi$, we can represent $\Phi$ canonically as a Boolean function.
For ease of notation, we write $K \coloneqq K_k$ for the complete graph on vertex set $[k]$ in the remainder of this section.
We associate to $\Phi$ the Boolean function $f_{\Phi} \colon \{0,1\}^{E(K)} \to \FF$ of arity $\binom{k}{2}$, where  
\[f_{\Phi}(\ba) \coloneqq \Phi(K[\ba])\]
with $K[\ba] \coloneqq K[S_{\ba}]$ and $S_{\ba} = \{e \in E(K) \mid a_e = 1\}$.

In the following, we write $\bx = (x_e)_{e \in E(K)}$.
As outlined in Section~\ref{sec:boolean-functions}, there is a unique multilinear polynomial in $\FF[\bx]$ that agrees with $f_\Phi$, the \emph{polynomial representation of $\Phi$}, given by
\begin{equation}
    \label{eq:def-q-phi}
    q_\Phi(\bx) = \sum_{\ba \in \{0,1\}^{E(K)}} f_\Phi(\ba)\prod_{e \in E(K) \colon a_e = 1} x_e \prod_{e\in E(K) \colon a_e = 0}(1 - x_e).
\end{equation}
Since $\Phi$ is a graph invariant, it attains the same value on graphs with the same isomorphism type. This means we can collect terms in \eqref{eq:def-q-phi} and change notation, to obtain
\begin{equation}
\label{eq:qphi-indsub-poly}    
q_\Phi(\bx) = \sum_{\substack{\text{unlabeled }H \\\text{on $k$ vertices}}} \Phi(H)\underbrace{\sum_{\substack{F \subseteq K \\F \cong H}}  \prod_{e\in E(F)} x_e \prod_{e\notin E(F)}(1 - x_e)}_{=\indsubpoly{H,k}(\bx)}\,,
\end{equation}
with $\indsubpoly{H,k}$ from \eqref{eq:indsub-polynomial}.
As we only require the polynomials $\indsubpoly{H,k}$ rather than $\indsubpoly{H,t}$ for general $t \in \NN$, we abbreviate $\indsubpoly{H} \coloneqq \indsubpoly{H,k}$.
Similarly, we abbreviate $\subpoly{H} \coloneqq \subpoly{H,k}$, where the latter was defined in \eqref{eq:sub-polynomial}.

We wish to understand the polynomial representation of $\Phi$ in more detail and connect it to graph-theoretic notions. First, recall the following combinatorial definition:
\begin{definition}
\label{def:alt-enum}
    The \emph{alternating enumerator} of a graph invariant $\Phi$ on a graph $H$ is
    \[\altenum{\Phi}(H) = (-1)^{|E(H)|} \sum_{S \subseteq E(H)} (-1)^{|S|} \,\Phi(H[S]).\]
\end{definition}
\begin{remark}
    \label{rem:def-alternating-enumerator}
    In earlier works~\cite{DorflerRSW22,DoringMW24,RothS20}, the alternating enumerator is defined without the factor $(-1)^{|E(H)|}$.
    These works aim at showing that $\altenum{\Phi}(H) \neq 0$ for certain graphs $H$; such statements are not affected by the absence of the factor $(-1)^{|E(H)|}$.
    To obtain more straightforward statements in Lemma~\ref{lem:polynomial-alternating-enumerator} and Corollaries \ref{cor:polynomial-via-sub-poly} and \ref{cor:phi=sub}, we prefer the variant defined above.

    Additionally, earlier works usually write $\widehat{\Phi}(H)$ to denote the alternating enumerator.
    We avoid this notation, since the hat-notation $\widehat{f}$ is commonly used in the analysis of Boolean functions to denote Fourier coefficients (see, e.g., \cite{ODonnell14}).
    Instead, our notation follows \cite{Lovasz12}.
\end{remark}

By expanding the right-hand side of \eqref{eq:def-q-phi} similar to $\eqref{eq:sub-polynomial}$, we see that the polynomial representation of $\Phi$ has the alternating enumerators as coefficients in its monomial expansion.

\begin{lemma}
    \label{lem:polynomial-alternating-enumerator}
    Let $\Phi$ be a graph invariant on $k$-vertex graphs.
    Then
    \begin{equation}
        \label{eq:alternating-enumerator-polynomial}
        q_\Phi(\bx) = \sum_{S \subseteq E(K)} \altenum{\Phi}(K[S]) \cdot x^S,
    \end{equation}
    i.e., $\altenum{f_\Phi}(S) = \altenum{\Phi}(K[S])$ for every $S \subseteq E(K)$.
\end{lemma}

\begin{proof}
    For $E \subseteq E(K)$ let $q_E(\bx) \coloneqq \prod_{e \in E} x_e \prod_{e \in E(K) \setminus E} (1 - x_e)$.
    Then
    \[q_E(\bx) = \sum_{S \subseteq E(K) \setminus E} (-1)^{|S|} x^{E \cup S} = \sum_{S \supseteq E} (-1)^{|S \setminus E|} x^{S}.\]
    It follows that the coefficient of $x^S$ is equal to
    \[\sum_{E \subseteq S} (-1)^{|S \setminus E|} \cdot \Phi(K[E]) = (-1)^{|S|} \cdot \sum_{E \subseteq S} (-1)^{E} \cdot \Phi(K[E]) = \altenum{\Phi}(K[S]).\]
\end{proof}

As a direct corollary, a large degree $\deg(q_\Phi) = \deg(f_\Phi)$ of $f_\Phi$ implies the existence of graphs $H$ with non-zero alternating enumerator and many edges.

\begin{corollary}
    \label{cor:alt-enum-vs-degree}
    Let $k \geq 1$ and let $\Phi$ be a $k$-vertex graph invariant.
    Then $\deg(q_\Phi) \geq \ell$ if and only if there is a $k$-vertex graph $H$ such that $|E(H)| \geq \ell$ and $\altenum{\Phi}(H) \neq 0$.
\end{corollary}

By collecting terms corresponding to isomorphic graphs, the polynomial $q_\Phi$ can be expressed as a linear combination of the subgraph polynomials defined in \eqref{eq:sub-polynomial}.

\begin{corollary}
    \label{cor:polynomial-via-sub-poly}
    Let $\Phi$ be a graph invariant on $k$-vertex graphs.
    Then
    \begin{equation}
        \label{eq:alternating-enumerator-polynomial-compact}
        q_\Phi(\bx) = \sum_{\substack{\text{unlabeled }H \\\text{on $k$ vertices}}} \altenum{\Phi}(H) \cdot \subpoly{H}(\bx).
    \end{equation}
\end{corollary}

\begin{proof}
    If $\Phi$ is a graph invariant, then also $\altenum{\Phi}$ is a graph invariant by definition.
    Therefore, we have $\altenum{\Phi}(K[S]) = \altenum{\Phi}(K[S'])$ if $K[S]\cong K[S']$.
    This allows us to collect terms corresponding to isomorphic graphs in \eqref{eq:alternating-enumerator-polynomial} to obtain
    \[
    q_\Phi(\bx) 
    = \sum_{S \subseteq E(K)} \altenum{\Phi}(K[S]) \cdot x^S
    = \sum_{\substack{\text{unlabeled }H \\\text{on $k$ vertices}}} \altenum{\Phi}(H)
    \underbrace{\sum_{\substack{S \subseteq E(K) \\ K[S] \cong H}} x^S}_{=\subpoly{H}(\bx)},\]
    which proves the corollary.
\end{proof}

In particular, \eqref{eq:qphi-indsub-poly} and \eqref{eq:alternating-enumerator-polynomial} together imply that $\Phi(F) = \sum_{H} \altenum{\Phi}(H) \cdot \numsub{H}{F}$ for every $k$-vertex graph $F$.
We show below that this holds even when $|V(F)| \neq k$.
Related correspondences have been shown combinatorially, see~\cite[Lemma~8]{DorflerRSW22} or \cite[Theorem~1]{RothS20}.

\begin{corollary}
    \label{cor:phi=sub}
    Given a $k$-vertex graph invariant $\Phi$, for $k \geq 1$, we have
    \begin{equation}
        \numindsub{\Phi}{\star} = \sum_{\substack{\text{unlabeled }H \\\text{on $k$ vertices}}} \altenum{\Phi}(H) \cdot \numsub{H}{\star}.
    \end{equation}
\end{corollary}
\begin{proof}
    Given an $n$-vertex graph $G$, we can write $\numindsub{\Phi}{G} = \sum_X \numindsub{\Phi}{G[X]}$, where $X$ ranges over all $k$-vertex subsets of $V(G)$.
    Since $G[X]$ is a $k$-vertex graph, we have $\numindsub{\Phi}{G[X]} = \Phi(G[X]) = \sum_H \altenum{\Phi}(H)\cdot \numsub{H}{G[X]}$.
    Changing the order of summation, we obtain 
    \[\numindsub{\Phi}{G} = \sum_H \altenum{\Phi}(H)\sum_X \numsub{H}{G[X]} = \sum_H \altenum{\Phi}(H) \cdot \numsub{H}{G}.\]
\end{proof}

\subsection{Non-Vanishing Alternating Enumerators Imply Hardness}
\label{sec:hardness-templates}

To show hardness of the problem $\sindsub{\Phi}$ and its variants, we follow previous works \cite{DorflerRSW22,DoringMW24,RothS20} and identify graphs of unbounded treewidth with non-zero alternating enumerator:
For a graph $H$, we write $\tw(H)$ for the treewidth of $H$ (see, e.g., \cite[Chapter 7]{CyganFKLMPPS15}) and say that an (infinite) sequence of graphs $(H_k)_{k \geq 1}$ has \emph{unbounded treewidth} if, for every $t \geq 1$, there is some $k \geq 1$ such that $\tw(H_k) \geq t$.

\begin{theorem}
    \label{thm:hardness-w}
    Let $(H_k)_{k \geq 1}$ be a sequence of graphs of unbounded treewidth.
    \begin{enumerate}[label = (\alph*)]
        \item\label{item:hardness-w-1} If $\Phi$ is a computable graph invariant with $\altenum{\Phi}(H_k) \neq 0$ for all $k \geq 0$, then $\pindsub{\Phi}$ is $\sharpwone$-hard.
        \item\label{item:hardness-w-2} If $p$ is prime and $\Phi$ is a computable graph invariant over $\ZZ_p$ such that $\altenum{\Phi}(H_k) \neq 0 \bmod p$ for all $k \geq 0$, then $\pmodindsub{p}{\Phi}$ is $\modwone{p}$-hard.
    \end{enumerate}
\end{theorem}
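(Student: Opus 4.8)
The plan is to combine Lemma~\ref{lem:phi=sub} with the complexity theory of subgraph counting recalled in the introduction, using the ``complexity monotonicity'' principle of \cite{CurticapeanDM17}. First I would fix a $t \geq 1$ and, using the unbounded tree-width assumption, choose $k$ with $\tw(H_k) \geq t$; I will produce a parameterized Turing reduction from $\pclique$ to $\pindsub{\Phi}$ whose oracle calls have parameter bounded in terms of $t$ (equivalently, in terms of the input clique size), so it suffices to reduce from the problem of counting $t$-cliques for arbitrarily large $t$. By Lemma~\ref{lem:phi=sub}, the $k$-th slice satisfies
\[
\numindsub{\slice{\Phi}{k}}{\star} \;=\; \sum_{H} \widehat{\Phi}(H)\cdot \numsub{H}{\star},
\]
where $H$ ranges over $k$-vertex graphs; by hypothesis the coefficient of $H_k$ in this sum is non-zero (respectively non-zero mod $p$ in part~\ref{item:hardness-w-2}).

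Next I would invoke complexity monotonicity for linear combinations of subgraph counts with pattern graphs all on the same number of vertices: from an oracle for the left-hand side above one can, via interpolation over suitably blown-up / colored instances, recover the individual terms $\numsub{H}{\star}$ for every $H$ with $\widehat{\Phi}(H) \neq 0$, at the cost of an fpt overhead and oracle calls on graphs whose size is polynomially bounded in the host graph and with parameter still $k$. (In part~\ref{item:hardness-w-2} the same interpolation works over $\ZZ_p$ because the Vandermonde-type system is invertible mod $p$ once the evaluation points are chosen distinct mod $p$, using that the number of terms is bounded in $k$; this is where we need $\widehat{\Phi}(H_k) \not\equiv 0 \pmod p$ rather than merely $\neq 0$.) In particular we obtain an oracle for $\numsubstar{H_k}$. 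Since $\tw(H_k) \geq t$, counting subgraph copies of $H_k$ is $\sharpwone$-hard — indeed, by \cite{CurticapeanM14,Curticapean15,CurticapeanDM17}, large tree-width of the pattern (via its large vertex-cover number after the standard reductions, or directly via the tree-width characterization) yields a parameterized reduction from $\pclique$ to $\numsubstar{H_k}$. Composing the reductions gives $\sharpwone$-hardness of $\pindsub{\Phi}$, and the mod-$p$ analogue gives $\modwone{p}$-hardness of $\pmodindsub{p}{\Phi}$, since each oracle call in the whole chain has parameter bounded by a computable function of the original parameter.

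The main obstacle is the complexity monotonicity / interpolation step: one must argue that the finitely many (but $k$-dependently many) subgraph-count terms appearing in Lemma~\ref{lem:phi=sub} can be disentangled by an fpt algorithm with oracle access to their sum, uniformly in $k$, and that the pattern with the extremal relevant complexity parameter — here $H_k$, or at least \emph{some} pattern of tree-width $\geq t$ with non-zero coefficient — survives. The cleanest route is to first pass to the colored/homomorphism basis (where the relevant ``hard graph'' is a quotient of maximum tree-width among graphs with non-zero coefficient) and then tensor/interpolate; since $\widehat{\Phi}(H_k) \neq 0$ for \emph{every} $k$ and the $H_k$ have unbounded tree-width, for each target $t$ there is a slice in which a pattern of tree-width $\geq t$ has non-zero coefficient, which is exactly what is needed. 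For the modular statement one additionally has to track that no relevant coefficient vanishes mod $p$; the hypothesis $\widehat{\Phi}(H_k)\not\equiv 0\pmod p$ is tailored to this, and the interpolation matrices can be kept invertible over $\ZZ_p$ because their dimension depends only on $k$. I expect these arguments to be essentially citations to \cite{CurticapeanDM17} and \cite{CurticapeanM14} with the bookkeeping of parameters, so the write-up should be short.
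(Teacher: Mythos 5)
Your overall strategy for part~\ref{item:hardness-w-1} --- expand $\numindsub{\slice{\Phi}{k}}{\star}$ in the subgraph basis via Lemma~\ref{lem:phi=sub}, isolate the term $\numsubstar{H_k}$ by a complexity-monotonicity argument restricted to patterns on a fixed number of vertices, and then invoke the hardness of counting high-treewidth subgraph patterns --- is exactly the paper's route. The paper implements the isolation step as Lemma~\ref{lem:sub-monotonicity} (an inclusion-exclusion over vertex-colour classes and edge-colour-pair classes of a coloured host graph) and sources the hardness from coloured grid-subgraph counting via the excluded grid theorem and minor operations, so for part~\ref{item:hardness-w-1} your proposal is essentially the paper's proof with the details deferred to citations.

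The genuine gap is in part~\ref{item:hardness-w-2}. You propose to disentangle the subgraph-count terms by interpolation and assert that the ``Vandermonde-type system is invertible mod $p$ once the evaluation points are chosen distinct mod $p$, using that the number of terms is bounded in $k$.'' This step fails: $p$ is a \emph{fixed} prime while the number of pairwise non-isomorphic $k$-vertex patterns, and hence the dimension of your linear system, grows like $2^{\Theta(k^2)}$, so for all but finitely many $k$ there do not exist that many distinct evaluation points in $\ZZ_p$ and any Vandermonde matrix of that dimension over $\ZZ_p$ is singular. This is precisely the obstruction that makes naive complexity monotonicity break down in the modular setting. The paper avoids interpolation entirely: Lemma~\ref{lem:sub-monotonicity} expresses $\widehat{\Phi}(H_k)\cdot\numsub{\can{H_k}}{G}$ as a $\pm1$-signed sum of $2^{|V(H_k)|+|E(H_k)|}$ oracle values $\numindsub{\slice{\Phi}{k}}{G^{\circ}_{\setminus X,Y}}$, so the only division ever performed is by the single scalar $\widehat{\Phi}(H_k)$, which is invertible in $\ZZ_p$ exactly under the hypothesis $\widehat{\Phi}(H_k)\not\equiv 0 \pmod p$. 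Replacing your interpolation step by this signed inclusion-exclusion over a colourful copy of $H_k$ repairs the argument and makes both parts go through uniformly.
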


The next theorem gives us more fine-grained lower bounds based on ETH and rETH.

\begin{theorem}
    \label{thm:hardness-eth}
    There is a universal constant $\alpha_{\textsc{ind}} > 0$ and an integer $N_0 \geq 1$ such that for all numbers $k,\ell \geq 1$, the following holds:
    \begin{enumerate}[label = (\alph*)]
        \item\label{item:hardness-eth-1} If $\Phi$ is a $k$-vertex graph invariant and there exists a graph $H$ with $\altenum{\Phi}(H) \neq 0$ and $E(H) \geq k \cdot \ell \geq N_0$, then $\sindsub{\Phi}$ cannot be solved in time $O(n^{\alpha_{\textsc{ind}} \cdot \ell})$ unless ETH fails.
        \item\label{item:hardness-eth-3} If $p$ is prime, $\Phi$ is a $k$-vertex graph invariant over $\ZZ_p$ and there exists a graph $H$ with $\altenum{\Phi}(H) \neq 0 \bmod p$ and $E(H) \geq k \cdot \ell \geq N_0$, then $\smodindsub{p}{\Phi}$ cannot be solved in time $O(n^{\alpha_{\textsc{ind}} \cdot \ell})$ unless rETH fails.
    \end{enumerate}
\end{theorem}

Both theorems can be proved by (variants of) known arguments \cite{Curticapean15,DorflerRSW22,DoringMW24,RothS20} via Corollary~\ref{cor:phi=sub}.
Still, we provide full proofs in Appendix \ref{app:omitted-proofs}.

Finally, let us also note the following algorithmic result.
Let $G$ be a graph.
A \emph{vertex cover} of $G$ is a set $C \subseteq V(G)$ such that $e \cap C \neq \emptyset$ for every edge $e \in E(G)$.
The \emph{vertex cover number} of $G$, denoted by $\vc(G)$, is the minimal size of a vertex cover of $G$.

\begin{theorem}
    \label{thm:algorithm-alternating-enumerator-vc}
    Let $\Phi$ be a computable graph invariant and let $t(k)$ denote the maximal vertex cover number of a $k$-vertex graph $H$ such that $\altenum{\Phi}(H) \neq 0$.
    Then $\pindsub{\Phi}$ can be solved in time $f(k) \cdot n^{t(k) + 1}$ for some computable function $f$.
\end{theorem}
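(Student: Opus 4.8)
The plan is to use the subgraph basis representation from Lemma~\ref{lem:phi=sub}, combined with the known complexity monotonicity property and the fact that subgraph counts $\numsubstar{H}$ for graphs $H$ of bounded vertex cover number are polynomial-time computable. Concretely, by Lemma~\ref{lem:phi=sub} we have
\[
\numindsub{\slice{\Phi}{k}}{\star} = \sum_{H} \widehat{\Phi}(H) \cdot \numsub{H}{\star},
\]
where $H$ ranges over (unlabelled) $k$-vertex graphs. Let $\CH_k$ be the finite set of $k$-vertex graphs $H$ with $\widehat{\Phi}(H) \neq 0$; by definition, every $H \in \CH_k$ satisfies $\vc(H) \leq t(k)$. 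The idea is then: first, given the input graph $G$, compute $\numsub{H}{G}$ for each $H \in \CH_k$ in time $f_1(k) \cdot n^{t(k)+1}$; second, combine these values according to the coefficients $\widehat{\Phi}(H)$ to recover $\numindsub{\slice{\Phi}{k}}{G}$.

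For the first step, I would invoke the algorithmic result underlying the complexity of subgraph counting in the bounded vertex-cover regime~\cite{CurticapeanM14,CurticapeanDM17}: for a fixed pattern $H$ with a vertex cover $C$ of size $t$, one can compute $\numsub{H}{G}$ in time $O(n^{t+1})$ (roughly, iterate over all placements of $C$ into $V(G)$, and for each such placement the remaining independent set of $H$ can be counted by a product of local degree-type quantities; a standard inclusion–exclusion over the partition structure of the independent part handles the distinctness constraints). Across all $H\in\CH_k$ this yields total time $f_1(k)\cdot n^{t(k)+1}$, where $f_1$ absorbs $|\CH_k|$ and the per-pattern overhead depending only on $k$. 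For the second step, since $\Phi$ is computable, the coefficients $\widehat{\Phi}(H) = (-1)^{|E(H)|}\sum_{S\subseteq E(H)}(-1)^{|S|}\Phi(H[S])$ can be evaluated in time depending only on $k$ (there are $2^{\binom{k}{2}}$ edge subsets and each $\Phi$-evaluation is on a $k$-vertex graph); also the set $\CH_k$ itself can be enumerated in time depending only on $k$. Forming the linear combination then costs only $f_2(k)$ additional arithmetic operations (on numbers of polynomial bit-length). Setting $f = f_1 + f_2$ gives the claimed running time $f(k)\cdot n^{t(k)+1}$.

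The one point that needs care — and which I expect to be the main (if modest) obstacle — is making precise the claim that $\numsub{H}{G}$ can be computed in time $O(n^{\vc(H)+1})$ with the dependence on $H$ isolated into a function of $k$ only, and ensuring this is uniform enough to be combined over the finitely many $H \in \CH_k$. This is essentially the content of~\cite{CurticapeanM14}, but I would either cite it directly or sketch the vertex-cover branching argument; the bookkeeping around counting labelled versus unlabelled copies (dividing by $|\Aut(H)|$) and around the inclusion–exclusion for injectivity on the independent part is routine but should be stated. A secondary, entirely routine point is that all intermediate quantities have bit-length polynomial in the input size (the counts are at most $n^k$), so the arithmetic does not blow up the running time. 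Assembling these pieces yields the theorem.
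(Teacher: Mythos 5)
Your proposal is correct and follows essentially the same route as the paper: apply Lemma~\ref{lem:phi=sub} to pass to the subgraph basis, compute the coefficients $\widehat{\Phi}(H)$ (and the set of $k$-vertex graphs $H$ with $\widehat{\Phi}(H)\neq 0$) by brute force in time depending only on $k$, and then compute each $\numsub{H}{G}$ with $\vc(H)\le t(k)$ in time $f(k)\cdot n^{t(k)+1}$. The only, immaterial, difference is in the last subroutine: the paper cites \cite[Theorem~1.1]{CurticapeanDM17}, which runs in time $f(k)\cdot n^{q(H)+1}$ for $q(H)=\max_{F\preceq H}\tw(F)$, and bounds $q(H)\le\vc(H)$ via Lemma~\ref{lem:vertex-cover-quotient}, whereas you invoke (and correctly sketch) the direct vertex-cover branching algorithm of \cite{CurticapeanM14}; both yield the same exponent.
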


The theorem is a consequence of \cite[Theorem~1.1]{CurticapeanDM17} and Corollary~\ref{cor:phi=sub}; a proof is again given in Appendix \ref{app:omitted-proofs}.

\section{Hardness Results}
\label{sec:hardness-fourier}

Section \ref{sec:hardness-templates} shows that graphs $H$ with many edges and non-zero alternating enumerators $\altenum{\Phi}(H)$ imply hardness of $\sindsub{\Phi}$.
By Corollary~\ref{cor:alt-enum-vs-degree}, such graphs $H$ exist if the polynomial representation of $\Phi$ has large degree.
This allows us to exploit tools from the analysis of Boolean functions to find such graphs $H$.
Thus, combining the generic hardness results with algebraic techniques, we obtain hardness results for specific problems $\sindsub{\Phi}$ and $\pindsub{\Phi}$.

\subsection{Moderately Sparse Invariants}

As a first application, we consider graph invariants with moderately small support.
For a $k$-vertex graph invariant $\Phi$ over a field $\FF$, we write
\[\supp(\Phi) \coloneqq \Big\{K_k[S] ~\Big|~ S \in \binom{[k]}{2}, \Phi(K_k[S]) \neq 0\Big\}.\]
Observe that $\supp(\Phi)$ contains exactly the graphs $G$ with vertex set $[k]$ for which $\Phi(G) \neq 0$.
For an arbitrary graph invariant $\Phi$ (which is not only defined on $k$-vertex graphs) we also write $\supp_k(\Phi) \coloneqq \supp(\slice{\Phi}{k})$.

\begin{theorem}
    \label{thm:alternating-enumerator-small}
    Let $\FF$ be a field.
    Let $k \geq 1$, $0 \leq \ell \leq \binom{k}{2}$ and let $\Phi$ be a $k$-vertex graph invariant over $\FF$ such that
    \[1 \leq |\supp(\Phi)| \leq 2^{\binom{k}{2} - \ell}.\]
    Then there is a $k$-vertex graph $H$ such that $\altenum{\Phi}(H) \neq 0$ and $|E(H)| \geq \ell$.
\end{theorem}

We stress that Theorem~\ref{thm:alternating-enumerator-small} holds over arbitrary fields $\FF$.
The proof relies on the following lemma on Boolean functions, which can be viewed as a variant of the Schwartz-Zippel lemma for multlinear polynomials, and which was shown, e.g., in \cite[Lemma~2.2]{WilliamsWWY15} and \cite[Lemma~1]{BjorklundDH15}.

\begin{lemma}
    \label{lem:small-support-large-degree}
    Let $f\colon \{0,1\}^m \to \FF$ be a Boolean function that is not the zero function.
    Then
    \[|\{\ba \in \{0,1\}^m \mid f(\ba) \neq 0\}| \geq 2^{m - d}\]
    where $d \coloneqq \deg_\FF(f)$.
\end{lemma}

\begin{proof}
    Let $q_f \in \FF[\bx]$ be the polynomial representation of $f$ in indeterminates $\bx = (x_1,\dots,x_m)$ and let $S \subseteq [m]$ be a set of size $d$ such that $\altenum{f}(S) \neq 0$, i.e., the monomial $x^S$ has a non-zero coefficient in $q_f$.
    Without loss of generality assume that $S = \{1,\dots,d\}$.
    For every $\bc \in \{0,1\}^{\overline{S}}$, consider the polynomial
    \[q_{\bc}(x_1,\dots,x_d) = q_f(x_1,\dots,x_d,c_{d+1},\dots,c_{m})\]
    obtained from $q_f$ by substituting $\bc = (c_{d+1},\dots,c_{m})$ into $x_{d+1},\ldots,x_m$.
    Note that $q_{\bc}$ is not the zero polynomial, since the coefficient of $x^{S}$ remains unchanged.
    Because the multilinear polynomial $q_{\bc}$ is determined by its evaluations on all tuples in $\{0,1\}^S$, there is some $\bb \in \{0,1\}^S$ such that $q_{\bc}(\bb) \neq 0$.

    So overall, for every $\bc \in \{0,1\}^{\overline{S}}$ there is some $\bb \in \{0,1\}^S$ such that $f(\bb,\bc) = q_{\bc}(\bb) \neq 0$.
    It follows that there are at least $|\{0,1\}^{\overline{S}}| = 2^{m-d}$ vectors $\ba \in \{0,1\}^m$ with $f(\ba) \neq 0$.
\end{proof}

\begin{proof}[Proof of Theorem~\ref{thm:alternating-enumerator-small}]
    Let $k \geq 1$, $0 \leq \ell \leq \binom{k}{2}$ and let $\Phi$ be a $k$-vertex graph invariant over $\FF$ such that
    \[1 \leq |\supp(\Phi)| \leq 2^{m-\ell}\]
    where $m \coloneqq \binom{k}{2}$.
    Since $|\supp(\Phi)| \geq 1$ we get that $f_{\Phi}\colon \{0,1\}^{E(K_k)} \to \FF$ is not the zero function.
    Let $d \coloneqq \deg_\FF(f_\Phi)$.
    Then, by Lemma~\ref{lem:small-support-large-degree}, we get that
    \[2^{m - d} \leq |\{\ba \in \{0,1\}^{E(K_k)} \mid f_\Phi(\ba) \neq 0\}| = |\supp(\Phi)| \leq 2^{m - \ell}.\]
    It follows that $d \geq \ell$.
    By Corollary~\ref{cor:alt-enum-vs-degree}, we obtain a $k$-vertex graph $H$ such that $\altenum{\Phi}(H) \neq 0$ and $|E(H)| \geq \ell$.
\end{proof}

\begin{remark}
    An earlier version of the paper showed a slightly weaker form of Theorem \ref{thm:alternating-enumerator-small} for $\FF = \RR$ via the uncertainty principle (see Lemma \ref{lem:uncertainty}).
    Indeed, if $\supp(\Phi)$ is small, then the Fourier representation of $\Phi$ has large support, which implies its degree is also large.
\end{remark}

By combining Theorem \ref{thm:alternating-enumerator-small} for $\FF = \QQ$ with Theorems \ref{thm:hardness-w}\ref{item:hardness-w-1} and \ref{thm:hardness-eth}\ref{item:hardness-eth-1}, we obtain the following.

\begin{corollary}
    \label{cor:hardness-small}
    For every $0 < \varepsilon < 1$ there are $N_0,\delta > 0$ such that the following holds.
    Let $k \geq N_0$ and let $\Phi$ be a $k$-vertex graph invariant such that
    \[1 \leq |\supp(\Phi)| \leq (2 - \varepsilon)^{\binom{k}{2}}.\]
    Then no algorithm solves $\sindsub{\Phi}$ in time $O(n^{\delta \cdot k})$ unless ETH fails.
    
    Moreover, for every computable graph invariant $\Phi$ such that
    \[1 \leq |\supp_k(\Phi)| \leq (2 - \varepsilon)^{\binom{k}{2}}\]
    holds for infinitely many $k$, the problem $\pindsub{\Phi}$ is $\sharpwone$-hard.
\end{corollary}

Also, combining Theorem \ref{thm:alternating-enumerator-small} for $\FF = \ZZ_p$ with Theorems \ref{thm:hardness-w}\ref{item:hardness-w-2} and \ref{thm:hardness-eth}\ref{item:hardness-eth-3}, we obtain hardness for modular counting.

\begin{corollary}
    \label{cor:hardness-small-mod}
    For every $0 < \varepsilon < 1$ there are $N_0,\delta > 0$ such that the following holds.
    Let $p$ be a prime, $k \geq N_0$ and let $\Phi$ be a $k$-vertex graph invariant over $\ZZ_p$ such that
    \[1 \leq |\supp(\Phi)| \leq (2 - \varepsilon)^{\binom{k}{2}}.\]
    Then no algorithm solves $\smodindsub{p}{\Phi}$ in time $O(n^{\delta \cdot k})$ unless rETH fails.
    
    Moreover, for every computable graph invariant $\Phi$ over $\ZZ_p$ such that
    \[1 \leq |\supp_k(\Phi)| \leq (2 - \varepsilon)^{\binom{k}{2}}\]
    holds for infinitely many $k$, the problem $\pmodindsub{p}{\Phi}$ is $\modwone{p}$-hard.
\end{corollary}

Together, Corollary \ref{cor:hardness-small} and \ref{cor:hardness-small-mod} restate Theorem \ref{thm:intro-hardness-small}.

A notable example of graph properties of moderately small support are hereditary graph properties.
A graph property $\Phi$ is \emph{hereditary} if it is closed under taking (vertex-)induced subgraphs: If $G \in \Phi$, then $G[X] \in \Phi$ for every $X \subseteq V(G)$.

\begin{theorem}[\cite{PromelS92}]
    Let $\Phi$ be a non-trivial hereditary graph property.
    Then there is an integer $k_0 \geq 1$ and $\varepsilon > 0$ such that
    \[|\supp_k(\Phi)| \leq (2 - \varepsilon)^{\binom{k}{2}}\]
    for every $k \geq k_0$.
\end{theorem}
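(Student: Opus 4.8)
The plan is to invoke the classical theory of hereditary graph properties, specifically the Alekseev--Bollob\'as--Thomason theorem together with the Pr\"omel--Steger speed result, which states that if $\Phi$ is a hereditary property then the number of labelled $n$-vertex graphs in $\Phi$ is $2^{(1-1/r+o(1))\binom{n}{2}}$, where $r = r(\Phi)$ is the so-called \emph{colouring number} (or index) of $\Phi$. Since $\Phi$ is non-trivial, it excludes some fixed induced subgraph $F$; in particular $r(\Phi)$ is a finite constant depending only on $\Phi$ (it is bounded in terms of $|V(F)|$). The key point is therefore that $1 - 1/r$ is a constant strictly less than $1$.

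The steps I would carry out: first, observe that $\supp_k(\Phi)$ counts unlabelled $k$-vertex graphs, so $|\supp_k(\Phi)| \le (\text{number of labelled } k\text{-vertex graphs in } \Phi)$, which is at most $2^{(1-1/r+o(1))\binom{k}{2}}$ by the cited speed theorem. Second, fix any constant $c$ with $1 - 1/r < c < 1$; by the $o(1)$ term there is a threshold $k_0$ so that for all $k \ge k_0$ we have $|\supp_k(\Phi)| \le 2^{c\binom{k}{2}} = (2^c)^{\binom{k}{2}}$. Third, set $\varepsilon \coloneqq 2 - 2^c > 0$, which gives $|\supp_k(\Phi)| \le (2-\varepsilon)^{\binom{k}{2}}$ for all $k \ge k_0$, as desired. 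One should double-check that ``non-trivial'' here is being used in the sense of excluding at least one graph as an induced subgraph (equivalently, $\Phi$ is not the property of all graphs); if $\Phi$ were empty on $k$-vertex graphs the bound holds vacuously, and the non-triviality hypothesis rules out $\Phi$ containing everything, which is exactly what is needed for $r(\Phi)$ to be finite.

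The main obstacle is not really a mathematical difficulty but rather a bookkeeping one: being careful about the distinction between labelled and unlabelled counts (which only helps us, since unlabelled $\le$ labelled), and correctly citing the precise form of the speed theorem. The paper's text already points to the relevant reference (Pr\"omel and Steger), so the proof can simply state: since $\Phi$ is hereditary and non-trivial, it excludes some fixed induced subgraph, hence by~\cite{PromelS92} its labelled speed is $2^{(1-1/r)\binom{k}{2}(1+o(1))}$ for a constant $r = r(\Phi) \ge 1$; choosing $k_0$ large enough that the $o(1)$ is below, say, $1/(2r)$, and setting $\varepsilon = 2 - 2^{1 - 1/(2r)} > 0$ yields the claim. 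If one prefers a self-contained argument avoiding the full Alekseev--Bollob\'as--Thomason machinery, an elementary alternative is available: excluding a fixed induced $F$ on $f$ vertices forces, by a Ramsey-type argument, a bound of the form $|\supp_k(\Phi)| \le 2^{\binom{k}{2} - \Omega(k^2)}$ directly, though extracting the explicit constant is messier; I would use the clean citation.

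\begin{proof}
    Since $\Phi$ is non-trivial, there is a fixed graph $F$ with $F \notin \Phi$; as $\Phi$ is hereditary, every graph containing $F$ as an induced subgraph is excluded from $\Phi$. By the Pr\"omel--Steger speed theorem for hereditary properties~\cite{PromelS92}, the number of labelled graphs on vertex set $[k]$ that satisfy $\Phi$ is $2^{(1 - 1/r + o(1))\binom{k}{2}}$, where $r = r(\Phi) \ge 1$ is a constant depending only on $\Phi$ (and bounded in terms of $|V(F)|$). The number of unlabelled $k$-vertex graphs in $\Phi$ is at most the number of labelled ones, so $|\supp_k(\Phi)| \le 2^{(1 - 1/r + o(1))\binom{k}{2}}$.

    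Fix a constant $c$ with $1 - 1/r < c < 1$, for instance $c \coloneqq 1 - \tfrac{1}{2r}$. By the $o(1)$ estimate there is some $k_0 \ge 1$ such that $|\supp_k(\Phi)| \le 2^{c\binom{k}{2}}$ for all $k \ge k_0$. Setting $\varepsilon \coloneqq 2 - 2^{c} > 0$, we obtain $|\supp_k(\Phi)| \le (2 - \varepsilon)^{\binom{k}{2}}$ for every $k \ge k_0$, as claimed.
\end{proof}
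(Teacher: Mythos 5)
Your proof is correct and follows essentially the same route as the paper, which likewise derives the bound by observing that a non-trivial hereditary $\Phi$ excludes some induced subgraph $F$ and then citing the Pr\"omel--Steger counting result; you merely spell out the speed-theorem details that the paper leaves implicit. (One tiny remark: in the paper's convention $\supp_k(\Phi)$ already consists of \emph{labelled} graphs on vertex set $[k]$, so your labelled-versus-unlabelled step is unnecessary but harmless.)
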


\begin{proof}
    Since $\Phi$ is non-trivial, there is some graph $F$ with $\Phi(F) = 0$.
    Hence, every graph $G \in \Phi$ does not contain $F$ as an induced subgraph.
    Now the statement follows directly from \cite{PromelS92} (see also Theorem~\ref{thm:Fcopies} in the appendix\footnote{To keep the paper self-contained, we include a simple proof of a weaker version of the main result in \cite{PromelS92} as Theorem~\ref{thm:Fcopies}. This weaker version is sufficient for our purposes and can be obtained with significantly less technical effort.}) that bounds the number of graphs that do not contain $F$ as an induced subgraph.
\end{proof}

In particular, Corollary \ref{cor:hardness-small} generalizes the results from \cite{FockeR24} on $\sharpwone$-hardness and tight lower bounds under ETH for non-trivial hereditary properties. Note that Corollary \ref{cor:hardness-small} also covers many non-hereditary properties and even holds for graph invariants that are not necessarily $0$-$1$-valued. In particular, negative weights are allowed, which could \emph{a priori} lead to intricate cancellations.
On top of that, Corollary~\ref{cor:hardness-small} yields hardness for modular counting.

\subsection{Monotone Properties}
\label{sec:hardness-monotone}

Next, we concern ourselves with monotone properties.
Let $k \geq 1$ and let $\Phi$ be a $k$-vertex graph property.
We say that $\Phi$ is \emph{monotone}\footnote{In the literature on counting induced subgraphs (see, e.g., \cite{DoringMW24,FockeR24,RothSW24}), this condition is usually called \emph{edge-monotone}, whereas \emph{monotone} properties are those that are hereditary and edge-monotone (on $k$-vertex graphs for all $k \geq 1$).
In this paper, we use the term \emph{monotone} rather than \emph{edge-monotone}, to ensure consistency with notation used for analysing Boolean functions (see, e.g., \cite{ODonnell14}).} if $G[S] \in \Phi$ for every $G \in \Phi$ and every $S \subseteq E(G)$.
For an arbitrary property $\Phi$, we say that $\Phi$ is \emph{monotone on $k$-vertex graphs} if $\slice{\Phi}{k}$ is monotone. 

The following theorem is a simple modification of a result from \cite{DodisK99}.

\begin{theorem}
    \label{thm:alternating-enumerator-monotone}
    Let $k \geq 1$ and $p$ a prime number.
    Let $\Phi$ be a $k$-vertex graph property that is monotone and not $k$-trivial.

    Then there is some $k$-vertex graph $H$ such that $\altenum{\Phi}(H) \neq 0 \bmod p$ and the complete bipartite graph $K_{\ell,\ell}$ is a subgraph of $H$ for some $\ell > \frac{k}{p^2}$.
\end{theorem}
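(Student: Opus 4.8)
The plan is to reduce the statement to a known fact about the mod-$p$ Fourier degree of monotone Boolean functions and then translate it into the language of alternating enumerators via Lemma~\ref{lem:alternating-enumerator-vs-fourier-coefficient}. Concretely, let $f_\Phi\colon\{-1,1\}^{E(K)}\to\RR$ be the Boolean function associated with $\Phi$ as in Section~\ref{sec:sub-vs-fourier}, with $K = K_k$ and $m = \binom k2$. Since $\Phi$ is monotone and not $k$-trivial, $f_\Phi$ is a (non-constant) monotone Boolean function; moreover it is invariant under the action of the symmetric group $\Sym([k])$ on the edge coordinates $E(K)$, because $\Phi$ is isomorphism-invariant. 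The key input is the theorem of Dodis and Kushilevitz~\cite{DodisK99} (building on the transitive-symmetry argument of Rivest--Vuillemin~\cite{RivestV76}): a monotone, non-constant Boolean function whose symmetry group acts transitively on the $m$ coordinates has mod-$p$ Fourier degree $> m/p^2$ — that is, there is some $S\subseteq[m]$ with $|S| > m/p^2$ and $\widehat{f_\Phi}(S)\not\equiv 0\pmod p$. Here ``mod-$p$ Fourier degree'' refers to the degree of the unique multilinear polynomial over $\FF_p$ representing $f_\Phi$ (equivalently, the largest $|S|$ with an integer Fourier coefficient not divisible by $p$; note the Fourier coefficients of a $\{0,1\}$-valued function times $2^m$ are integers, so reduction mod $p$ makes sense after clearing denominators, or one works over $\FF_p$ from the start).

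Granting this, I would argue as follows. First note $\binom k2 / p^2$; we want a bipartite subgraph $K_{\ell,\ell}$, not merely a graph on many edges, so a direct count of $|S|$ is not enough — we must exploit the group symmetry more carefully. The point is that the edge-action of $\Sym([k])$ on $E(K)$ is transitive, so Dodis--Kushilevitz applies and yields a set $S$ with $|S| > \binom k2 / p^2$ and $\widehat{f_\Phi}(S)\not\equiv 0\pmod p$. Enlarge $S$ to an inclusion-wise maximal such set (Fourier coefficients mod $p$ can only remain nonzero or become nonzero when we enlarge? — no, we need maximality among sets with nonzero mod-$p$ coefficient; such a maximal $S' \supseteq S$ exists and $|S'| \ge |S| > \binom k2/p^2$). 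By Lemma~\ref{lem:alternating-enumerator-vs-fourier-coefficient} (whose proof is purely about inclusion-wise maximal monomials and so works identically mod $p$, or we invoke the mod-$p$ analogue obtained by reducing Equation~\eqref{eq:translate-polynomials} mod $p$), the graph $H \coloneqq K_k[S']$ satisfies $\widehat\Phi(H)\not\equiv 0\pmod p$. It remains to see that $H$, having more than $\binom k2/p^2$ edges, contains $K_{\ell,\ell}$ with $\ell > k/p^2$.

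For this last step I would use the Kővári--Sós--Turán-type counting, but more robustly I'd invoke the standard fact (used in this exact context in \cite{DodisK99,RothS20}) that every graph on $k$ vertices with more than $(1 - 1/r)\binom k2$ edges contains $K_{\ell,\ell}$ for $\ell \ge k/(\text{something})$; alternatively, and more cleanly, I would not pass through an arbitrary maximal $S$ but instead note that one may take $S$ to be closed under the stabilizer of an appropriate pair-of-sets, making $K_k[S]$ literally a complete multipartite or complete bipartite graph. The cleanest route: since $f_\Phi$ is monotone and symmetric, among all $S$ with $\widehat{f_\Phi}(S)\not\equiv 0\pmod p$ of maximum size, pick one; by symmetrizing over $\Sym([k])$ one shows the maximal $S$ can be assumed to be the edge set of a graph of a special symmetric type. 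The honest statement, following Dodis--Kushilevitz directly, is that the \emph{maximal monotone mod-$p$-degree monomial} corresponds to a threshold-type subfunction, and restricting $f_\Phi$ to a subcube (Equation~\eqref{eq:fourier-restriction}, which only decreases degree) reduces to a symmetric function on fewer variables whose degree is controlled; chasing this yields the $K_{\ell,\ell}$ subgraph with $\ell > k/p^2$.

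The main obstacle is precisely the gap between ``$H$ has $> \binom k2/p^2$ edges'' and ``$H$ contains $K_{\ell,\ell}$ with $\ell > k/p^2$'': a dense graph need not contain a large balanced biclique in general (that would violate Kővári--Sós--Turán bounds), so the biclique must be extracted using the group action, not just edge count. I expect the real content of the proof to be adapting the Dodis--Kushilevitz argument so that the witnessing set $S$ is forced to be a biclique edge set — e.g. by observing that the maximal-degree part of a transitively-symmetric monotone function is itself ``biclique-structured'' after restriction, or by a direct Ramsey/averaging argument on the automorphism orbit of $S$. Everything else — monotonicity and symmetry of $f_\Phi$, the mod-$p$ version of Lemma~\ref{lem:alternating-enumerator-vs-fourier-coefficient}, and plugging the resulting $H$ into Theorem~\ref{thm:hardness-eth}\ref{item:hardness-eth-2} and Theorem~\ref{thm:hardness-w}\ref{item:hardness-w-2} — is routine.
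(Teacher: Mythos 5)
You have correctly located the crux --- that a set $S$ with $|S| > \binom{k}{2}/p^2$ and nonzero mod-$p$ coefficient does \emph{not} yield a large balanced biclique in $K_k[S]$, so the biclique must come from the group action rather than from edge density --- but your proposal stops exactly there, and that is where essentially all of the work in the actual proof lies. The paper does not extract $K_{\ell,\ell}$ from a degree bound at all. Instead it \emph{chooses in advance} a biclique-indexed set of coordinates $J = V \times W$ with $|V| = |W| = p^{\ell-1}$ (where $p^\ell \leq k$ is the largest power of $p$ below $k$) and a boundary assignment $\ba$ on $\overline{J}$, and shows that the restriction $f^* = (q_\Phi)_{J|\ba}$ has \emph{full} mod-$p$ degree $|J|$; full degree forces $J \subseteq S$ for some $S$ with $\widehat{\Phi}(K_k[S]) \not\equiv 0 \bmod p$, so $K_k[S]$ literally contains the biclique on $V \times W$. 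To make this work one needs: (a) an explicit chain of graphs (disjoint unions of $p^i$-cliques $F_0 \subseteq \cdots \subseteq F_\ell$, then bicliques added one pair of parts at a time, $H_0 \subseteq \cdots \subseteq H_p$) along which monotonicity locates a threshold graph $H^*$ with $f^*(\vec 0) \neq f^*(\vec 1)$; (b) the Rivest--Vuillemin orbit-counting lemma, which gives full degree only when $|J|$ is a power of $p$ and the restriction is transitive (transitivity is verified using $\Aut(H^*)$); and (c) a separate induction on $k - p^\ell$ with three cases (adding a universal vertex, adding an isolated vertex, or a direct construction with $K_{r+s} + I_s$) to handle $k$ that are not powers of $p$. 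Your proposal supplies none of (a) or (c), and your black-box statement that any transitive monotone non-constant function on $m$ coordinates has mod-$p$ degree $> m/p^2$ is not what \cite{DodisK99} (Dodis--Khanna, not Kushilevitz) provides: the full-degree conclusion needs prime-power arity, and the factor $k/p^2$ in the theorem comes from rounding $k$ down to $p^\ell$ and taking parts of size $p^{\ell-1}$, not from a generic degree bound.

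A smaller but real issue: for the mod-$p$ statement you should not pass through the $\{-1,1\}$ Fourier coefficients at all (their denominators are powers of $2$, which is fatal for $p=2$); the paper works throughout with the $\{0,1\}$-basis polynomial $q_\Phi$ over $\ZZ_p$, whose coefficients \emph{are} the alternating enumerators mod $p$ by Lemma~\ref{lem:polynomial-alternating-enumerator}, together with the restriction formula \eqref{eq:fourier-restriction-mod}. You mention this alternative in passing, which is the right instinct, but the argument has to be set up in that basis from the start.
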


More concretely, building on earlier work of Rivest and Vuillemin \cite{RivestV76}, Dodis and Khanna \cite{DodisK99} show that the polynomial representation $q_{\Phi}$ of $\Phi$ (see \eqref{eq:def-q-phi}) has degree $\Omega(k^2)$ if $\Phi$ is a graph property on $k$-vertex graphs that is monotone and not $k$-trivial.
Actually, \cite{DodisK99} shows that $q_{\Phi}$ even has degree $\Omega(k^2)$ when interpreted as a polynomial over the two-element field $\ZZ_2$, i.e., all coefficients of the polynomial $q_{\Phi}$ are taken modulo $2$.
Analyzing their arguments in more detail, it is in fact shown that $q_{\Phi}$ contains a monomial (with non-zero coefficient) that corresponds to a graph $H$ such that $K_{\ell,\ell}$ is a subgraph of $H$ for some $\ell > \frac{k}{4}$.
A simple extension of the arguments gives the statement above for every prime $p \geq 3$.

Since \cite{DodisK99} does not provide proof details and their arguments need to be slightly extended, we give a full proof here.
Let us stress again that the following proof of Theorem \ref{thm:alternating-enumerator-monotone} precisely follows the arguments from \cite{DodisK99}.

Toward the proof of Theorem \ref{thm:alternating-enumerator-monotone}, we require some additional tools.
For a finite set $\Omega$ we write $\Sym(\Omega)$ to denote the symmetric group over ground set $\Omega$.
For $m \geq 1$ we also write $S_m$ as a shortcut for $\Sym([m])$.

Let $p$ be a prime number and $m \geq 1$.
For the remainder of this subsection, we consider Boolean functions $f\colon \{0,1\}^m \to \ZZ_p$ that map into the $p$-element field $\ZZ_p$.
A permutation $\sigma \in S_m$ is an \emph{automorphism of $f$} if $f(\ba) = f(\ba^\sigma)$ for every $\ba = (a_1,\dots,a_m) \in \{0,1\}^m$ where $\ba^{\sigma} = (a_{\sigma(1)},\dots,a_{\sigma(m)})$ is the tuple obtained from $\ba$ by permuting its entries according to $\sigma$.
We define the \emph{automorphism group of $f$}, denoted by $\Aut(f)$, to be the group of all automorphisms of $f$.
We say that $f$ is \emph{transitive} if $\Aut(f)$ is transitive, i.e., for every $i,j \in [m]$ there is some $\sigma \in \Aut(f)$ such that $\sigma(i) = j$.

Recall that a Boolean functions $f\colon \{0,1\}^m \to \ZZ_p$ can be written as a multilinear polynomial over $\ZZ_p$ as
\[q_f(\bx) = \sum_{S \subseteq [m]} \altenum{f}(S) \cdot x^S\]
for uniquely determined coefficients $\altenum{f}(S) \in \ZZ_p$ for $S \subseteq [m]$.
We write $\deg_p(f)$ to denote the degree of $q_f$ (over the field $\ZZ_p$).

Also, we write $\vec 0 = (0,\dots,0)$ and $\vec 1 = (1,\dots,1)$ to denote the all-zero and all-one vector of the appropriate length, respectively (the length of the vector is always clear from context).

\begin{lemma}
    \label{lem:full-degree-prime-power}
    Let $f\colon \{0,1\}^m \to \ZZ_p$ be a function over $\ZZ_p$ where $p$ is prime and $m = p^k$ for some integer $k \geq 1$.
    Also suppose $f$ is transitive and $f(\vec 0) \neq f(\vec 1)$.
    Then $\deg_p(f) = m$.
\end{lemma}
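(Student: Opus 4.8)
The plan is to use the group action of $\Aut(f)$ combined with a counting argument modulo $p$, which is the standard ``Rivest--Vuillemin style'' technique. Write $q(\bx)$ for the multilinear polynomial over $\ZZ_p$ representing $f$. The key quantity is the sum of \emph{all} coefficients of $q$, which by the inclusion--exclusion identity for $q$ equals $q(\vec 1) - (\text{lower-order evaluations})$; more usefully, I would work with the coefficient of the top monomial $x_1 x_2 \cdots x_m$. By the formula behind Lemma~\ref{lem:polynomial-alternating-enumerator} (applied in the $\ZZ_p$ setting with $f$ in place of $\Phi$), the coefficient of $x^{[m]}$ in $q$ is exactly $\sum_{\ba \in \{0,1\}^m} (-1)^{m - |\ba|} f(\ba) = (-1)^m \sum_{\ba} (-1)^{|\ba|} f(\ba)$, where $|\ba|$ denotes the Hamming weight. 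So $\deg_p(f) = m$ if and only if $\sum_{\ba \in \{0,1\}^m} (-1)^{|\ba|} f(\ba) \neq 0$ in $\ZZ_p$.

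Next I would exploit the group action. The group $G \coloneqq \Aut(f)$ acts on $\{0,1\}^m$ preserving Hamming weight and preserving the value of $f$. Split the sum $\sum_{\ba} (-1)^{|\ba|} f(\ba)$ over $G$-orbits. The two singleton orbits $\{\vec 0\}$ and $\{\vec 1\}$ contribute $f(\vec 0) + (-1)^m f(\vec 1)$; since $m = p^k$, for $p$ odd this is $f(\vec 0) - f(\vec 1) \not\equiv 0 \pmod p$, and for $p = 2$ it is $f(\vec 0) + f(\vec 1) \equiv f(\vec 0) - f(\vec 1) \not\equiv 0 \pmod 2$ as well, so in all cases these two orbits contribute a nonzero element of $\ZZ_p$. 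For every other orbit $O$, all $\ba \in O$ have the same Hamming weight (so the same sign $(-1)^{|\ba|}$) and the same value $f(\ba)$, so the orbit contributes $(-1)^{|\ba|} f(\ba) \cdot |O|$. The crucial claim is that $|O| \equiv 0 \pmod p$ for every non-singleton orbit: by the orbit--stabilizer theorem $|O|$ divides $|G|$, but that is not immediately enough. Instead I would use that $G$ is transitive on $[m]$ with $|[m]| = m = p^k$: transitivity forces $p^k \mid |G|$, so a Sylow $p$-subgroup $P \leq G$ has order divisible by $p^k$ and in fact $P$ is already transitive on $[m]$ (a standard fact: a transitive group of prime-power degree has a transitive Sylow $p$-subgroup, since the orbits of $P$ all have $p$-power size and must partition a set of size $p^k$ into blocks permuted transitively by... — more cleanly, one shows $P$ has an orbit of size $p^k$, hence is transitive). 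Restricting the orbit decomposition to the $P$-action: every $P$-orbit has $p$-power size, and a $P$-orbit of size $1$ is a fixed point $\ba$ of $P$; since $P$ is transitive on coordinates, its only fixed points in $\{0,1\}^m$ are $\vec 0$ and $\vec 1$. Hence $\sum_{\ba}(-1)^{|\ba|} f(\ba) \equiv f(\vec 0) + (-1)^m f(\vec 1) \pmod p \not\equiv 0$, giving $\deg_p(f) = m$.

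Concretely, the steps in order: (1) record the coefficient formula, namely $[x^{[m]}]\, q = (-1)^m \sum_{\ba}(-1)^{|\ba|} f(\ba)$ in $\ZZ_p$ (this is just Lemma~\ref{lem:polynomial-alternating-enumerator} transported to $\ZZ_p$ via the substitution in the proof of Lemma~\ref{lem:alternating-enumerator-vs-fourier-coefficient}); (2) observe that $\Aut(f)$, and in particular a Sylow $p$-subgroup $P$ of it, acts on $\{0,1\}^m$ preserving weight and the $f$-value; (3) prove that $P$ can be taken transitive on $[m]$ because $m = p^k$ and $\Aut(f)$ is transitive — this is the place I expect to invoke a short group-theoretic lemma; (4) conclude that $P$ fixes only $\vec 0$ and $\vec 1$ among Boolean vectors, so all other $P$-orbits have size $\equiv 0 \pmod p$; (5) reduce the sum modulo $p$ to the two fixed points and check $f(\vec 0) + (-1)^m f(\vec 1) = f(\vec 0) - f(\vec 1) \neq 0$ in $\ZZ_p$ using $m = p^k$ and $f(\vec 0) \neq f(\vec 1)$.

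The main obstacle is step~(3): establishing that a transitive permutation group of degree $p^k$ contains a transitive $p$-subgroup. The clean argument is: let $P$ be a Sylow $p$-subgroup of $\Aut(f)$; the $P$-orbits on $[m]$ all have $p$-power size and partition $[m]$, and since $\Aut(f)$ is transitive it permutes these $P$-orbits transitively, so they all have equal size $p^j$ and there are $p^{k-j}$ of them; if $j < k$ then the subgroup of $\Aut(f)$ stabilizing one $P$-orbit setwise has index $p^{k-j}$ coprime to... no — more carefully, $P$ lies inside the setwise stabilizer $Q$ of one of its own orbits $O_0$, and $Q$ has index $p^{k-j}$ in $\Aut(f)$, contradicting that $P$ is a Sylow $p$-subgroup unless $k - j = 0$. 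Hence $j = k$ and $P$ is transitive. Once this is in hand, everything else is bookkeeping with orbits modulo $p$, exactly mirroring the Rivest--Vuillemin argument for $p = 2$ that the paper credits to \cite{RivestV76,DodisK99}.
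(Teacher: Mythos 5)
Your proof is correct, but it runs the orbit argument on the \emph{evaluation} side rather than the \emph{coefficient} side, which is where the paper works. You compute the top coefficient explicitly as $[x^{[m]}]\,q = (-1)^m\sum_{\ba}(-1)^{|\ba|}f(\ba)$ and then kill all non-fixed $\Aut(f)$-orbits of Boolean inputs modulo $p$; the paper instead uses the identity $f(\vec 1)=\sum_{S\subseteq[m]}\widehat q(S)$, observes that $\widehat q$ is constant on $\Aut(f)$-orbits of subsets $S$, and shows that the orbit of every $S$ with $\emptyset\subsetneq S\subsetneq[m]$ has size divisible by $p$ via an elementary double count: $|S|\cdot|S^{\Aut(f)}| = m\cdot|\{S'\in S^{\Aut(f)}\mid 1\in S'\}|$, and since $p^k\nmid |S|$ the orbit size must absorb a factor of $p$. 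The main practical difference is your step (3): you invoke Sylow theory to extract a transitive $p$-subgroup $P$ so that all non-fixed orbits automatically have $p$-power size. This is correct but heavier than necessary --- the paper's double-counting trick applies verbatim to Boolean vectors $\ba$ with $0<|\ba|<m$ under the transitive action of $\Aut(f)$ itself (count the ones in the orbit-versus-coordinate incidence matrix by rows and by columns), so you could delete the Sylow detour entirely and your proof would collapse to a mirror image of the paper's. What your route buys is a self-contained formula for the single coefficient you care about, $\widehat q([m])$, without needing the relation between $f(\vec 1)$ and the full coefficient sum; what the paper's route buys is that no group theory beyond the orbit partition is needed.

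One local inaccuracy worth flagging: your first formulation of step (3), ``$\Aut(f)$ permutes the $P$-orbits transitively, so they all have equal size,'' is false in general, since $\Aut(f)$ acts on the set of $P$-orbits only when $P$ is normal. Your ``more carefully'' repair is sound once completed: if $O_0$ is a $P$-orbit of size $p^j$ and $Q$ is its setwise stabilizer, then double-counting incidences between points and $G$-translates of $O_0$ gives $[G:Q]\cdot p^j = p^k\cdot c$ for an integer $c\geq 1$, so $p^{k-j}\mid [G:Q]$; but $P\leq Q$ forces $[G:Q]$ coprime to $p$, whence $j=k$. With that filled in, the argument is complete.
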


\begin{proof}
    First observe that $f(\vec 0) = q_f(\vec 0) = \altenum{f}(\emptyset)$ and
    \[f(\vec 1) = q_f(\vec 1) = \sum_{S \subseteq [m]} \altenum{f}(S).\]
    We need to argue that $\altenum{f}([m]) \neq 0$.
    The following two claims form the key ingredients.
    \begin{claim}
        \label{claim:coefficients-equal-in-orbit}
        For every $S \subseteq [m]$ and $\sigma \in \Aut(f)$, we have $\altenum{f}(S) = \altenum{f}(S^\sigma)$, where $S^{\sigma} = \{\sigma(i) \mid i \in S\}$.
    \end{claim}
    \begin{claimproof}
        Let $\sigma \in \Aut(f)$.
        Consider the polynomial
        \[q_f^{\sigma}(\bx) = \sum_{S \subseteq [m]} \altenum{f}(S^{\sigma}) \cdot x^S.\]
        We have that
        \[q_f(\ba) = f(\ba) = f(\ba^{\sigma}) = q_f^{\sigma}(\ba)\]
        for all $\ba \in \{0,1\}^m$.
        Since the polynomial representation of $f$ is unique, we conclude that $q_f = q_f^{\sigma}$ which implies that $\altenum{f}(S) = \altenum{f}(S^\sigma)$ for every $S \subseteq [m]$.
    \end{claimproof}

    \begin{claim}
        \label{claim:p-divides-orbit-size}
        For every $\emptyset \subsetneq S \subsetneq [m]$, the cardinality of $S^{\Aut(f)} \coloneqq \{S^\sigma \mid \sigma \in \Aut(f)\}$ is divisible by $p$.
    \end{claim}
    \begin{claimproof}
        Let $\emptyset \subsetneq S \subsetneq [m]$.
        We show that
        \begin{equation}
            \label{eq:p-divides-orbit-size}
            |S| \cdot |S^{\Aut(f)}| = m \cdot |\{S' \in S^{\Aut(f)} \mid 1 \in S'\}|.
        \end{equation}
        First observe that this implies the claim, since $|S| < m$ and hence, at least one prime factor of $m = p^k$ needs to divide $|S^{\Aut(f)}|$.

        To show Equation \eqref{eq:p-divides-orbit-size}, consider the $|S^{\Aut(f)}|$ by $m$ matrix $M$ with entries
        \[M_{S',i} = 
        \begin{cases}
            1 &\text{if } i \in S'\\
            0 &\text{otherwise}
        \end{cases}\]
        for all $S' \in S^{\Aut(f)}$ and $i \in [m]$.
        Now, we count the number of ones in the matrix $M$ by rows and by columns.

        First, counting the number of ones by rows gives $|S| \cdot |S^{\Aut(f)}|$ since each row contains exactly $|S|$ many ones since all sets in $S^{\Aut(f)}$ have the same size.
        On the other hand, the first column of $M$ contains $|\{S' \in S^{\Aut(f)} \mid 1 \in S'\}|$ many ones.
        Since $f$ is transitive, all columns contain the same number of ones, so in total we get $m \cdot |\{S' \in S^{\Aut(f)} \mid 1 \in S'\}|$ many one-entries.
        Together, this implies Equation \eqref{eq:p-divides-orbit-size}.
    \end{claimproof}

    We partition the power set $2^{[m]}$ of the set $[m]$ into the orbits $O_0,O_1,\dots,O_\ell$ under the automorphism group $\Aut(f)$.
    Observe that $\{\emptyset\}$ and $\{[m]\}$ are two of the orbits and we may assume without loss of generality that $O_0 = \{\emptyset\}$ and $O_\ell = \{[m]\}$.
    By Claim \ref{claim:p-divides-orbit-size} we get that $p$ divides $|O_j|$ for every $j \in \{1,\dots,\ell-1\}$.
    Let $j \in \{1,\dots,\ell-1\}$.
    By Claim \ref{claim:coefficients-equal-in-orbit} we get that $\altenum{f}(S) = \altenum{f}(S')$ for all $S,S' \in O_j$.
    So
    \[\sum_{S \in O_j} \altenum{f}(S) = 0\]
    since $p$ divides $|O_j|$ and all operations are carried out over $\ZZ_p$.
    Overall, this means that
    \[f(\vec 1) = q_f(\vec 1) = \sum_{S \subseteq [m]} \altenum{f}(S) = \altenum{f}(\emptyset) + \altenum{f}([m]).\]
    Because $f(\vec 0) = q_f(\vec 0) = \altenum{f}(\emptyset)$ and $f(\vec 0) \neq f(\vec 1)$, we conclude that $\altenum{f}([m]) \neq 0$.
\end{proof}

Let $p$ be a prime number and $m \geq 1$.
Let $f\colon \{0,1\}^m \to \ZZ_p$ be a function over $\ZZ_p$ of arity $m$.
Recall that, for $J \subseteq [m]$ and $\ba = (a_j)_{j \in \overline{J}} \in \{0,1\}^{\overline{J}}$, we write $f_{J|\ba}$ to denote the restriction of $f$ to $J$ using $\ba$.

\begin{lemma}
    \label{lem:alternating-enumerator-monotone-prime-power}
    Let $p$ be a prime number and suppose $k = p^\ell$ for some integer $\ell \geq 1$.
    Also suppose $\Phi$ is a graph property on $k$-vertex graphs that is monotone and not $k$-trivial.

    Then there is some $k$-vertex graph $H$ such that $\altenum{\Phi}(H) \neq 0 \bmod p$ and the complete bipartite graph $K_{p^{\ell-1},p^{\ell-1}}$ is a subgraph of $H$.
\end{lemma}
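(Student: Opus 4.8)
The plan is to work in the subgraph basis via the multilinear polynomial $q_{\Phi}$, whose coefficient of $x^S$ equals $\widehat{\Phi}(K[S])$ by Lemma~\ref{lem:polynomial-alternating-enumerator}, to reduce everything modulo $p$, and to exhibit a monomial of $q_{\Phi}$ whose support contains the edge set of a copy of $K_{m,m}$ with $m:=p^{\ell-1}$. Fix a partition $[k]=A\sqcup B\sqcup R$ with $|A|=|B|=m$, let $J:=\{ab\mid a\in A,\ b\in B\}$ be the edge set of the complete bipartite graph $K_{A,B}$ joining $A$ and $B$ (so $|J|=m^{2}=p^{2(\ell-1)}$), and fix identifications $A\cong B\cong\ZZ_m$ so that $\Gamma:=\ZZ_m\times\ZZ_m$ acts on $[k]$ by rotating $A$ and $B$ independently and fixing $R$ pointwise; the induced action on $J$ is transitive. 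The key step is to find a graph $G_0$ on $[k]$ with $E(G_0)\cap J=\emptyset$ that is $\Gamma$-invariant and has $\Phi(G_0)=1$ but $\Phi(G_0\cup K_{A,B})=0$. Given such a $G_0$, restrict $q_{\Phi}$ over $\ZZ_p$ by keeping the variables $x_e$ with $e\in J$ free and setting $x_e:=[e\in E(G_0)]$ for $e\notin J$; this is the $\ZZ_p$-polynomial of $g\colon\{0,1\}^{J}\to\ZZ_p$, $g(\by)=\Phi(G_0\cup K[S_{\by}])$, where $S_{\by}\subseteq J$ is the set of selected edges. Since $G_0$ is $\Gamma$-invariant and $\Phi$ is an isomorphism invariant, $\Gamma$ embeds into $\Aut(g)$ and acts transitively on $J$, while $g(\vec 0)=\Phi(G_0)=1\neq 0=\Phi(G_0\cup K_{A,B})=g(\vec 1)$. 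As $|J|=p^{2(\ell-1)}$ is a power of $p$ with exponent $\geq 1$ (using $\ell\geq 2$), Lemma~\ref{lem:full-degree-prime-power} gives $\deg_p(g)=|J|$, i.e.\ the coefficient of $\prod_{e\in J}x_e$ in the restricted polynomial is nonzero modulo $p$; by the restriction formula~\eqref{eq:fourier-restriction-mod} it equals $\sum_{T\subseteq E(G_0)}\widehat{\Phi}(K[J\cup T])$ modulo $p$. Hence $\widehat{\Phi}(K[J\cup T])\neq 0\bmod p$ for some $T\subseteq E(G_0)$, and $H:=K_k[J\cup T]$ satisfies $\widehat{\Phi}(H)\neq 0\bmod p$ with $J=E(K_{A,B})\subseteq E(H)$, so $K_{p^{\ell-1},p^{\ell-1}}\subseteq H$. (For $\ell=1$, $K_{1,1}=K_2$, and one argues directly: with $J=\{e^{*}\}$ a single edge of an edge-minimal graph $G^{+}$ with $\Phi(G^{+})=0$ and $G_0:=G^{+}-e^{*}$, the restriction $g$ is a nonconstant affine $\ZZ_p$-function of degree $1$, and the same lifting gives $K_2\subseteq H$.)

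It remains to construct the $\Gamma$-invariant base graph $G_0$ for $\ell\geq 2$; this is the combinatorial core and I would follow Dodis and Khanna. Let $M:=K_{m,m}\cup I_{k-2m}$ be the disjoint union of a $K_{m,m}$ with $k-2m$ isolated vertices, and let $N$ be the graph obtained from $K_k$ by deleting the edges in $J$. If $\Phi(M)=0$, take $G_0:=I_k$: it is trivially $\Gamma$-invariant, $\Phi(I_k)=1$ since $\Phi$ is monotone and not $k$-trivial, and $\Phi(I_k\cup K_{A,B})=\Phi(M)=0$. If $\Phi(M)=1$ and $\Phi(N)=1$, take $G_0:=N$: it is $\Gamma$-invariant, $\Phi(G_0)=1$, and $G_0\cup K_{A,B}=K_k$ with $\Phi(K_k)=0$. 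The remaining case, $\Phi(M)=1$ and $\Phi(N)=0$, is the delicate one: one fixes an edge-minimal graph $G^{+}$ with $\Phi(G^{+})=0$ (so every proper subgraph of $G^{+}$ satisfies $\Phi$), uses that $\Phi(M)=1$ prevents $G^{+}$ from embedding into $M$, and — the technical heart — builds a $\Gamma$-invariant graph $G_0$ with no $A$-$B$ edge (say from suitable circulant graphs inside $A$ and $B$ together with bundles joining $A$ or $B$ to $R$, chosen relative to the structure of $G^{+}$) such that $\Phi(G_0)=1$ while $G_0\cup K_{A,B}$ contains a copy of $G^{+}$ and hence $\Phi(G_0\cup K_{A,B})=0$.

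The main obstacle is exactly this last construction. Adding all of $K_{A,B}$ to $G_0$ is a large step — it makes every vertex of $A$ and $B$ have degree at least $m$ and creates many complete bipartite and dense substructures — so it is comparatively easy to force $\Phi(G_0\cup K_{A,B})=0$; the difficulty is to keep $G_0$ itself $\Gamma$-invariant \emph{and} on the $\Phi=1$ side, since the $\Gamma$-orbit closure of a natural candidate (the part of $G^{+}$ avoiding $J$) typically blows up far beyond $G^{+}$ (circulant closures inside $A$ and $B$, full bundles into $R$). Dodis and Khanna handle this by a careful choice of the embedding of a minimal obstruction relative to the bipartition; our setting additionally requires the small extension from $p=2$ to arbitrary primes $p$ (for $p=2$ one has $R=\emptyset$, $m=k/2$, and the conclusion $K_{k/2,k/2}\subseteq H$ matches the bound $\ell>k/4$ once general $k$ is reduced to prime powers). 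Everything else — the passage between $\widehat{\Phi}$, $q_{\Phi}$ and subcube restrictions, and the evasiveness-type Lemma~\ref{lem:full-degree-prime-power} — is routine.
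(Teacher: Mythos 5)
Your overall framework---restrict $q_\Phi$ over $\ZZ_p$ to the bipartite edge set $J$ between two size-$p^{\ell-1}$ blocks, fix the remaining coordinates to a highly symmetric graph $G_0$ with $\Phi(G_0)=1$ and $\Phi(G_0\cup K_{A,B})=0$, invoke the prime-power evasiveness lemma (Lemma~\ref{lem:full-degree-prime-power}) on the transitive restriction, and lift the full-degree monomial back via Equation~\eqref{eq:fourier-restriction-mod} and Lemma~\ref{lem:polynomial-alternating-enumerator}---is exactly the paper's framework. But there is a genuine gap at the point you yourself flag: the construction of $G_0$ in the case $\Phi(M)=1$, $\Phi(N)=0$ is not carried out, and this is the combinatorial core of the lemma. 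Your sketch (embed an edge-minimal obstruction $G^+$ into $G_0\cup K_{A,B}$ while keeping $G_0$ circulant-symmetric and on the $\Phi=1$ side) runs into precisely the difficulty you describe: the symmetric closure of the part of $G^+$ avoiding $J$ can overshoot $\Phi$'s threshold, and nothing in your argument controls this. As written, the proof is incomplete.

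The paper avoids minimal obstructions entirely via a \emph{double threshold} argument along chains of symmetric graphs. First, it considers the nested equivalence relations with blocks of size $p^i$, giving a chain $I_k=F_0\subseteq F_1\subseteq\dots\subseteq F_\ell=K_k$ of disjoint unions of equal-size cliques; monotonicity and non-triviality yield a transition index $i^*$ with $\Phi(F_{i^*})=1$, $\Phi(F_{i^*+1})=0$. Second, starting from $H_0=F_{i^*}$, it adds complete bipartite bundles between the $p$ intervals $A_1,\dots,A_p$ of size $p^{\ell-1}$ one pair at a time; since the last graph contains $F_{i^*+1}$, there is a second transition $j^*$, and $H^*=H_{j^*}$ together with the next bundle $K_{V,W}$ gives exactly the pair $(G_0,\,G_0\cup K_{V,W})$ you need. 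Crucially, every graph in this chain is automatically symmetric enough (each $H^*[A_i]$ is a disjoint union of equal cliques, and vertices in $A_i$ share all neighbors outside $A_i$), so transitivity of the restricted function on $J$ comes for free---no circulant constructions or embedding arguments are required. Also note that the automorphisms used in the paper's Claim~\ref{claim:f-transitive} fix $[k]\setminus A_i$ pointwise, which is what guarantees $\ba^{\bar\sigma}=\ba$; your rotation group $\ZZ_m\times\ZZ_m$ would serve the same purpose only if $G_0$ is genuinely invariant under it, which is exactly the unproven step. Your separate treatment of $\ell=1$ (where $|J|=1$ falls outside the hypotheses of Lemma~\ref{lem:full-degree-prime-power}) is a correct and worthwhile observation.
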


\begin{proof}
    For $i \in \{0,\dots,\ell\}$ we define an equivalence relation $\sim_i$ on $[k]$ via
    \[a \sim_i b \quad\iff\quad \left\lceil\frac{a}{p^i}\right\rceil = \left\lceil\frac{b}{p^i}\right\rceil.\]
    Also, we set $F_i$ to be the graph with vertex set $V(F_i) \coloneqq [k]$ and edge set $E(F_i) \coloneqq \{ab \mid a \neq b \wedge a \sim_i b\}$.
    Note that $F_i$ is a disjoint union of $p^{\ell-i}$ many complete graphs of size $p^i$.
    Moreover,
    \[\emptyset = E(F_0) \subseteq E(F_1) \subseteq \dots \subseteq E(F_\ell) = \binom{[k]}{2}.\]
    Since $\Phi$ is monotone and not $k$-trivial, there is some $i^* \in \{0,\dots,\ell-1\}$ such that $\Phi(F_i) = 1$ for all $i \leq i^*$, and $\Phi(F_i) = 0$ for all $i > i^*$.

    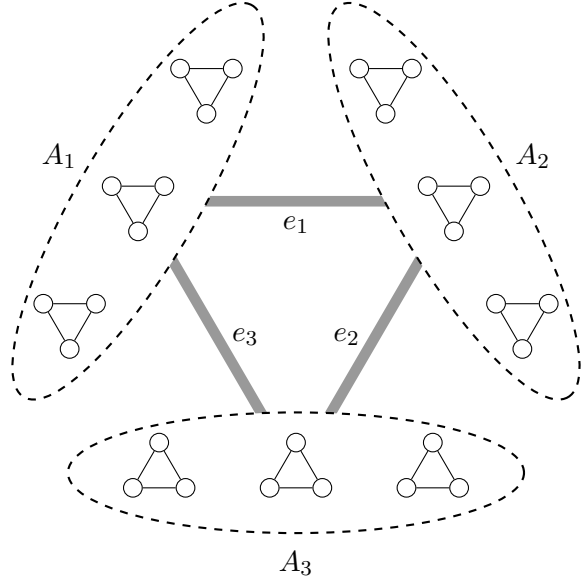
\begin{figure}
        \centering
        \begin{tikzpicture}
            \draw[dashed,thick,rotate around={-60:(30:2.4)},fill=white] (30:2.4) ellipse (3cm and 0.8cm);
            \node at (30:3.6) {$A_2$};
            \foreach[count = \i] \s in {-1,0,1}{
                \foreach[count = \j] \a in {0,120,240}{
                    \node[vertex] (v1-\i-\j) at ($(30:2.4) + (120:1.8*\s) + (30 + \a:0.4)$) {};
                }
                \path (v1-\i-1) edge (v1-\i-2);
                \path (v1-\i-1) edge (v1-\i-3);
                \path (v1-\i-2) edge (v1-\i-3);
            }
            \draw[dashed,thick,rotate around={60:(150:2.4)},fill=white] (150:2.4) ellipse (3cm and 0.8cm);
            \node at (150:3.6) {$A_1$};
            \foreach[count = \i] \s in {-1,0,1}{
                \foreach[count = \j] \a in {0,120,240}{
                    \node[vertex] (v2-\i-\j) at ($(150:2.4) + (240:1.8*\s) + (150 + \a:0.4)$) {};
                }
                \path (v2-\i-1) edge (v2-\i-2);
                \path (v2-\i-1) edge (v2-\i-3);
                \path (v2-\i-2) edge (v2-\i-3);
            }
            \draw[dashed,thick,fill=white] (270:2.4) ellipse (3cm and 0.8cm);
            \node at (270:3.6) {$A_3$};
            \foreach[count = \i] \s in {-1,0,1}{
                \foreach[count = \j] \a in {0,120,240}{
                    \node[vertex] (v3-\i-\j) at ($(270:2.4) + (0:1.8*\s) + (90 + \a:0.4)$) {};
                }
                \path (v3-\i-1) edge (v3-\i-2);
                \path (v3-\i-1) edge (v3-\i-3);
                \path (v3-\i-2) edge (v3-\i-3);
            }

            \scoped[on background layer]{
                \draw[line width=4pt,gray!80] (30:2.4) edge node[below,black] {$e_1$} (150:2.4);
                \draw[line width=4pt,gray!80] (30:2.4) edge node[left,black] {$e_2$} (270:2.4);
                \draw[line width=4pt,gray!80] (150:2.4) edge node[right,black] {$e_3$} (270:2.4);
            }
            
        \end{tikzpicture}
        \caption{Visualization of the construction of the graph $H^*$ in the proof of Lemma \ref{lem:alternating-enumerator-monotone-prime-power} for $p = 3$.
            Each thick gray edge represents a complete biclique between the corresponding sets.}
        \label{fig:levels-monotone}
    \end{figure}

    We define $H_0 \coloneqq F_{i^*}$ and define a second sequence of graphs (see also Figure \ref{fig:levels-monotone}).
    We partition the set $[k]$ into $p$ intervals $A_1,\dots,A_{p}$ of size $p^{\ell-1}$ each, i.e.,
    \[A_i \coloneqq \left\{(i-1) \cdot p^{\ell-1}+1,\dots,i \cdot p^{\ell-1} \right\}.\]
    Also, fix an arbitrary order $\{e_1,\dots,e_m\} = \{vw \mid v \neq w \in [p]\} = \binom{[p]}{2}$ on the two-elements subsets of $[p]$, and define $M_j \coloneqq \{e_1,\dots,e_j\}$ for all $j \in [p]$.
    We define the graph $H_j$ with vertex set $V(H_j) = [k]$ and edge set
    \[E(H_j) \coloneqq E(H_0) \cup \{ab \mid a \in A_v, b \in A_w, vw \in M_j\}.\]
    Observe that
    \[E(H_0) \subseteq E(H_1) \subseteq \dots \subseteq E(H_p).\]
    Also $\Phi(H_0) = 1$ and $\Phi(H_p) = 0$ since $H_p$ contains a subgraph isomorphic to $F_{i^*+1}$.
    So there is some $j^* \in \{0,\dots,p-1\}$ such that $\Phi(H_j) = 1$ for all $j \leq j^*$, and $\Phi(H_j) = 0$ for all $j > j^*$.
    Let us set $H^* \coloneqq H_{j^*}$ and $V \coloneqq A_v$ and $W \coloneqq A_w$ where $vw = e_{j^*}$.

    \begin{claim}
        \label{claim:aut-h-star}
        For every $i \in [p]$ and every $s,s' \in A_i$ there is an automorphism $\sigma \in \Aut(H^*)$ such that $\sigma(s) = s'$ and $\sigma(t) = t$ for all $t \in [k] \setminus A_i$.
    \end{claim}
    \begin{claimproof}
        This follows immediately from observing that $H^*[A_i]$ is a disjoint union of complete graphs of the same size, and every $s,s' \in A_i$ have the same neighbors outside of $A_i$ in $H^*$.
    \end{claimproof}

    Now, consider the polynomial $q_{\Phi}(\bx)$ defined over $\ZZ_p$ where $\bx = (x_e)_{e \in E(K_k)}$, and also let $f_\Phi\colon \{0,1\}^{E(K_k)} \to \ZZ_p$ be the function represented by $q_\Phi$.
    Let 
    \[J \coloneqq \{vw \in E(K_k) \mid v \in V, w \in W\}\]
    and $\overline{J} \coloneqq E(K_k) \setminus J$.
    Let $\ba = (a_e)_{e \in \overline{J}} \in \{0,1\}^{\overline{J}}$ be the tuple defined via
    \[a_e \coloneqq
    \begin{cases}
         1 &\text{if } e \in E(H^*)\\
         0 &\text{otherwise}
    \end{cases}.\]
    Now, consider the function $g = (f_\Phi)_{J|\ba} \colon \{0,1\}^{J} \to \ZZ_p$.
    
    We claim that $g$ satisfies the requirements of Lemma \ref{lem:full-degree-prime-power}.
    First, $|J| = |V \times W| = p^{\ell-1} \cdot p^{\ell-1} = p^{2\ell-2}$ is a prime power of $p$.
    Also, $g(\vec 0) = \Phi(H^*) = 1$ and $g(\vec 1) = \Phi(H_{j^*+1}) = 0$, so $g(\vec 0) \neq g(\vec 1)$.
    It remains to verify that $g$ is transitive.
    
    \begin{claim}
        \label{claim:f-transitive}
        $g$ is transitive.
    \end{claim}
    \begin{claimproof}
        Let $e_1,e_2 \in J$ and suppose $e_1 = s_1t_1$ and $e_2 = s_2t_2$ where $s_1,s_2 \in V$ and $t_1,t_2 \in W$.
        By Claim \ref{claim:aut-h-star} there is automorphism $\sigma \in \Aut(H^*)$ such that $\sigma(s_1) = s_2$ and $\sigma(t_1) = t_2$, and additionally $V = V^{\sigma} \coloneqq \{\sigma(s) \mid s \in V\}$ and $W = W^{\sigma} \coloneqq \{\sigma(t) \mid t \in W\}$.

        Now let $\bar\sigma \in \Sym(E(K_k))$ be defined via $\bar\sigma(st) \coloneqq \sigma(s)\sigma(t)$.
        Observe that $\bar\sigma(e_1) = e_2$ and $J^{\bar \sigma} = J$.
        We claim that $\bar\sigma|_J \in \Aut(g)$.
        Indeed, $\bar \sigma \in \Aut(f_\Phi)$ since $f_\Phi$ represents a graph property.
        Also, $\ba^{\bar\sigma} = (a_{\sigma(e)})_{e \in \overline{J}} = \ba$ since $\sigma \in \Aut(H^*)$.
        Together, this implies that $\bar\sigma|_J \in \Aut(g)$.
    \end{claimproof}
    By Lemma \ref{lem:full-degree-prime-power} we get that $\deg_p(g) = |J|$.
    By Equation \eqref{eq:fourier-restriction} and Lemma \ref{lem:polynomial-alternating-enumerator} this implies that there is some $J \subseteq S \subseteq E(K_k)$ such that $\altenum{\Phi}(K_k[S]) \neq 0 \bmod p$. 
    Hence, we can set $H \coloneqq K_k[S]$.
    Observe that $|V| = |W| = k/p = p^{\ell-1}$ and the complete bipartite graph between $V$ and $W$ is a subgraph of $H$ since $J \subseteq E(H)$.
\end{proof}

\begin{proof}[Proof of Theorem \ref{thm:alternating-enumerator-monotone}]
    Let $\ell$ be maximal such that $p^\ell \leq k$.
    We prove by induction on $d = k - p^\ell$ that there is some $k$-vertex graph $H$ such that $\altenum{\Phi}(H) \neq 0 \bmod p$ and $K_{p^{\ell-1},p^{\ell-1}}$ is a subgraph of $H$.

    For the base case $d = 0$ the statement immediately follows from Lemma \ref{lem:alternating-enumerator-monotone-prime-power}.
    So suppose $d > 0$.
    We distinguish three cases.
    \begin{itemize}
        \item First suppose that $\Phi(K_{1,k-1}) = 1$.
            We define the property $\Psi$ on $(k-1)$-vertex graphs via $\Psi(G) = 1$ if and only if $\Phi(G^+) = 1$ where $G^+$ is the graph obtained from $G$ by adding a universal vertex.
            Then $\Psi$ is a graph property that is monotone and non-trivial on $(k-1)$-vertex graphs.
            Indeed, observe that $\Psi(I_{k-1}) = \Phi(K_{1,k-1}) = 1$, where $I_{k-1}$ is the edge-less graph on $k-1$ vertices, and $\Psi(K_{k-1}) = \Phi(K_k) = 0$.

            By the induction hypothesis, we conclude that there is some $(k-1)$-vertex graph $H'$ such that $\altenum{\Psi}(H') \neq 0 \bmod p$ and $K_{p^{\ell-1},p^{\ell-1}}$ is a subgraph of $H'$.
            
            Now, let $J \coloneqq \{e \in E(K_k) \mid k \notin e\}$.
            Then
            \[f_\Psi = (f_\Phi)|_{J|\vec 1}.\]
            Using Equation \eqref{eq:fourier-restriction} and Lemma \ref{lem:polynomial-alternating-enumerator}, we conclude that there is some $k$-vertex graph $H$ such that $\altenum{\Phi}(H) \neq 0 \bmod p$ and $H' \subseteq H$.
        \item Next suppose that $\Phi(K_{k-1} + K_1) = 0$ where $K_{k-1} + K_1$ is the disjoint union of the complete graphs $K_{k-1}$ and $K_1$.
            We define the property $\Psi$ via $\Psi(G) = 1$ if and only if $\Phi(G + K_1) = 1$ where $G + K_1$ is the graph obtained from $G$ by adding an isolated vertex.
            Then $\Psi$ is a graph property that is monotone and non-trivial on $(k-1)$-vertex graphs.
            Indeed, observe that $\Psi(I_{k-1}) = \Phi(I_k) = 1$ and $\Psi(K_{k-1}) = \Phi(K_{k-1} + K_1) = 0$.

            By the induction hypothesis, we conclude that there is some $(k-1)$-vertex graph $H'$ such that $\widehat{\Psi}(H') \neq 0 \bmod p$ and $K_{p^{\ell-1},p^{\ell-1}}$ is a subgraph of $H'$.
            
            Now, let $J \coloneqq \{e \in E(K_k) \mid k \notin e\}$.
            Then
            \[f_\Psi = (f_\Phi)|_{J|\vec 0}.\]
            Using Equation \eqref{eq:fourier-restriction} and Lemma \ref{lem:polynomial-alternating-enumerator}, we conclude that there is some $k$-vertex graph $H$ such that $\altenum{\Phi}(H) \neq 0 \bmod p$ and $H' \subseteq H$.
        \item Finally, suppose neither of the previous two cases holds.
            Let $r \coloneqq k - 2p^{\ell - 1}$.
            Observe that $r > 0$ since $2p^{\ell - 1} < p^{\ell}$.
            Let $s \coloneqq p^{\ell - 1}$.
            Consider the graph $H^* = K_{r + s} + I_{s}$ on vertex set $[k]$, i.e., $H^*$ is a complete graph on $r+s$ vertices together with $s$ isolated vertices.
            Let $U,V,W$ be a partition of $[k]$ so that $|U| = r$, $|V| = |W| = s$ and $H^*[U \cup V]$ is a complete graph.

            Now, we proceed similarly to the proof of Lemma \ref{lem:alternating-enumerator-monotone-prime-power}.
            Consider the function $f_{\Phi}$ defined over $\ZZ_p$.
            Let 
            \[J \coloneqq \{vw \in E(K_k) \mid v \in V, w \in W\}\]
            and $\overline{J} \coloneqq E(K_k) \setminus J$.
            Let $\ba = (a_e)_{e \in \overline{J}} \in \{0,1\}^{\overline{J}}$ be the tuple defined via
            \[a_e \coloneqq
            \begin{cases}
                 1 &\text{if } e \in E(H^*)\\
                 0 &\text{otherwise}
            \end{cases}.\]
            Now, consider the function $g = (f_\Phi)_{J|\ba} \colon \{0,1\}^{J} \to \ZZ_p$.
    
            We claim that $g$ satisfies the requirements of Lemma \ref{lem:full-degree-prime-power}.
            First, $|J| = p^{\ell-1} \cdot p^{\ell-1} = p^{2\ell-2}$ is a prime power of $p$.
            Also, $g(\vec 1) = 0$ since otherwise $\Phi(K_{1,k-1}) = 1$ by monotonicity which leads to the first case.
            Moreover, $g(\vec 0) = 1$ since otherwise $\Phi(K_{k-1} + K_1) = 0$ by monotonicity which leads to the second case.
            In particular, $g(\vec 0) \neq g(\vec 1)$.
            Also, $g$ is transitive by the same arguments as in Claim \ref{claim:f-transitive}
            
            By Lemma \ref{lem:full-degree-prime-power} we get that $\deg_p(g) = |J|$.
            By Equation \eqref{eq:fourier-restriction} and Lemma \ref{lem:polynomial-alternating-enumerator} this implies that there is some $J \subseteq S \subseteq E(K_k)$ such that $\altenum{\Phi}(K_k[S]) \neq 0 \bmod p$. 
            Hence, we can set $H \coloneqq K_k[S]$.
            Observe that $|V| = |W| = s = p^{\ell-1}$ and the complete bipartite graph between $V$ and $W$ is a subgraph of $H$ since $J \subseteq E(H)$.
    \end{itemize}
    This completes the induction.
    To finish the proof we note that $p^{\ell-1} > \frac{k}{p^2}$.
\end{proof}

By combining Theorem \ref{thm:alternating-enumerator-monotone} for $p = 2$ with Theorems \ref{thm:hardness-w}\ref{item:hardness-w-1} and \ref{thm:hardness-eth}\ref{item:hardness-eth-1}, we obtain the following corollary which improves on the results obtained in \cite{DoringMW24}.

\begin{corollary}
    \label{cor:hardness-monotone}
    There are $N_0,\delta > 0$ such that the following holds.
    Let $k \geq N_0$ and let $\Phi$ be a $k$-vertex graph property that is monotone and not $k$-trivial.
    Then no algorithm solves $\sindsub{\Phi}$ in time $O(n^{\delta \cdot k})$ unless ETH fails.
    
    Moreover, for every computable graph property $\Phi$ that is monotone on $k$-vertex graphs and not $k$-trivial for infinitely many $k$, the problem $\pindsub{\Phi}$ is $\sharpwone$-hard.
\end{corollary}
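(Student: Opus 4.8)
The plan is to assemble the corollary from its algebraic core, Theorem~\ref{thm:alternating-enumerator-monotone} specialized to the prime $p=2$, together with the hardness-transfer results of Section~\ref{sec:hardness-templates}. Fix a $k$-vertex graph property $\Phi$ that is monotone and not $k$-trivial. Theorem~\ref{thm:alternating-enumerator-monotone} with $p=2$ hands us a $k$-vertex graph $H$ with $\widehat{\Phi}(H) \neq 0 \bmod 2$ and $K_{\ell,\ell} \subseteq H$ for some $\ell > k/4$. Since $\Phi$ is $\{0,1\}$-valued, $\widehat{\Phi}(H)$ is an integer, and being odd it is in particular nonzero over $\QQ$; so $\widehat{\Phi}(H) \neq 0$.

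For the ETH lower bound I would feed $H$ into Theorem~\ref{thm:hardness-eth}\ref{item:hardness-eth-2}, which requires a \emph{single} parameter $\ell'$ with simultaneously $K_{\ell',\ell'} \subseteq H$ and $|E(H)| \geq k\cdot\ell' \geq N_0$. The biclique of size $\ell > k/4$ is too large to also clear the edge-count threshold (a priori $|E(H)|$ could be as small as $\ell^2 < k\ell$), so the only subtlety is to step down to a smaller biclique: take $\ell' \approx k/16$, i.e.\ $\ell' := \lfloor k/16\rfloor$. Then $\ell' < k/4 < \ell$ gives $K_{\ell',\ell'} \subseteq K_{\ell,\ell} \subseteq H$, while $|E(H)| \geq |E(K_{\ell,\ell})| = \ell^2 > (k/4)^2 \geq k\cdot\ell'$, and $k\cdot\ell' \geq N_0$ once $k$ exceeds a threshold depending only on $N_0$. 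Theorem~\ref{thm:hardness-eth}\ref{item:hardness-eth-2} then forbids solving $\sindsub{\Phi}$ in time $O(n^{\alpha_{\textsc{ind}}\cdot\ell'})$ under ETH, and since $\ell' = \Omega(k)$ this is $\Omega(n^{\delta k})$ for the absolute constant $\delta := \alpha_{\textsc{ind}}/32$; absorbing all thresholds into the corollary's $N_0$ finishes this part.

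For $\sharpwone$-hardness, enumerate the infinitely many indices $k_1 < k_2 < \cdots$ at which $\slice{\Phi}{k}$ is monotone and not $k$-trivial, and for each $i$ let $H^{(i)}$ be the $k_i$-vertex graph supplied by Theorem~\ref{thm:alternating-enumerator-monotone} (with $p=2$) applied to $\slice{\Phi}{k_i}$. As $H^{(i)}$ has $k_i$ vertices we have $\widehat{\Phi}(H^{(i)}) = \widehat{\slice{\Phi}{k_i}}(H^{(i)}) \neq 0$, and from $K_{\ell_i,\ell_i} \subseteq H^{(i)}$ with $\ell_i > k_i/4$, together with subgraph-monotonicity of tree-width and $\tw(K_{\ell_i,\ell_i}) = \ell_i$, we get $\tw(H^{(i)}) > k_i/4 \to \infty$. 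Thus $(H^{(i)})_{i\geq 1}$ is a sequence of graphs of unbounded tree-width with $\widehat{\Phi}$ nonvanishing on every term, and Theorem~\ref{thm:hardness-w}\ref{item:hardness-w-1} delivers $\sharpwone$-hardness of $\pindsub{\Phi}$.

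I expect no genuine obstacle at this stage, since all of the work is already done in Theorem~\ref{thm:alternating-enumerator-monotone} (following \cite{DodisK99}); the proof is just an assembly. The one thing to be careful about is the elementary arithmetic converting the guarantee ``$K_{\ell,\ell}\subseteq H$ with $\ell>k/4$'' into the shape ``$|E(H)|\geq k\ell'$ with $K_{\ell',\ell'}\subseteq H$'' demanded by Theorem~\ref{thm:hardness-eth}\ref{item:hardness-eth-2}, plus the bookkeeping of $N_0$. (Running the same argument with a general prime $p$ and the modular transfer statement Theorem~\ref{thm:hardness-eth}\ref{item:hardness-eth-3} would in addition give the modular-counting variant of Theorem~\ref{thm:intro-hardness-monotone}, but that is outside the scope of this particular corollary.)
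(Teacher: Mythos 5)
Your proposal is correct and follows exactly the paper's route: the paper derives this corollary in one line by combining Theorem~\ref{thm:alternating-enumerator-monotone} with $p=2$, Theorem~\ref{thm:hardness-w}\ref{item:hardness-w-1}, and Theorem~\ref{thm:hardness-eth}\ref{item:hardness-eth-2}. Your explicit step-down to $\ell' = \lfloor k/16\rfloor$ to satisfy the hypothesis $|E(H)| \geq k\cdot\ell' \geq N_0$ of Theorem~\ref{thm:hardness-eth}\ref{item:hardness-eth-2} is a detail the paper leaves implicit, and you handle it correctly.
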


Also, by combining Theorem \ref{thm:alternating-enumerator-monotone} with Theorems \ref{thm:hardness-w}\ref{item:hardness-w-2} and \ref{thm:hardness-eth}\ref{item:hardness-eth-3}, we obtain hardness for modular counting.

\begin{corollary}
    \label{cor:hardness-monotone-mod}
    Let $p$ be a fixed prime.
    Then there are $N_0,\delta > 0$ such that the following holds.
    Let $k \geq N_0$ and let $\Phi$ be a $k$-vertex graph property that is monotone and not $k$-trivial.
    Then no algorithm solves $\smodindsub{p}{\Phi}$ in time $O(n^{\delta \cdot k})$ unless rETH fails.
    
    Moreover, for every computable graph property $\Phi$ that is monotone on $k$-vertex graphs and not $k$-trivial for infinitely many $k$, the problem $\pmodindsub{p}{\Phi}$ is $\modwone{p}$-hard.
\end{corollary}

Together, Corollary \ref{cor:hardness-monotone} and \ref{cor:hardness-monotone-mod} restate Theorem \ref{thm:intro-hardness-monotone}.

\subsection{Fully Symmetric Properties}

We say that a $k$-vertex graph property $\Phi$ is \emph{fully symmetric} if $\Phi(G) = \Phi(G')$ for all $k$-vertex graphs $G,G'$ with $|E(G)| = |E(G')|$.
In other words, $\Phi$ is fully symmetric if it only depends on the number of edges.
As usual, we say a graph property $\Phi$ is \emph{fully symmetric on $k$-vertex graphs} if $\slice{\Phi}{k}$ is fully symmetric.
After a series of partial results on the hardness of fully symmetric properties \cite{JerrumM15,JerrumM17,RothS20}, it was shown in \cite{RothSW24} that $\sindsub{\Phi}$ is hard for every non-trivial, fully symmetric $k$-vertex graph property.
In the following, we give an alternative proof of hardness that is arguably simpler.

Our proof relies on the following theorem from \cite{GathenR97}.
A Boolean function $f\colon \{0,1\}^m \to \RR$ is \emph{fully symmetric} if $f(\ba) = f(\ba^{\sigma})$ for every $\ba \in \{0,1\}^m$ and every $\sigma \in S_m$, i.e., $f(\ba)$ only depends on the number of $1$-entries in $\ba$.
Observe that a property $\Phi$ on $k$-vertex graphs is fully symmetric if and only if $f_\Phi$ is fully symmetric.

\begin{theorem}[{\cite[Theorem 2.8]{GathenR97}}]
    \label{thm:deg-fully-symmetric}
    Let $f\colon \{0,1\}^m \to \{0,1\}$ be a Boolean function that is fully symmetric and non-constant.
    Then $\deg(f) \geq p - 1$ where $p$ is the largest prime number such that $p \leq m+1$.
\end{theorem}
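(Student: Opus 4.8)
The plan is to reduce the statement about the multilinear Fourier expansion of the $m$-variate function $f$ to a statement about a single univariate polynomial, and then pit the $\{0,1\}$-valuedness of $f$ against divisibility by the prime $p$. Via the affine, degree-preserving change of variables $x_i \mapsto (1-x_i)/2$ it suffices to treat a symmetric non-constant $f\colon\{0,1\}^m\to\{0,1\}$. For $k\in\{0,\dots,m\}$ write $f_k\in\{0,1\}$ for the value of $f$ on inputs of Hamming weight $k$ (these are not all equal), and let $P\in\QQ[x]$ be the interpolating polynomial of degree at most $m$ with $P(k)=f_k$. The first step is to establish $\deg(f)=\deg(P)$. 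For the inequality $\deg(P)\le\deg(f)$ one applies Minsky--Papert symmetrization to the multilinear representation of $f$: averaging over all coordinate permutations leaves $f$ unchanged (since $f$ is symmetric) and produces a symmetric polynomial of degree at most $\deg(f)$, which on $\{0,1\}^m$ collapses---using that $\sum_i x_i^{\,j}=\sum_i x_i$ there---to a univariate polynomial in $\sum_i x_i$ of degree at most $\deg(f)$ that agrees with $P$ on $\{0,\dots,m\}$, hence equals $P$. Conversely, multilinearizing $P(x_1+\dots+x_m)$ over $\{0,1\}^m$ yields a polynomial of degree at most $\deg(P)$ representing $f$, so $\deg(f)\le\deg(P)$ by uniqueness of the multilinear representation.

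Now suppose for contradiction that $\deg(f)=\deg(P)\le p-2$. Since $p\le m+1$ we have $p-1\le m$, so for every shift $t$ with $0\le t\le m+1-p$ the window $\{t,\dots,t+p-1\}$ lies inside $\{0,\dots,m\}$. As $\deg(P)\le p-2$, its $(p-1)$-st finite difference vanishes identically, yielding the integer identity $\sum_{j=0}^{p-1}(-1)^{p-1-j}\binom{p-1}{j}P(t+j)=0$. Reducing modulo $p$ and using $\binom{p-1}{j}\equiv(-1)^{j}\pmod p$, the per-term signs cancel up to a global $(-1)^{p-1}$, so $\sum_{j=0}^{p-1} f_{t+j}\equiv 0\pmod p$. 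Here the hypothesis $f\in\{0,1\}$ is essential: a sum of exactly $p$ values from $\{0,1\}$ that is divisible by $p$ must equal $0$ or $p$, hence within each window all the $f_{t+j}$ coincide. Consecutive windows overlap in $p-1\ge 1$ positions (we have $p\ge 2$ since $m+1\ge 2$), and together they cover $\{0,\dots,m\}$; therefore $f_0=\dots=f_m$, contradicting non-constancy. Thus $\deg(f)\ge p-1$.

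The step I expect to require the most care is the symmetrization equivalence $\deg(f)=\deg(P)$: one needs the full Minsky--Papert lemma and the fact that multilinearization never raises the degree, so that symmetric Boolean functions satisfy this with \emph{equality}, not merely $\deg(f)\le\deg(P)$. Everything after that is elementary, modulo spelling out $\binom{p-1}{j}\equiv(-1)^j$, checking the degenerate case $p=m+1$ (a single window, handled directly), and the tiny values of $m$.
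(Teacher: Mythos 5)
Your proof is correct. Note that the paper does not actually prove this statement -- it imports it as \cite[Theorem 2.8]{GathenR97} and only remarks that the stated version follows from the proof there -- so there is no in-paper argument to compare against; your argument (Minsky--Papert symmetrization to reduce to the univariate interpolating polynomial $P$, then the vanishing $(p-1)$-st finite difference combined with $\binom{p-1}{j}\equiv(-1)^j \pmod p$ and the $\{0,1\}$-valuedness to force all values in each length-$p$ window to coincide, then chaining overlapping windows) is precisely the standard proof from the cited source. The only cosmetic quibble is your parenthetical justification of the symmetrization collapse via ``$\sum_i x_i^j=\sum_i x_i$''; the cleaner statement is that the symmetrized monomials are scalar multiples of the elementary symmetric polynomials $e_j$, which on $\{0,1\}^m$ equal $\binom{\sum_i x_i}{j}$, a degree-$j$ univariate polynomial in $\sum_i x_i$.
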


We note that \cite[Theorem 2.8]{GathenR97} is stated in a slightly different form, but the version stated here immediately follows from the proof.
We also remark that for every $m \geq 2$ there is a prime $p \leq m$ such that $p \geq m - O(m^{0.525})$ using known estimates on the maximal gap between consecutive prime numbers \cite{BakerHP01}.

\begin{theorem}
    \label{thm:alternating-enumerator-symmetric}
    Let $k \geq 2$.
    Suppose $\Phi$ is a $k$-vertex graph property that is fully symmetric and not $k$-trivial.
    Then there is a $k$-vertex graph $H$ such that $\altenum{\Phi}(H) \neq 0$ and $K_\ell$ is a subgraph of $H$ where $\ell \geq 2$ is maximal such that
    \[\binom{\ell}{2}+1 \leq p \leq \binom{k}{2}+1\]
    for some prime $p$.
\end{theorem}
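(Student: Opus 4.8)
The plan is to push the statement through the Boolean function $f_\Phi$ of arity $m = \binom{k}{2}$. Since $\Phi$ is fully symmetric on $k$-vertex graphs, $f_\Phi(\bx)$ depends only on the number of $-1$-entries of $\bx$; that is, $f_\Phi\colon\{-1,1\}^{E(K_k)}\to\{0,1\}$ is a fully symmetric Boolean function, and it is non-constant because $\Phi$ is not $k$-trivial. Hence Theorem~\ref{thm:deg-fully-symmetric} applies and yields $\deg(f_\Phi)\ge p^\ast-1$, where $p^\ast$ is the largest prime with $p^\ast\le m+1=\binom{k}{2}+1$. Write $d\coloneqq\deg(f_\Phi)$.

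Next I would exploit the symmetry at the level of Fourier coefficients. Every coordinate permutation $\sigma\in S_m$ is an automorphism of $f_\Phi$, so by uniqueness of the multilinear representation (the $\{-1,1\}$-analogue of Claim~\ref{claim:coefficients-equal-in-orbit}, and a standard fact for symmetric functions) the Fourier coefficient $\widehat{f}_\Phi(S)$ depends only on $|S|$. Since $\deg(f_\Phi)=d$ there is a $d$-element set with nonzero coefficient, hence \emph{every} $d$-element $S\subseteq E(K_k)$ satisfies $\widehat{f}_\Phi(S)\ne0$; moreover each such $S$ is automatically $\widehat{f}_\Phi$-maximal, as no set of size $>d$ has a nonzero coefficient. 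Note this reshaping step is essential: an arbitrary $\widehat{f}_\Phi$-maximal set of size $\ge\binom{\ell}{2}$ (e.g.\ a matching) need not contain an $\ell$-clique, and it is the full symmetry that lets us place the nonzero coefficient on a clique-supporting set.

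Then I would unwind the definition of $\ell$: a prime $p$ with $\binom{\ell}{2}+1\le p\le\binom{k}{2}+1$ exists if and only if $\binom{\ell}{2}+1\le p^\ast$, so $\ell$ is the largest integer $\ge2$ with $\binom{\ell}{2}\le p^\ast-1$; together with the degree bound this gives $\binom{\ell}{2}\le p^\ast-1\le d\le\binom{k}{2}$, and in particular $\ell\le k$. Now choose $S\subseteq E(K_k)$ with $|S|=d$ that contains the $\binom{\ell}{2}$ edges of an $\ell$-clique on vertices of $[k]$, padding with arbitrary further edges (possible since $\binom{\ell}{2}\le d\le\binom{k}{2}$). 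By the previous paragraph $\widehat{f}_\Phi(S)\ne0$ and $S$ is $\widehat{f}_\Phi$-maximal, so Lemma~\ref{lem:alternating-enumerator-vs-fourier-coefficient} makes $S$ also $\widehat{\Phi}$-maximal, whence $\widehat{\Phi}(K_k[S])\ne0$. Setting $H\coloneqq K_k[S]$ finishes the proof, since $K_\ell\subseteq H$ by construction.

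The only step that needs a little care is the claim that $\widehat{f}_\Phi(S)$ depends only on $|S|$: one must invoke the orbit argument in the real-valued $\{-1,1\}$ Fourier setting rather than the modular setting of Section~\ref{sec:hardness-monotone}, but the proof of Claim~\ref{claim:coefficients-equal-in-orbit} transfers verbatim, so I do not anticipate a genuine obstacle here; the substantive input is entirely Theorem~\ref{thm:deg-fully-symmetric}.
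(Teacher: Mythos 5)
Your proof is correct, and it takes a genuinely different route to the same conclusion. Both arguments hinge on the same external input (Theorem~\ref{thm:deg-fully-symmetric}), but the paper first restricts $f_\Phi$ to a subcube $J$ of size exactly $p-1$ chosen to contain $E(K_\ell)$, proves with a separate claim that this restriction is still fully symmetric and non-constant, applies the degree bound to the restricted function (which then has \emph{full} degree, forcing $\widehat{f^*}(J)\neq 0$), and lifts back via the restriction formula \eqref{eq:fourier-restriction}. You instead apply the degree bound directly to $f_\Phi$ at arity $\binom{k}{2}$ and then exploit the fact that full symmetry forces $\widehat{f}_\Phi(S)$ to depend only on $|S|$ (the orbit argument you cite is standard and transfers verbatim to the real-valued setting), so that the entire top Fourier level $d=\deg(f_\Phi)\geq p^*-1$ is nonzero and every $d$-set is automatically $\widehat{f}_\Phi$-maximal; you then simply choose a $d$-set containing $E(K_\ell)$, which is possible since $\binom{\ell}{2}\leq p^*-1\leq d\leq\binom{k}{2}$. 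Your route trades the paper's restriction step and its non-constancy argument for the symmetry of the Fourier coefficients, and your handling of maximality (nothing above level $d$ survives, so no separate maximalization is needed) is cleaner than the paper's ``choose $S$ inclusion-wise maximal'' step. Both yield the same quantitative bound.
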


\begin{proof}
    Let $m \coloneqq p-1$ and note that $m \geq 1$.
    Then
    \[\binom{\ell}{2} \leq m \leq \binom{k}{2}.\]
    So there is a set $J \subseteq E(K_k)$ such that $|J| = m$ and $E(K_\ell) \subseteq J$.
    Let us write $\overline{J} \coloneqq E(K_k) \setminus J$.
    
    \begin{claim}
        There is a tuple $\bz \in \{0,1\}^{\overline{J}}$ such that the restriction
        \[g_{\bz} \coloneqq (f_{\Phi})_{J|\bz}\colon \{0,1\}^{J} \to \{0,1\}\]
        of $f_{\Phi}$ to $J$ using $\bz$ is fully symmetric and non-constant.
    \end{claim}
    \begin{claimproof}
        We first show that $g_{\bz} = (f_{\Phi})_{J|\bz}$ is fully symmetric for every tuple $\bz \in \{0,1\}^{\overline{J}}$.
        Indeed, let $\sigma \in \Sym(J)$ be a permutation of $J$, and let $\pi \in \Sym(E(K_k))$ be the unique extension of $\sigma$ that fixes all points outside of $J$.
        Then $g_{\bz}(\ba) = f_\Phi(\ba,\bz) = f_\Phi((\ba,\bz)^{\pi}) = f_\Phi((\ba^{\sigma},\bz)) = g_{\bz}(\ba^{\sigma})$ where the second equality holds since $f_\Phi$ is fully symmetric.

        So it remains to argue that there is some $\bz \in \{0,1\}^{\overline{J}}$ such that $g_{\bz}$ is non-constant.
        Suppose towards a contradiction that $g_{\bz}$ is constant for every $\bz \in \{0,1\}^{\overline{J}}$.
        This means that for every $\bz \in \{0,1\}^{\overline{J}}$ there is some $\alpha(\bz) \in \{0,1\}$ such that $g_{\bz}$ always evaluates to $\alpha(\bz)$.
        Since $f_\Phi$ is non-constant, there are $\bz_1,\bz_2 \in \{0,1\}^{\overline{J}}$ such that $\alpha(\bz_1) \neq \alpha(\bz_2)$.
        Additionally, we may assume that $\bz_1$ and $\bz_2$ only differ in one entry.
        But now there are $\by_1,\by_2 \in \{0,1\}^{J}$ such that $(\by_1,\bz_1)$ and $(\by_2,\bz_2)$ have the same number of $1$-entries since $m \geq 1$.
        So $f_\Phi(\by_1,\bz_1) = \alpha(\bz_1) \neq \alpha(\bz_2) = f_\Phi(\by_2,\bz_2)$.
        But this is a contradiction since $f_\Phi$ is fully symmetric.
    \end{claimproof}
    
    So $\deg(g_{\bz}) = m$ by Theorem \ref{thm:deg-fully-symmetric} which implies that $\altenum{g_{\bz}}(J) \neq 0$.
    By Equation \eqref{eq:fourier-restriction} there is some set $T \subseteq \overline{J}$ such that $\altenum{f_\Phi}(J \cup T) \neq 0$.
    So $\altenum{\Phi}(K_k[J \cup T]) \neq 0$ by Lemma \ref{lem:polynomial-alternating-enumerator}.
    To finish the proof observe that $K_\ell$ is a subgraph of $H \coloneqq K_k[J \cup T]$ since $E(K_\ell) \subseteq J$.
\end{proof}

Note that, by Bertrand's Postulate, it is always possible to choose $\ell \geq \frac{k}{2}$.
In fact, using the estimate on the maximal gap between consecutive prime numbers \cite{BakerHP01} discussed above, a simple calculation shows that it is possible to choose $\ell \geq k - O(k^{0.05})$.
So by combining Theorem \ref{thm:alternating-enumerator-symmetric} with Theorems \ref{thm:hardness-w}\ref{item:hardness-w-1} and \ref{thm:hardness-eth}\ref{item:hardness-eth-1}, we obtain the following.

\begin{corollary}[Theorem \ref{thm:intro-hardness-symmetric} restated]
    \label{cor:hardness-symmetric}
    There are $N_0,\delta > 0$ such that the following holds.
    Let $k \geq N_0$ and let $\Phi$ be a $k$-vertex graph property that is fully symmetric and not $k$-trivial.
    Then no algorithm solves $\sindsub{\Phi}$ in time $O(n^{\delta \cdot k})$ unless ETH fails.
    
    Moreover, for every computable graph property $\Phi$ that is fully symmetric on $k$-vertex graphs and not $k$-trivial for infinitely many $k$, the problem $\pindsub{\Phi}$ is $\sharpwone$-hard.
\end{corollary}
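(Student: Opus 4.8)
The plan is to combine Theorem~\ref{thm:alternating-enumerator-symmetric} with the generic hardness-transfer results of Section~\ref{sec:hardness-templates}, exactly as indicated in the sentence preceding the corollary. Fix $k$ and a $k$-vertex graph property $\Phi$ that is fully symmetric and not $k$-trivial. First I would apply Theorem~\ref{thm:alternating-enumerator-symmetric} to obtain a $k$-vertex graph $H$ with $\widehat{\Phi}(H)\neq 0$ such that $K_{\ell_0}$ is a subgraph of $H$, where $\ell_0$ is the clique size guaranteed there; by Bertrand's postulate one may take $\ell_0\geq k/2$, as noted right after that theorem, so $\ell_0$ grows linearly in $k$.

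For the fine-grained lower bound under ETH, I would feed this $H$ into Theorem~\ref{thm:hardness-eth}\ref{item:hardness-eth-2}. That statement needs, for some parameter $\ell$, both a biclique $K_{\ell,\ell}\subseteq H$ and the density bound $|E(H)|\geq k\ell\geq N_0$. The single clique $K_{\ell_0}\subseteq H$ already supplies both ingredients: it contains $K_{\ell,\ell}$ whenever $2\ell\leq\ell_0$, and it forces $|E(H)|\geq\binom{\ell_0}{2}=\Theta(k^2)$. Choosing $\ell\coloneqq\min\{\lfloor\ell_0/2\rfloor,\lfloor\binom{\ell_0}{2}/k\rfloor\}$ meets both constraints simultaneously, and one checks that $\ell=\Theta(k)$ and that $k\ell\geq N_0$ holds once $k$ exceeds an absolute threshold. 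Theorem~\ref{thm:hardness-eth}\ref{item:hardness-eth-2} then shows that $\sindsub{\Phi}$ has no $O(n^{\alpha_{\textsc{ind}}\cdot\ell})$-time algorithm unless ETH fails; since $\ell\geq\delta\cdot k$ for a suitable absolute $\delta>0$, this yields the claimed bound $O(n^{\delta\cdot k})$ for an appropriate $\delta$ and $N_0$. (A sharper estimate on prime gaps, as remarked after Theorem~\ref{thm:alternating-enumerator-symmetric}, would push the constant toward its optimal value, but this is not needed.)

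For the $\sharpwone$-hardness part, I would invoke Theorem~\ref{thm:hardness-w}\ref{item:hardness-w-1}. Let $k_1<k_2<\cdots$ enumerate the (infinitely many, by hypothesis) integers $k$ for which $\slice{\Phi}{k}$ is fully symmetric and not $k$-trivial. Applying Theorem~\ref{thm:alternating-enumerator-symmetric} to each $\slice{\Phi}{k_j}$ produces a graph $H_j$ with $\widehat{\Phi}(H_j)\neq 0$ and $K_{\ell_0(k_j)}\subseteq H_j$, hence $\tw(H_j)\geq\ell_0(k_j)-1\geq k_j/2-1\to\infty$. Thus $(H_j)_j$ is a sequence of graphs of unbounded tree-width with non-zero alternating enumerator, and since $\Phi$ is computable, Theorem~\ref{thm:hardness-w}\ref{item:hardness-w-1} gives that $\pindsub{\Phi}$ is $\sharpwone$-hard.

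I do not expect a genuine obstacle: once Theorem~\ref{thm:alternating-enumerator-symmetric} is available, the corollary is a matter of assembling established pieces. The only point that needs a little care is verifying that a single $\Theta(k)$-clique inside $H$ is simultaneously dense enough for the density hypothesis of Theorem~\ref{thm:hardness-eth}\ref{item:hardness-eth-2} and wide enough to contain a balanced biclique with $\Theta(k)$ vertices per side, so that the ETH lower bound has exponent linear in $k$ rather than merely $k/\sqrt{\log k}$; since $K_m$ has $\binom{m}{2}$ edges and contains $K_{\lfloor m/2\rfloor,\lfloor m/2\rfloor}$, this works out.
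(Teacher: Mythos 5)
Your proposal is correct and follows exactly the paper's route: the paper derives this corollary by combining Theorem~\ref{thm:alternating-enumerator-symmetric} (with the Bertrand-postulate bound $\ell \geq k/2$) with Theorem~\ref{thm:hardness-w}\ref{item:hardness-w-1} for the $\sharpwone$-hardness and Theorem~\ref{thm:hardness-eth}\ref{item:hardness-eth-2} for the ETH bound. Your explicit check that a single $\Theta(k)$-clique simultaneously supplies the biclique $K_{\ell,\ell}$ and the edge-density hypothesis $|E(H)|\geq k\ell$ is a detail the paper leaves implicit, and your choice of $\ell$ handles it correctly.
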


\begin{remark}
    We note that previous techniques by Roth et al.~\cite{RothSW24} only rule out an algorithm solving $\sindsub{\Phi}$ in time $O(n^{\delta \cdot k/\sqrt{\log k} })$ assuming ETH.
    However, combining the previous techniques with Theorem \ref{fact:density-makes-hard}, which was shown only very recently~\cite{CurticapeanDNW25}, a lower bound of $O(n^{\delta \cdot k})$ under ETH would also follow.

    On the other hand, our arguments remain valid even without the tighter lower bounds provided by Theorem \ref{fact:density-makes-hard}, since we obtain a $k$-vertex graph $H$ with $\altenum{\Phi}(H) \neq 0$ such that $H$ is not only dense, but it even contains an $\ell$-clique for $\ell \geq k/2$.
    Under different complexity assumptions than ETH, this can yield stronger lower bounds; see Section~\ref{sec:wl}.
\end{remark}

\subsection{Invariants with Few Distinct Hamming Weights}

In this section, we use the polynomial representation of graph invariants to derive lower bounds for $\pindsub{\Phi}$ when $\Phi$ avoids a superlinear number of Hamming weights.
In particular, our theorem allows us to recover \cite[Theorem~3.4]{JerrumM15} on hardness for $\Phi$ with $o(k^2)$ distinct Hamming weights, and the more general \cite[Main Theorem 5]{RothSW24}, which applies whenever $\omega(k)$ Hamming weights are avoided.
Our result even holds for graph invariants rather than properties, provided that they map into a field of characteristic zero.

Given a $k$-vertex graph invariant $\Phi$, define
\[\hw(\Phi) \coloneqq \{|E(H)| \mid \Phi(H)\neq 0\}\]
to be the set of all edge-counts (i.e, Hamming weights, when viewed as bitstrings) among the support of $\Phi$.
Throughout this section, let $m \coloneqq \binom{k}{2}$.
We define a value that counts the avoided Hamming weights, 
except when $\Phi$ is empty, as
\[\beta_\Phi \coloneqq
    \begin{cases}
        0 & \text{if } |\hw(\Phi)| = 0,\\
        m + 1 - |\hw(\Phi)| & \text{otherwise}.
    \end{cases}
\]
For example, if $\Phi$ is trivial, then $\beta_\Phi = 0$. If $\Phi$ is the planarity property of $k$-vertex graphs, then $\beta_\Phi = m - 3k+6$.
For a general graph invariant $\Phi$ and $k\in\NN$, we define $\beta_\Phi(k) \coloneqq \beta_{\slice{\Phi}{k}}$.

\begin{theorem}
    \label{thm:alternating-enumerator-hw}
    Let $\Phi$ be a $k$-vertex graph invariant over a field $\FF$ of characteristic $0$ that is not $k$-trivial.
    Then there is a $k$-vertex graph $H$ with $\altenum{\Phi}(H) \neq 0$ and $|E(H)| \geq \beta_\Phi/2$.
    
    If additionally $\FF = \QQ$ and $\Phi(F) \geq 0$ for all $k$-vertex graphs $F$, then $|E(H)| \geq \beta_\Phi$.
\end{theorem}

Let $\Phi$ be a $k$-vertex graph invariant over a field $\FF$.
To prepare the proof of our lower bound, we define an invariant $\overline \Phi$ from $\Phi$ by setting $\overline \Phi(H) \coloneqq \Phi(\overline H)$, where $\overline H$ is the complement graph of $H$.

\begin{fact}
    \label{fact:deg-complement}
    If $\Phi$ be a $k$-vertex graph invariant over a field $\FF$, then $\deg_\FF(q_\Phi) = \deg_\FF(q_{\overline \Phi})$.
\end{fact}

\begin{proof}
    Let $q_\Phi(\bx)$ and $q_{\overline{\Phi}}(\bx)$ denote the polynomial representations of $\Phi$ and $\overline{\Phi}$, respectively.
    From the definition, we directly obtain that $q_\Phi(x_1,\ldots,x_m) = q_{\overline \Phi}(1-x_1,\ldots,1-x_m)$, where $\bx = (x_1,\dots,x_m)$.
    It follows that $\deg_\FF(q_\Phi) = \deg_\FF(q_{\overline \Phi})$.
\end{proof}

The following simple fact will also be useful:

\begin{fact}
    \label{fact:beta-phi-complement}
    If $\Phi$ be a $k$-vertex graph invariant over a field $\FF$ which is not $k$-trivial, then there is a graph $F$ with $|E(F)|\leq \beta_\Phi/2$ such that $\Phi(F) \neq 0$ or $\overline\Phi(F) \neq 0$.
\end{fact}

\begin{proof}
    The invariant $\Phi$ cannot avoid all of the weights $0,\dots,\lfloor\beta_\Phi/2 \rfloor$ and $m,m-1,\dots,m-\lfloor\beta_\Phi/2 \rfloor$ (since those are at least $\beta_\Phi+1$ weights). 
    In the first case, there is a graph $F$ with $|E(F)|\leq \beta_\Phi/2$ such that $\Phi(F) \neq 0$.
    Otherwise, there is a graph $H$ with $|E(H)| \geq m - \beta_\Phi/2$ such that $\Phi(H) \neq 0$.
    We set $F \coloneqq \overline{H}$ which satisfies $|E(F)|\leq \beta_\Phi/2$ and $\overline\Phi(F) \neq 0$.
\end{proof}

We are ready for the main proof.

\begin{proof}[Proof of Theorem~\ref{thm:alternating-enumerator-hw}]
    We prove the first statement and then describe the modifications required for the second statement.
    Let $q_\Phi(\bx)$ be the polynomial representation of $\Phi$ in variables $\bx = (x_1,\ldots,x_m)$; slightly abusing notation, we identify the variables $(x_1,\dots,x_m)$ with $(x_e)_{e \in E(K_k)}$ by fixing an arbitrary numbering of the edges of $K_k$.
    By Corollary~\ref{cor:alt-enum-vs-degree} and Fact~\ref{fact:deg-complement}, it suffices to show that one of $q_\Phi$ or $q_{\overline \Phi}$ has degree at least $\beta_\Phi /2$.
    
    Let $F$ be a graph with minimal number of edges $r \coloneqq |E(F)|$ such that $\Phi(F)\neq 0$.
    We may assume without loss of generality that $r \leq \beta_\Phi /2$, since otherwise we can replace $\Phi$ by $\overline{\Phi}$ in the remainder of the argument, by Fact~\ref{fact:beta-phi-complement}.

    Let $S \coloneqq E(F)$.
    We define the polynomial
    \[q(\bx) \coloneqq q_\Phi(\bx) \cdot x^S ~=~ q_\Phi(\bx) \cdot \prod_{e \in S}x_e.\]
    Clearly $\deg(q) \leq \deg(q_\Phi) + r$.
    On every input $\ba \in \{0,1\}^m$ of Hamming weight~$w \notin \hw(\Phi)$, we have $q(\ba) = 0$, since then $q_\Phi(\ba)=\Phi(K_k[\ba]) = 0$.
    Moreover, if $\ba \in \{0,1\}^m$ has Hamming weight~$r$, we have
    \begin{equation}
    \label{eq:filterphi}
        q(\ba) = \begin{cases}
            \Phi(F) & \text{if }K_k[\ba] = F, 
            \\ 0 & \text{otherwise.}
        \end{cases}
    \end{equation}
    
    We show $\deg(q) \geq \beta_\Phi$, which implies $\deg(q_\Phi) \geq \beta_\Phi -r \geq \beta_\Phi / 2$ and proves the lemma.
    Towards this end, we define a symmetric polynomial from $q$ via
    \[q^\mathrm{sym}(x_1,\dots,x_m) = \sum_{\pi \in S_m} q(x_{\pi(1)},\dots,x_{\pi(m)}).\]
    Clearly, $\deg(q^\mathrm{sym}) \leq \deg(q)$.
    Moreover, $q^\mathrm{sym}$ can be ``compressed'' to a univariate polynomial: By~\cite[Theorem~2.2]{Jukna12} (see also~\cite{NisanS94,MinskyP87}), there exists a univariate polynomial $\tilde q$ with $\deg(\tilde q) \leq \deg(q^\mathrm{sym})$ and $\tilde q(\xi_1 + \dots + \xi_m) = q^\mathrm{sym}(\xi_1,\ldots,\xi_m)$ for all $\xi_1,\ldots,\xi_m \in \{0,1\}$.
    \begin{claim}
        \label{claim:deg-f-beta-Phi}
        $\deg(\tilde q) \geq \beta_\Phi$.
    \end{claim}
    \begin{claimproof}
    The polynomial $\tilde q(x)$ has at least $\beta_\Phi$ roots, since $\tilde q(b) = 0$ for all $b \in \{0,\dots,m\} \setminus \hw(\Phi)$.
    This implies $\deg(\tilde q) \geq \beta_\Phi$ if $\tilde q \not \equiv 0$.
    To show $\tilde q \not \equiv 0$, observe that only $\Phi(F)$ contributes to $\tilde q(r)$, that is,  
    \[\tilde q(r) ~=~ r! (m-r)!\sum_{\substack{\ba \in \{0,1\}^m \text{ of}\\\text{Hamming weight }r} } q(\ba) ~=~ r! (m-r)! \cdot \Phi(F) ~\neq~ 0,\]
    where the second equality follows from~\eqref{eq:filterphi}.
    Since $\tilde q(r)\neq 0$ (where we use that $\FF$ has characteristic $0$) we indeed have $\tilde q \not \equiv 0$.
    \end{claimproof}
    
    The lemma now follows directly from Corollary~\ref{cor:alt-enum-vs-degree} and the inequalities
    \[
    \deg(q_\Phi) + \beta_\Phi / 2 
    ~\geq~ \deg(q)
    ~\geq~ \deg(q^\mathrm{sym})
    ~\geq~ \deg(\tilde q)
    ~\geq~ \beta_\Phi.
    \]

    For the second claim, we set $q(\bx) = q_\Phi(\bx)$, so $\deg(q)=\deg(q_\Phi)$ holds without the additive term $r$. We proceed as before, ignoring \eqref{eq:filterphi}, and show $\tilde q \not \equiv 0$ in Claim~\ref{claim:deg-f-beta-Phi} by observing that $\tilde q(r)=r! (m-r)!\sum_{\ba} q(\ba) >0$, since all terms in the sum are nonnegative and (at least) the terms corresponding to $F$ are positive.
\end{proof}

By combining Theorem~\ref{thm:alternating-enumerator-hw} with Theorems \ref{thm:hardness-w}\ref{item:hardness-w-1} and \ref{thm:hardness-eth}\ref{item:hardness-eth-1}, we obtain the following.
For functions $f,g\colon \NN \to \QQ$, we say that $f \in \omega_{\mathrm{inf}}(g)$ if there is an infinite sequence $n_0,n_1,n_2,\ldots$ such that $f(n_i) \in \omega(g(n_i))$. Note that $f(n_i)$ and $g(n_i)$ take $i$ as input.
Recall that $\beta_\Phi(k) = \beta_{\slice{\Phi}{k}}$, where $\slice{\Phi}{k}$ is the $k$-th slice of $\Phi$.

\begin{corollary}[Theorem \ref{thm:intro-hardness-hw} restated]
    \label{cor:hardness-hw}
    There are $N_0,\delta > 0$ such that the following holds.
    Let $k \geq N_0$ and $0 < d \leq k/2$, and let $\Phi$ be a $k$-vertex graph invariant such that
    \[\beta_\Phi \geq d \cdot k.\]
    Then no algorithm solves $\sindsub{\Phi}$ in time $O(n^{\delta \cdot d})$ unless ETH fails.
    
    Moreover, for every computable graph invariant $\Phi$ such that
    \[\beta_\Phi(k) \in \omega_{\mathrm{inf}}(k),\] the problem $\pindsub{\Phi}$ is $\sharpwone$-hard.
\end{corollary}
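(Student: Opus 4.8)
The plan is to obtain Corollary~\ref{cor:hardness-hw} by feeding the degree bound of Theorem~\ref{thm:alternating-enumerator-hw} into the two template hardness results of Section~\ref{sec:hardness-templates}, in exactly the same way as the preceding corollaries of this section; no new ideas are required beyond bookkeeping of parameters.

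For the fine-grained lower bound, fix a $k$-vertex graph invariant $\Phi$ with $\beta_\Phi \ge dk$. By Theorem~\ref{thm:alternating-enumerator-hw} there is a $k$-vertex graph $H$ with $\widehat{\Phi}(H) \ne 0$ and $|E(H)| \ge \beta_\Phi/2 \ge dk/2$. Set $\ell \coloneqq \lfloor d/2 \rfloor$, so that $k\ell \le kd/2 \le |E(H)|$. Choosing $N_0$ large enough (at least the integer from Theorem~\ref{thm:hardness-eth}) and assuming $d$ exceeds a suitable absolute constant so that $\ell \ge 2$ — the finitely many smaller values of $d$ being vacuous after shrinking $\delta$ — we get $k\ell \ge k \ge N_0$, so Theorem~\ref{thm:hardness-eth}\ref{item:hardness-eth-1} applies and rules out an algorithm for $\sindsub{\Phi}$ running in time $O(n^{\alpha_{\textsc{ind}} \cdot \ell/\sqrt{\log\ell}})$ unless ETH fails. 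Since $\ell \ge d/4$ and $\log\ell \le \log d$, we have $\ell/\sqrt{\log\ell} \ge (d/4)/\sqrt{\log d}$, and the claim follows with $\delta \coloneqq \alpha_{\textsc{ind}}/4$.

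For $\sharpwone$-hardness, let $\Phi$ be computable with $\beta_\Phi(k) \in \omega_{\mathrm{inf}}(k)$, witnessed by an infinite sequence $k_1 < k_2 < \cdots$ along which $\beta_\Phi(k_i)/k_i \to \infty$. For each $i$, Theorem~\ref{thm:alternating-enumerator-hw} applied to the slice $\slice{\Phi}{k_i}$ gives a $k_i$-vertex graph $H_i$ with $\widehat{\slice{\Phi}{k_i}}(H_i) \ne 0$ and $|E(H_i)| \ge \beta_\Phi(k_i)/2$; since every graph occurring in the alternating enumerator of $H_i$ has $k_i$ vertices, $\widehat{\Phi}(H_i) = \widehat{\slice{\Phi}{k_i}}(H_i) \ne 0$. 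A graph of tree-width $w$ is $w$-degenerate and thus has at most $w$ times as many edges as vertices, whence $\tw(H_i) \ge |E(H_i)|/k_i \ge \beta_\Phi(k_i)/(2k_i) \to \infty$. So $(H_i)_{i \ge 1}$ is a sequence of graphs of unbounded tree-width with $\widehat{\Phi}(H_i) \ne 0$ for all $i$, and Theorem~\ref{thm:hardness-w}\ref{item:hardness-w-1} yields $\sharpwone$-hardness of $\pindsub{\Phi}$.

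I expect no serious obstacle: the substance is already carried by Theorems~\ref{thm:alternating-enumerator-hw}, \ref{thm:hardness-eth} and \ref{thm:hardness-w}. The only steps needing care are the constant juggling and degenerate-case handling for the ETH bound (the choice $\ell \approx d/2$ and excluding $\ell \le 1$), and the elementary observation that a $k$-vertex graph with $\Omega(\beta_\Phi(k))$ edges has tree-width $\Omega(\beta_\Phi(k)/k)$, which converts the edge bound of Theorem~\ref{thm:alternating-enumerator-hw} into the unbounded-tree-width hypothesis of Theorem~\ref{thm:hardness-w}.
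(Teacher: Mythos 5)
Your proposal is correct and follows exactly the route the paper intends: the paper derives this corollary by plugging the edge bound of Theorem~\ref{thm:alternating-enumerator-hw} into Theorem~\ref{thm:hardness-eth}\ref{item:hardness-eth-1} (with $\ell \approx d/2$) for the ETH bound and into Theorem~\ref{thm:hardness-w}\ref{item:hardness-w-1} (via the standard $\tw(H)\geq |E(H)|/|V(H)|$ degeneracy bound) for $\sharpwone$-hardness, and your parameter bookkeeping is sound. The only rough spot is the regime of small $d$ (where $d/\sqrt{\log d}$ degenerates), but that imprecision is inherited from the statement itself rather than introduced by your argument.
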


We note that the lower bound in Corollary \ref{cor:hardness-hw} is essentially tight:
For every $4 \leq d \leq k/2$, there is a $k$-vertex property $\Phi$ with $\beta_\Phi \approx d \cdot k$ such that $\sindsub{\Phi}$ can be solved in time $O(n^d)$; see \cite[Remark 3.6]{CurticapeanDN25}.

\subsection{Invariants with Small Image}
\label{sec:hardness-small}

We also derive hardness results for graph invariants with sufficiently small image.
Let $\Phi$ be a $k$-vertex graph invariant over $\FF$. We write $\im(\Phi)$ for its image and define
\[\iota_\Phi \coloneqq |\im(\Phi)| - 1.\]
For a general graph invariant $\Phi$, we define $\iota_\Phi(k) \coloneqq \iota_{\slice{\Phi}{k}}$ for $k\in\NN$.

Let $\tau\colon \im(\Phi) \to \FF$ be a function that, intuitively speaking, relabels the outputs of $\Phi$.
Formally, we define the graph invariant $\Phi \circ \tau$ via
\[(\Phi \circ \tau)(G) \coloneqq \tau(\Phi(G)).\]

\begin{lemma}
    \label{lem:alternating-enumerator-small-image}
    Let $\Phi$ be a $k$-vertex graph invariant over $\FF$ that is not $k$-trivial and let $\tau\colon \im(\Phi) \to \FF$.
    Define $\Psi \coloneqq \Phi \circ \tau$ and let $F$ be a $k$-vertex graph such that $\altenum{\Psi}(F) \neq 0$.
    Then there is a $k$-vertex graph $H$ with $\altenum{\Phi}(H) \neq 0$ and
    \[|E(H)| \geq \frac{|E(F)|}{\iota_\Phi}.\]
\end{lemma}

\begin{proof}
    Let $m \coloneqq \binom{k}{2}$.
    Let $q_\Phi(\bx)$ and $q_\Psi(\bx)$ be the polynomial representations of $\Phi$ and $\Psi$.

    \begin{claim}
        \label{claim:look-up-table}
        $\deg(q_\Psi) \leq \deg(q_\Phi) \cdot \iota_\Phi$.
    \end{claim}
    \begin{claimproof}
        Using interpolation, there is a univariate polynomial $p(x)$ of degree at most $\iota_\Phi$ such that $p(a) = \tau(a)$ for all $a \in \im(\Phi)$.
        Now, consider the polynomial $q_\Psi'(\bx) = p(q_\Phi(\bx))$.
        Clearly, $\deg(q_\Psi') \leq \deg(q_\Phi) \cdot \iota_\Phi$.
        Also, $q_\Psi'$ represents $\Psi$ over $\{0,1\}^m$.

        Given a polynomial $r(\bx)$, let $r_{\textsf{red}}$ denote the polynomial obtained by replacing $x_i^d$ with $x_i$ in every monomial.
        We clearly have $r(\ba) = r_{\textsf{red}}(\ba)$ for all $\ba \in \{0,1\}^m$.
        Since every multilinear polynomial in $m$ indeterminates is determined by its evaluations on all values $\ba \in\{0,1\}^m$, we obtain $q_\Psi = (q_\Psi')_{\textsf{red}}$.
        In particular, $\deg(q_\Psi) \leq \deg(q_\Psi')$.
    \end{claimproof}

    Now, the lemma immediately follows from Corollary~\ref{cor:alt-enum-vs-degree} and Claim~\ref{claim:look-up-table}.
    Indeed, $|E(F)| \leq \deg(q_\Psi)$ by Corollary~\ref{cor:alt-enum-vs-degree}.
    So $\deg(q_\Phi) \geq |E(F)|/\iota_\Phi$ by Claim~\ref{claim:look-up-table}.
    Using Corollary~\ref{cor:alt-enum-vs-degree} again, we obtain a $k$-vertex graph $H$ with $\altenum{\Phi}(H) \neq 0$ and $|E(H)| \geq \deg(q_\Phi) \geq |E(F)|/\iota_\Phi$.
\end{proof}

To demonstrate the applicability of Lemma \ref{lem:alternating-enumerator-small-image}, let us consider the example of monotone graph invariants with small images.
This setting has first been studied in \cite{DoringMW25}, and we can use Lemma \ref{lem:alternating-enumerator-small-image} to obtain improved lower bounds.
Let $k \geq 1$ and let $\Phi$ be a $k$-vertex graph invariant.
We say that $\Phi$ is \emph{monotone} if $\Phi(G') \geq \Phi(G)$ for every $k$-vertex graph $G$ and every edge-induced subgraph $G'$ of $G$.
For an arbitrary invariant $\Phi$, we say that $\Phi$ is \emph{monotone on $k$-vertex graphs} if $\slice{\Phi}{k}$ is monotone.

\begin{corollary}
    \label{cor:alternating-enumerator-monotone-small-image}
    Let $\Phi$ be a $k$-vertex graph invariant that is monotone and not $k$-trivial.
    Then there is some $k$-vertex graph $H$ such that $\altenum{\Phi}(H) \neq 0$ and
    \[|E(H)| \geq \frac{k^2}{16 \cdot\iota_\Phi}.\]
\end{corollary}

\begin{proof}
    Let $I_k$ denote the edge-less $k$-vertex graph and let $a \coloneqq \Phi(I_k)$.
    Let $\tau\colon \im(\Phi) \to \QQ$ be defined via $\tau(a) = 1$ and $\tau(b) = 0$ for all $a \neq b \in \im(\Phi)$.
    Then $\Psi \coloneqq \Phi \circ \tau$ is a $k$-vertex graph property that is monotone and not $k$-trivial.
    By Theorem \ref{thm:alternating-enumerator-monotone} there is a $k$-vertex graph $F$ such that $\altenum{\Psi}(F) \neq 0$ and $|E(F)| \geq k^2/16$.
    So, by Lemma \ref{lem:alternating-enumerator-small-image}, there is a $k$-vertex graph $H$ such that $\altenum{\Phi}(H) \neq 0$ and
    \[|E(H)| \geq \frac{|E(F)|}{\iota_\Phi} \geq \frac{k^2}{16 \cdot \iota_\Phi}.\qedhere\]
\end{proof}

By combining Corollary \ref{cor:alternating-enumerator-monotone-small-image} with Theorems \ref{thm:hardness-w}\ref{item:hardness-w-1} and \ref{thm:hardness-eth}\ref{item:hardness-eth-1}, we obtain the following result.
We say a graph invariant $\Phi$ is \emph{infinitely often monotone with sublinear image} if for every $\varepsilon > 0$ there is some $k \in \NN$ such that $\slice{\Phi}{k}$ is monotone and not $k$-trivial, and $\iota_{\Phi}(k) \leq \varepsilon \cdot k$.

\begin{corollary}
    \label{cor:hardness-monotone-small-image}
    There are $N_0,\delta > 0$ such that the following holds.
    Let $k \geq N_0$ and let $\Phi$ be a $k$-vertex graph invariant that is monotone and not $k$-trivial.
    Then no algorithm solves $\sindsub{\Phi}$ in time $O(n^{\delta \cdot k/\iota_\Phi})$ unless ETH fails.

    Moreover, for every computable graph invariant $\Phi$ that is infinitely often monotone with sublinear image the problem $\pindsub{\Phi}$ is $\sharpwone$-hard.
\end{corollary}

We note that Corollary~\ref{cor:hardness-monotone-small-image} improves over \cite[Main Theorem 2]{DoringMW25} which only obtains $\sharpwone$-hardness if $\iota_\Phi(k) \leq (1 - \varepsilon) \cdot \sqrt{k}$.

\section{All Degrees are Even}
\label{sec:even}

In this section we consider the \emph{all-even} property $\alleven$ defined via
\[\alleven(G) = \begin{cases}
                    1 &\text{if } \deg_G(v) \equiv 0 \bmod 2 \text{ for all } v \in V(G)\\
                    0 &\text{otherwise,}
                \end{cases}
\]
i.e., $\alleven$ contains exactly those graphs where all vertices have even degree.
This property turns out to be interesting since $\altalleven(H) = 0$ for most graphs $H$.
While $\pindsub{\alleven}$ is still $\sharpwone$-hard, this property allows us to also derive non-trivial upper bounds on the complexity.
Moreover, the fact that $\altalleven(H) = 0$ for most graphs $H$ surprisingly allows us to obtain hardness of $\pindsub{\Phi}$ for other properties $\Phi$.

\subsection{Complexity of the All-Even Property}

The starting point for our analysis is the observation that each slice $\slice{\alleven}{k}$ forms a vector space over the $2$-element field $\ZZ_2$.
Indeed, $\slice{\alleven}{k}$ contains exactly those graphs that are in the \emph{cycle space} of $K_k$.
This allows us to use basic tools from linear algebra to determine exactly which graphs $H$ satisfy $\altalleven(H) \neq 0$.

\begin{theorem}
    \label{thm:alternating-enumerator-alleven}
    The graphs $H$ satisfying $\altalleven(H) \neq 0$ are precisely the bipartite graphs.
\end{theorem}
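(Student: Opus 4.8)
The plan is to unwind the definition of the alternating enumerator and recognize $\altalleven(H)$ as a character sum over the cycle space of $H$ regarded as an $\ZZ_2$-vector space.

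First I would observe that for a graph $H$ and an edge subset $S \subseteq E(H)$, the edge-subgraph $H[S] = (V(H),S)$ satisfies $\alleven(H[S]) = 1$ if and only if every vertex has even degree in $(V(H),S)$, i.e., if and only if $S$, viewed as its indicator vector in $\ZZ_2^{E(H)}$, lies in the cycle space $\CZ(H) \subseteq \ZZ_2^{E(H)}$; recall that this is a linear subspace. Plugging this into Definition~\ref{def:alt-enum} gives
\[
\altalleven(H) \;=\; (-1)^{|E(H)|} \sum_{S \in \CZ(H)} (-1)^{|S|}.
\]

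The crucial step is then to evaluate $\sum_{S \in \CZ(H)} (-1)^{|S|}$ via the standard dichotomy for a linear functional on a subspace. The parity map $p\colon \ZZ_2^{E(H)} \to \ZZ_2$ defined by $p(S) = |S| \bmod 2$ is $\ZZ_2$-linear, so its restriction to the subspace $\CZ(H)$ is either identically zero or a surjection onto $\ZZ_2$. In the first case every element of $\CZ(H)$ has an even number of edges and the sum equals $|\CZ(H)| = 2^{\dim \CZ(H)} \neq 0$; in the second case exactly half of the elements of $\CZ(H)$ have even size and half have odd size, so the sum is $0$. Hence $\altalleven(H) \neq 0$ if and only if every element of the cycle space of $H$ has an even number of edges. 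To finish, I would note that this condition is equivalent to bipartiteness: if $H$ is bipartite it has no odd cycle, and since every element of $\CZ(H)$ is an edge-disjoint union of cycles of $H$, all of them even, every such element has even size; conversely, if $H$ is not bipartite it contains an odd cycle, whose edge set lies in $\CZ(H)$ and has odd size.

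The main obstacle is essentially bookkeeping rather than a genuine difficulty: stating the ``a linear functional on a subspace is either zero or balanced'' step cleanly, and invoking the standard fact that an edge subset in which every vertex has even degree decomposes into edge-disjoint cycles. Once these ingredients are in place, the characterization follows directly from the definitions.
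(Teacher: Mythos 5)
Your proof is correct and follows essentially the same route as the paper: both identify $\altalleven(H)$ with a signed sum over the cycle space of $H$ viewed as a $\ZZ_2$-subspace, handle the bipartite case by noting all terms are nonnegative with the empty set contributing, and kill the non-bipartite case by showing the parity character is balanced on the cycle space. The only cosmetic difference is that you invoke the abstract ``a nonzero $\ZZ_2$-linear functional on a subspace is balanced'' dichotomy, whereas the paper proves the same fact concretely by constructing a basis of odd-cardinality cycle-space elements (shifting by an odd cycle) and applying the Binomial Theorem.
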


\begin{proof}
    Let $H$ be a bipartite graph.
    For every $S \subseteq E(H)$ with $\alleven(H[S]) = 1$ we have $|S|$ even, so every term in
    \[(-1)^{|E(H)|} \cdot \altalleven(H) = \sum_{S \subseteq E(H)} (-1)^{|S|} \cdot \alleven(H[S])\]
    is non-negative.
    Since the term $\alleven(H[\emptyset]) = 1$ is strictly positive, we get $\altalleven(H) \neq 0$.

    Next, let $H$ be a non-bipartite graph and let $C^* \subseteq E(H)$ be the edges of an odd cycle in $H$.
    Consider the cycle space of $H$, i.e., the set of all subgraphs in which all vertices have even degree: 
    \[\CC \coloneqq \{S \subseteq E(H) \mid H[S] \in \alleven\}.\]
    As noted above, $\CC$ can be viewed as a subspace of $\ZZ_2^{E(H)}$ where each $C \in \CC$ is identified with its corresponding indicator vector $v_C \in \ZZ_2^{E(H)}$.
    Slightly abusing notation, we identify $C$ with $v_C$. 

    \begin{claim}
        There is a basis $\{C_1,\dots,C_d\} \subseteq \CC$ such that $|C_i|$ is odd for every $i \in [d]$.
    \end{claim}
    \begin{claimproof}
        Clearly, there is a basis $\CB$ of $\CC$ such that $C^* \in \CB$.
        But then
        \[\CB^* \coloneqq \{C \mid C \in \CB, |C| \text{ is odd}\} \cup \{C \oplus C^* \mid C \in \CB, |C| \text{ is even}\}\]
        is also a basis of $\CC$, where $C \oplus C^*$ denotes the symmetric difference of $C$ and $C^*$ (i.e., the corresponding vectors over $\ZZ_2$ are added).
        Every element of $\CB^*$ has odd cardinality, as desired.
    \end{claimproof}

    Now let $\{C_1,\dots,C_d\}$ be the basis constructed in the claim.
    Since $|C_i|$ is odd for every $i \in [d]$, we get that
    \[|\bigoplus_{i \in M} C_i| \equiv |M| \mod 2\]
    for every $M \subseteq [d]$.
    It follows that
    \begin{align*}
        \altalleven(H) &= (-1)^{|E(H)|} \cdot \sum_{S \subseteq E(H)} (-1)^{|S|} \cdot \alleven(H[S])\\
                       &= (-1)^{|E(H)|} \cdot \sum_{C \in \CC} (-1)^{|C|}\\
                       &= (-1)^{|E(H)|} \cdot \sum_{M \subseteq [d]} (-1)^{|\bigoplus_{i \in M} C_i|}\\
                       &= (-1)^{|E(H)|} \cdot \sum_{M \subseteq [d]} (-1)^{|M|}\\
                       &= 0
    \end{align*}
    where the last equality follows from the Binomial Theorem.
\end{proof}

\begin{remark}
    \label{rem:alleven}
    A weaker version of Theorem \ref{thm:alternating-enumerator-alleven} can also be obtained via the Fourier expansion.
    Let $k \geq 1$ and let $\Psi \coloneqq \slice{\alleven}{k}$.
    The set $\supp(f_{\Psi})$ forms a subspace of $\ZZ_2^{m}$ for $m \coloneqq \binom{k}{2}$.
    In this case, the support $\supp(\widehat{f}_{\Psi})$ in the Fourier expansion is known to be the complement space; see \cite[Proposition 3.11]{ODonnell14}.
    Since $\supp(f_{\Psi})$ corresponds to the \emph{cycle space} of $K_k$, we obtain that $\supp(\widehat{f}_{\Psi})$ corresponds to the \emph{cut space} of $K_k$, i.e., the complete bipartite graphs on $k$ vertices.
    Using \eqref{eq:poly-basis-change}, it is possible to conclude that
    \begin{enumerate}
        \item $\altalleven(H) \neq 0$ for all complete bipartite graphs $H$, and
        \item if $\altalleven(H) \neq 0$ then $H$ is bipartite.
    \end{enumerate}
    Our Theorem~\ref{thm:alternating-enumerator-alleven} shows that \emph{all} bipartite graphs $H$ are in the support of $\altalleven(H)$, not just the complete bipartite graphs.
\end{remark}

By combining Theorem \ref{thm:alternating-enumerator-symmetric} with Theorems \ref{thm:hardness-w}\ref{item:hardness-w-1} and \ref{thm:hardness-eth}\ref{item:hardness-eth-1}, we obtain the following.

\begin{corollary}
    \label{cor:hardness-alleven}
    There are $N_0,\delta > 0$ such that for every $k \geq N_0$ there is no algorithm that solves $\sindsub{\slice{\alleven}{k}}$ in time $O(n^{\delta \cdot k})$ unless ETH fails.
    Moreover, $\pindsub{\alleven}$ is $\sharpwone$-hard.
\end{corollary}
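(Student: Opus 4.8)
The plan is to derive this corollary from Theorem~\ref{thm:alternating-enumerator-alleven}, which identifies $\supp(\altalleven)$ with the bipartite graphs, by feeding it into the two black-box hardness results of Section~\ref{sec:hardness-templates}. For the $\sharpwone$-hardness statement I would invoke Theorem~\ref{thm:hardness-w}\ref{item:hardness-w-1} with the (clearly computable) property $\Phi = \alleven$ and the sequence $H_k \coloneqq K_{k,k}$: each $K_{k,k}$ is bipartite, so $\altalleven(K_{k,k}) \neq 0$ for all $k$ by Theorem~\ref{thm:alternating-enumerator-alleven}, while $\tw(K_{k,k}) = k$ shows that $(K_{k,k})_{k \geq 1}$ has unbounded tree-width. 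Theorem~\ref{thm:hardness-w}\ref{item:hardness-w-1} then gives that $\pindsub{\alleven}$ is $\sharpwone$-hard.

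For the ETH lower bound I would apply Theorem~\ref{thm:hardness-eth}\ref{item:hardness-eth-2} to the $k$-vertex graph invariant $\slice{\alleven}{k}$. Given $k$, set $\ell \coloneqq \lfloor k/4\rfloor$ and take $H \coloneqq K_{\lfloor k/2\rfloor,\lceil k/2\rceil}$, the balanced complete bipartite graph on exactly $k$ vertices. As $H$ is bipartite, Theorem~\ref{thm:alternating-enumerator-alleven} yields $\widehat{\slice{\alleven}{k}}(H) = \altalleven(H) \neq 0$; since $\ell \leq \lfloor k/2\rfloor$, the graph $K_{\ell,\ell}$ is a subgraph of $H$; and a short case check over $k \bmod 4$ shows $|E(H)| = \lfloor k/2\rfloor\cdot\lceil k/2\rceil \geq k\cdot\ell$. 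For $k$ large enough we also have $k\cdot\ell \geq N_0$ with $N_0$ the absolute constant of Theorem~\ref{thm:hardness-eth}, so that theorem rules out solving $\sindsub{\slice{\alleven}{k}}$ in time $O(n^{\alpha_{\textsc{ind}}\cdot\ell})$ unless ETH fails. Using $\lfloor k/4\rfloor \geq k/8$ for large $k$ and setting $\delta \coloneqq \alpha_{\textsc{ind}}/8$ converts this into the claimed impossibility of an $O(n^{\delta\cdot k})$-time algorithm.

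Since Theorem~\ref{thm:alternating-enumerator-alleven} and the generic reductions of Section~\ref{sec:hardness-templates} are already available, this corollary is essentially a matter of assembly, and I foresee no substantial obstacle. The only point deserving a little care is that the fine-grained statement is demanded for \emph{every} $k \geq N_0$ rather than just for a sparse set of $k$, so for each $k$ (even and odd alike) one must name a single bipartite $k$-vertex graph that simultaneously has at least $k\ell$ edges and contains the biclique $K_{\ell,\ell}$; the balanced $K_{\lfloor k/2\rfloor,\lceil k/2\rceil}$ serves both roles, and propagating the constants $\alpha_{\textsc{ind}}$ and $N_0$ through Theorem~\ref{thm:hardness-eth}\ref{item:hardness-eth-2} is entirely routine.
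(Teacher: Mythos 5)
Your proposal is correct and follows exactly the paper's route: the corollary is stated as an immediate combination of Theorem~\ref{thm:alternating-enumerator-alleven} with Theorems~\ref{thm:hardness-w}\ref{item:hardness-w-1} and \ref{thm:hardness-eth}\ref{item:hardness-eth-2}, and your choices of witnesses ($K_{k,k}$ for the unbounded-treewidth sequence, the balanced biclique with $\ell = \lfloor k/4\rfloor$ for the ETH bound) together with the edge-count check are a valid instantiation of that combination.
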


In contrast to the results in Section \ref{sec:hardness-fourier}, Theorem \ref{thm:alternating-enumerator-alleven} also allows us to obtain a non-trivial upper bound on the complexity of $\pindsub{\alleven}$.
This follows from Theorem \ref{thm:algorithm-alternating-enumerator-vc}, by observing that each bipartite graph with $k$ vertices has vertex cover number at most $\lfloor k/2\rfloor$.

\begin{corollary}
    The problem $\pindsub{\alleven}$ can be solved in time $f(k) \cdot n^{\lfloor k/2\rfloor + 1}$ for some computable function $f$.
\end{corollary}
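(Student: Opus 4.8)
The plan is to invoke Theorem~\ref{thm:algorithm-alternating-enumerator-vc} with the (plainly computable) graph invariant $\Phi \coloneqq \alleven$. That theorem yields an algorithm for $\pindsub{\alleven}$ running in time $f(k) \cdot n^{t(k)+1}$ for some computable $f$, where $t(k)$ is the largest vertex cover number among $k$-vertex graphs $H$ with $\widehat{\alleven}(H) \neq 0$. Thus it suffices to show $t(k) \leq \lfloor k/2 \rfloor$.

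First I would observe that, for a $k$-vertex graph $H$, the value $\widehat{\alleven}(H)$ depends only on the values $\alleven(H[S])$ for $S \subseteq E(H)$, and each such $H[S]$ again has $k$ vertices; hence $\widehat{\alleven}(H) = \widehat{\slice{\alleven}{k}}(H)$, so Theorem~\ref{thm:alternating-enumerator-alleven} applies and tells us that $\widehat{\alleven}(H) \neq 0$ precisely when $H$ is bipartite. It then remains to bound the vertex cover number of a $k$-vertex bipartite graph: if $H$ has bipartition $(A,B)$ with $|A| \leq |B|$, then every edge of $H$ meets $A$, so $A$ is a vertex cover and $\vc(H) \leq |A| \leq \lfloor k/2 \rfloor$. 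Therefore $t(k) \leq \lfloor k/2 \rfloor$, and plugging this into Theorem~\ref{thm:algorithm-alternating-enumerator-vc} gives the running time $f(k) \cdot n^{\lfloor k/2 \rfloor + 1}$. (For completeness one can check equality: $K_{\lfloor k/2 \rfloor, \lceil k/2 \rceil}$ is bipartite, hence lies in the support of $\widehat{\alleven}$, and by König's theorem its vertex cover number equals its matching number $\lfloor k/2 \rfloor$; so in fact $t(k) = \lfloor k/2 \rfloor$, though only the upper bound is needed.)

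There is essentially no obstacle here: the substance is entirely contained in Theorems~\ref{thm:algorithm-alternating-enumerator-vc} and~\ref{thm:alternating-enumerator-alleven}, and the only genuinely new ingredient is the elementary fact that a bipartite graph on $k$ vertices has a vertex cover of size at most $\lfloor k/2 \rfloor$. The one point worth stating with a little care is the passage from $\alleven$ to $\slice{\alleven}{k}$ that legitimizes applying the characterization of the support of $\widehat{\alleven}$.
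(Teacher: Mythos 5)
Your proof is correct and follows exactly the paper's route: the paper derives this corollary from Theorem~\ref{thm:algorithm-alternating-enumerator-vc} together with Theorem~\ref{thm:alternating-enumerator-alleven}, observing that a bipartite graph on $k$ vertices has vertex cover number at most $\lfloor k/2\rfloor$. Your extra remarks on the slice restriction (which is Fact~\ref{fact:slice-restriction}) and on tightness via $K_{\lfloor k/2\rfloor,\lceil k/2\rceil}$ are accurate but not needed.
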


Observe that the algorithm from the last corollary achieves roughly a quadratic speed-up over the brute-force algorithm, which runs in time $f(k) \cdot n^{k}$ for some computable function $f$.

\subsection{Showing Hardness via the All-Even Property}

Next, we use Theorem \ref{thm:alternating-enumerator-alleven} to show hardness of $\sindsub{\Phi}$ for other graph invariants.
Indeed, Theorem \ref{thm:alternating-enumerator-alleven} shows that the all-even property is ``relatively easy'' in the sense that only few graphs have non-zero alternating enumerator.
Surprisingly, this fact allows us to prove hardness results for other graph invariants.

More precisely, to obtain hardness results for $\sindsub{\Phi}$ for a $k$-vertex graph invariant $\Phi$, the key idea is to take a point-wise product with the all-even property.
Formally, the \emph{point-wise product} of two graph invariants $\Phi,\Gamma$ is the graph invariant $\Psi = \Phi \cdot \Gamma$ defined via $\Psi(G) = \Phi(G) \cdot \Gamma(G)$.

In our applications, consider a $k$-vertex graph invariant $\Phi$ and the pointwise product $\Psi \coloneqq \Phi \cdot \alleven$.
If $\altenum{\Psi}(H^*) \neq 0$ for some $k$-vertex graph $H^*$ that is ``far away'' from being bipartite, we can show that there is some $k$-vertex graph $H$ with many edges such that $\altenum{\Phi}(H) \neq 0$ (see Lemma \ref{lem:alternating-enumerator-intersection}).
This allows us to lift hardness results from $\Psi$ to the invariant $\Phi$.

In the following, we demonstrate this approach on two examples.
In all cases, the starting point is the following lemma that expresses an alternating enumerator for a pointwise product via the alternating enumerators of its factors.

\begin{lemma}
    \label{lem:alternating-enumerator-intersection}
    Let $\Phi,\Gamma$ be $k$-vertex graph invariants and $\Psi \coloneqq \Phi \cdot \Gamma$.
    Then, for every graph $H$,
    \[\altenum{\Psi}(H) = \sum_{\substack{S_1,S_2 \subseteq E(H)\colon\\ S_1 \cup S_2 = E(H)}} \altenum{\Phi}(H[S_1]) \cdot \altenum{\Gamma}(H[S_2]).\]
\end{lemma}

\begin{proof}
    Let $q_{\Phi}(\bx)$, $q_{\Gamma}(\bx)$, and $q_{\Psi}(\bx)$ be the polynomial representations of $\Phi,\Gamma$ and $\Psi$, respectively.
    Given a polynomial $r(\bx)$, let $r_{\textsf{red}}(\bx)$ denote the multilinear polynomial obtained by replacing $x_i^d$ with $x_i$ in every monomial.
    We clearly have $r(\ba) = r_{\textsf{red}}(\ba)$ for all $\ba \in \{0,1\}^m$.
    Since every multilinear polynomial in $m$ indeterminates is determined by its evaluations on all values $\ba \in\{0,1\}^m$, we obtain $q_\Psi(\bx) = (q_\Phi(\bx) \cdot q_\Gamma(\bx))_{\textsf{red}}$, since $q_\Psi(\ba) = q_\Phi(\ba) \cdot q_\Gamma(\ba)$ for all $\ba \in \{0,1\}^m$.
    We now consider the coefficients of $q_\Psi(\bx)$.

    Let $K= K_k$.
    In the product $q_\Psi(\bx) = q_\Phi(\bx) \cdot q_\Gamma(\bx)$, monomials $x^S$ represent multisets $S$ of edges from $E(H)$ with multiplicity at most $2$ per edge.
    By Lemma~\ref{lem:polynomial-alternating-enumerator}, the coefficients of $x^S$ in $q_\Phi(\bx)$ and $q_\Gamma(\bx)$ are $\altenum{\Phi}(K[S])$ and $\altenum{\Gamma}(K[S])$, respectively.
    By collecting monomials after polynomial multiplication, the coefficient of $x^S$ in $q_\Phi(\bx) \cdot q_\Gamma(\bx)$ equals
    \begin{equation}
        \label{eq:convolution}
        \sum_{\substack{S_1,S_2 \subseteq S \colon\\ S_1 \cup S_2 = S}} \altenum{\Phi}(K[S_1]) \cdot \altenum{\Gamma}(K[S_2]),
    \end{equation}
    where the union $S_1 \cup S_2$ is interpreted as a multiset union, i.e., counting copies.
    Collecting monomials after substitution, it follows that the coefficient $c_S$ of $x^S$ in $q_\Psi(\bx)=(q_\Phi(\bx) \cdot q_\Gamma(\bx))_{\textsf{red}}$ for a \emph{set} (rather than multiset) $S \subseteq E(H)$ also equals 
    \eqref{eq:convolution}, but with $S_1 \cup S_2$ interpreted as a set union, i.e., not counting copies.
\end{proof}

As the first example, we consider properties $\Phi$ whose intersection with $\alleven$ is non-empty and contains only bipartite graphs.
In this situation, setting $\Psi \coloneqq \Phi \cdot \alleven$, we can show that $\altenum{\Psi}(K_k) \neq 0$.
Using Lemma \ref{lem:alternating-enumerator-intersection} and Theorem \ref{thm:alternating-enumerator-alleven}, we can argue that $\altenum{\Phi}(H) \neq 0$ for some graph $H$ with $\Omega(k^2)$ edges.

\begin{theorem}
    \label{thm:alternating-enumerator-alleven-bipartite}
    For every $k$-vertex graph property $\Phi$ whose intersection with $\alleven$ is non-empty and contains only bipartite graphs, there is a $k$-vertex graph $H$ such that $\altenum{\Phi}(H) \neq 0$ and $K_\ell$ is a subgraph of $H$ for some $\ell \geq \frac{k}{2}$.
\end{theorem}

\begin{proof}
    Let $\Psi \coloneqq \Phi \cdot \alleven$.
    Note that $\Psi$ is also a graph property that corresponds to the intersection of $\Phi$ and $\alleven$.
    Then every $k$-vertex graph $H \in \Psi$ has an even number of edges.
    So
    \[\altenum{\Psi}(K_k) = \sum_{S \subseteq E(K_k)} (-1)^{|S|} \cdot \Psi(K_k[S]) = \sum_{S \subseteq E(K_k)} \Psi(K_k[S])\]
    denotes the number of $k$-vertex graphs in $\Psi$.
    Since there is at least one such graph, we conclude that $\altenum{\Psi}(K_k) \neq 0$.

    By Lemma \ref{lem:alternating-enumerator-intersection} there are sets $S,S_{\sf even} \subseteq E(K_k)$ such that $S \cup S_{\sf even} = E(K_k)$ and $\altenum{\Phi}(K_k[S]) \neq 0$ and $\altalleven(K_k[S_{\sf even}]) \neq 0$.
    We set $H \coloneqq K_k[S]$.
    Since $\altalleven(K_k[S_{\sf even}]) \neq 0$, we conclude that $K_k[S_{\sf even}]$ is bipartite by Theorem \ref{thm:alternating-enumerator-alleven}.
    Let $V,W$ denote a bipartition of $K_k[S_{\sf even}]$.
    Then $\binom{V}{2} \subseteq S$ and $\binom{W}{2} \subseteq S$ since $S \cup S_{\sf even} = E(K_k)$.
    In particular, $H$ contains a clique of size at least $\frac{k}{2}$.
\end{proof}

As the second example, we consider invariants $\Phi$ whose intersection with $\alleven$ is non-empty and contains only few graphs.
A similar approach gives the following variant of Theorem \ref{thm:alternating-enumerator-small}.

\begin{theorem}
    \label{thm:alternating-enumerator-alleven-small}
    Let $k \geq 1$, $0 \leq \ell \leq \binom{k}{2} - \left(\frac{k}{2}\right)^2 $ and let $\Phi$ be a $k$-vertex graph invariant such that
    \[1 \leq |\supp(\Phi \cdot \alleven)| \leq 2^{\binom{k}{2} - \left(\frac{k}{2}\right)^2 - \ell}.\]
    Then there is a $k$-vertex graph $H$ such that $\altenum{\Phi}(H) \neq 0$ and $|E(H)| \geq \ell$.
\end{theorem}

\begin{proof}
    Let $\Psi \coloneqq \Phi \cdot \alleven$.
    By Theorem~\ref{thm:alternating-enumerator-small}, there is a $k$-vertex graph $F$ such that $\altenum{\Psi}(F) \neq 0$ and $|E(F)| \geq \left(\frac{k}{2}\right)^2 + \ell$.

    By Lemma~\ref{lem:alternating-enumerator-intersection} there are sets $S,S_{\sf even} \subseteq E(F)$ such that $S \cup S_{\sf even} = E(F)$ and $\altenum{\Phi}(F[S]) \neq 0$ and $\altalleven(F[S_{\sf even}]) \neq 0$.
    We set $H \coloneqq F[S]$.
    Since $\altalleven(F[S_{\sf even}]) \neq 0$, we conclude that $F[S_{\sf even}]$ is bipartite by Theorem \ref{thm:alternating-enumerator-alleven}.
    In particular, $|S_{\sf even}| \leq \left(\frac{k}{2}\right)^2$.
    Because $S \cup S_{\sf even} = E(F)$ and $|E(F)| \geq \left(\frac{k}{2}\right)^2 + \ell$, we get $|E(H)| = |S| \geq \ell$.
\end{proof}

\begin{remark}
    Similar to Theorem~\ref{thm:alternating-enumerator-small}, an earlier version of the paper showed a slightly weaker form of Theorem~\ref{thm:alternating-enumerator-alleven-small} via the uncertainty principle (see Lemma~\ref{lem:uncertainty}).
    Indeed, if $\supp(\Phi)$ is small, then the Fourier representation of $\Phi$ has large support, which implies it contains a graph that is far away from being bipartite.
\end{remark}

By combining Theorem \ref{thm:alternating-enumerator-alleven-small} with Theorems \ref{thm:hardness-w}\ref{item:hardness-w-1} and \ref{thm:hardness-eth}\ref{item:hardness-eth-1}, we obtain the following.

\begin{corollary}[Theorem \ref{thm:intro-hardness-alleven-small} restated]
    \label{cor:hardness-alleven-small}
    For every $0 < \varepsilon < 1$ there are $N_0,\delta > 0$ such that the following holds.
    Let $k \geq N_0$ and let $\Phi$ be a $k$-vertex graph invariant such that
    \[1 \leq |\supp(\Phi \cdot \alleven)| \leq (\sqrt{2} - \varepsilon)^{\binom{k}{2}}.\]
    Then no algorithm solves $\sindsub{\Phi}$ in time $O(n^{\delta \cdot k})$ unless ETH fails.
    
    Moreover, for every computable graph invariant $\Phi$ such that
    \[1 \leq |\supp_k(\Phi \cdot \alleven)| \leq (\sqrt{2} - \varepsilon)^{\binom{k}{2}}\]
    holds for infinitely many $k$, the problem $\pindsub{\Phi}$ is $\sharpwone$-hard.
\end{corollary}

\section{Weisfeiler-Leman Dimension}
\label{sec:wl}

Besides hardness results for induced subgraph counts, our techniques also allow us to bound or even precisely determine the Weisfeiler-Leman dimension of such graph invariants.
The $k$-dimensional Weisfeiler-Leman algorithm ($k$-WL) is a standard heuristic for testing isomorphism of graphs, but it also has surprising connections to various other areas of computer science, including finite model theory and machine learning~\cite{Grohe17,GroheN21,Kiefer20,MorrisLMRKGFB22}.
We omit a detailed description of the WL algorithm here and refer the interested reader to~\cite{Grohe17,GroheN21,Kiefer20}.
Indeed, the following characterization shown by Dvor{\'{a}}k \cite{Dvorak10} is sufficient for our purposes.

A homomorphism from a graph $F$ to a graph $G$ is a mapping $\varphi\colon V(F) \to V(G)$ such that $\varphi(v)\varphi(w) \in E(G)$ for every edge $vw \in E(F)$.
We write $\numhom{F}{G}$ to denote the number of homomorphisms from $F$ to $G$.
Then two graphs $G,G'$ are \emph{distinguished} by $k$-WL if $\numhom{F}{G} \neq  \numhom{F}{G'}$ for some graph $F$ with $\tw(F) \leq k$. 

The \emph{Weisfeiler-Leman dimension} of a graph invariant $f$ is the minimal $k \geq 1$ such that $k$-WL distinguishes between all graphs $G$ and $G'$ for which $f(G) \neq f(G')$.
For the graph invariants considered in this paper, such a number $k$ indeed exists.
Due to the connections of WL to various other areas of computer science, the WL dimension has become a widely studied measure (see, e.g.,~\cite{ArvindFKV20,ArvindFKV22,Furer17,GobelGR24,LanzingerB24,Neuen24}).
Our results yield bounds on the WL dimension of $f(\star) = \numindsubstar{\slice{\Phi}{k}}$ for various graph invariants $\Phi$.

More precisely, combining results from \cite{CurticapeanDM17} and \cite{LanzingerB24,Neuen24}, we establish a connection between the WL dimension and the alternating enumerator of certain graphs with respect to $\Phi$.
We start by covering the necessary tools from \cite{CurticapeanDM17} and \cite{LanzingerB24,Neuen24}.

\begin{definition}
    A graph invariant $\Psi$ is a \emph{graph motif parameter} if there is a finite collection of graphs $\CL$ and coefficients $\alpha_F \in \RR$ for all $F \in \CL$ such that
    \[\Psi(G) = \sum_{F \in \CL} \alpha_F \cdot \numhom{F}{G}\]
    for all graphs $G$.
\end{definition}

The WL dimension of graph motif parameters has been determined in \cite{LanzingerB24,Neuen24}.

\begin{theorem}[\cite{LanzingerB24,Neuen24}]
    \label{thm:wl-dimension-graph-motif}
    Let $\Psi$ be a \emph{graph motif parameter} and suppose there are a finite collection of pairwise non-isomorphic graphs $\CL$ and coefficients $\alpha_F \in \RR \setminus \{0\}$ for all $F \in \CL$ such that
    \[\Psi(G) = \sum_{F \in \CL} \alpha_F \cdot \numhom{F}{G}\]
    for all graphs $G$.
    Then the WL dimension of $\Psi$ is equal to $\max_{F \in \CL}\tw(F)$.
\end{theorem}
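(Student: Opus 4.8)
The plan is to prove the two matching inequalities separately. Write $k^{\ast}\coloneqq\max_{F\in\CL}\tw(F)$; we may assume $\CL\neq\emptyset$ and $k^{\ast}\geq 1$, since for $k^{\ast}=0$ every $\numhom{F}{\cdot}$ with $F\in\CL$ is a power of $|V(\cdot)|$ and the statement degenerates into the ``$\geq 1$'' convention for the WL dimension. For the easy direction (WL dimension at most $k^{\ast}$), I would argue as follows: if $G,G'$ are not distinguished by $k^{\ast}$-WL, then by Dvor{\'{a}}k's characterization recalled above $\numhom{F}{G}=\numhom{F}{G'}$ for every graph $F$ with $\tw(F)\leq k^{\ast}$, hence for every $F\in\CL$; multiplying by $\alpha_F$ and summing yields $\Psi(G)=\Psi(G')$. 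Thus $k^{\ast}$-WL distinguishes all pairs with $\Psi(G)\neq\Psi(G')$, so the WL dimension is at most $k^{\ast}$.

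For the matching lower bound I must exhibit graphs $G,G'$ that are \emph{not} distinguished by $(k^{\ast}-1)$-WL --- write $G\equiv_{k^{\ast}-1}G'$ --- yet satisfy $\Psi(G)\neq\Psi(G')$. The first step is a reduction to the ``top'' terms. For such a pair, Dvor{\'{a}}k's theorem gives $\numhom{F}{G}=\numhom{F}{G'}$ for all $F$ with $\tw(F)\leq k^{\ast}-1$, so in
\[
\Psi(G)-\Psi(G') \;=\; \sum_{F\in\CL}\alpha_F\bigl(\numhom{F}{G}-\numhom{F}{G'}\bigr)
\]
every term with $\tw(F)<k^{\ast}$ cancels, and no term has $\tw(F)>k^{\ast}$ because $k^{\ast}$ is the maximum. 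Hence $\Psi(G)-\Psi(G')=\Psi^{\ast}(G)-\Psi^{\ast}(G')$, where $\Psi^{\ast}\coloneqq\sum_{F\in\CL\colon\tw(F)=k^{\ast}}\alpha_F\cdot\numhom{F}{\cdot}$. Since the maximum $k^{\ast}$ is attained, all $\alpha_F\neq0$, and homomorphism counts of pairwise non-isomorphic graphs are linearly independent, $\Psi^{\ast}$ is a nonzero parameter. So it suffices to show that no nonzero linear combination of homomorphism counts from pairwise non-isomorphic graphs \emph{all of treewidth exactly $k^{\ast}$} is $\equiv_{k^{\ast}-1}$-invariant.

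This last claim is the substantial step, and I expect it to be the main obstacle; it is the linear-combination strengthening of the statement that the class of graphs of treewidth at most $k^{\ast}-1$ is \emph{homomorphism-distinguishing closed}. For a single graph $F_0$ with $\tw(F_0)=k^{\ast}$ one needs a pair $G\equiv_{k^{\ast}-1}G'$ with $\numhom{F_0}{G}\neq\numhom{F_0}{G'}$; such pairs are produced by Cai--F\"urer--Immerman-style ``twisted'' gadget constructions built along a width-$k^{\ast}$ tree decomposition of $F_0$ and verified by the corresponding pebble-game argument. To pass from a single graph to the combination $\Psi^{\ast}$, I would additionally need that, as $(G,G')$ ranges over pairs with $G\equiv_{k^{\ast}-1}G'$, the difference vectors $\bigl(\numhom{F}{G}-\numhom{F}{G'}\bigr)_{F\in\CL\colon\tw(F)=k^{\ast}}$ span the whole space $\RR^{\{F\in\CL\,:\,\tw(F)=k^{\ast}\}}$ --- the coordinates of smaller treewidth being already forced to zero --- and hence cannot all be orthogonal to $(\alpha_F)_F$. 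Granting this spanning statement, which together with the linear independence of homomorphism counts is exactly what is established in \cite{LanzingerB23,Neuen24}, the parameter $\Psi^{\ast}$, and therefore $\Psi$, fails to be $\equiv_{k^{\ast}-1}$-invariant, so the WL dimension is at least $k^{\ast}$ and the theorem follows.
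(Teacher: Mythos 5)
The paper does not prove this statement at all: it is imported verbatim from \cite{LanzingerB23,Neuen24} and used as a black box, so there is no in-paper argument to compare yours against. Your decomposition is the standard one and is correct as far as it goes. The upper bound is complete: by Dvor{\'a}k's characterization, $G\equiv_{k^{\ast}}G'$ forces $\numhom{F}{G}=\numhom{F}{G'}$ for all $F$ with $\tw(F)\leq k^{\ast}$, hence $\Psi(G)=\Psi(G')$. Your reduction of the lower bound to the top-treewidth layer is also sound: for $G\equiv_{k^{\ast}-1}G'$ the terms with $\tw(F)<k^{\ast}$ cancel, and $\Psi^{\ast}$ is a nonzero parameter by linear independence of homomorphism counts. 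But the step you flag as ``the substantial step'' is not a technical afterthought --- it \emph{is} the theorem. That no nonzero finite linear combination $\sum_F\alpha_F\,\numhom{F}{\cdot}$ involving a graph of treewidth $k^{\ast}$ can be $\equiv_{k^{\ast}-1}$-invariant is exactly the homomorphism-distinguishing closedness of the class of graphs of treewidth at most $k^{\ast}-1$ (Neuen), combined with the linear-algebraic transfer from single graphs to linear combinations (your ``spanning'' formulation); a single CFI pair per graph $F_0$ does not suffice, since different pairs could in principle produce difference vectors all orthogonal to $(\alpha_F)_F$. So as a self-contained proof your write-up has a gap at precisely the point you identify; as a citation-level reduction to \cite{LanzingerB23,Neuen24} it is accurate and matches how the present paper treats the statement.
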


Also, it has been shown in \cite{CurticapeanDM17} that, for every graph invariant $\Phi$ on $k$-vertex graphs, the function $\numindsubstar{\Phi}$ is a graph motif parameter.
In fact, \cite{CurticapeanDM17} even provides explicit formulas to determine the set $\CL$ and the coefficients $\alpha_F$ for all graphs $F \in \CL$.
For the purpose of this paper, the following weaker version is sufficient.

For a graph $H$ consider a partition $\rho$ of $V(H)$ such that each block $B \in \rho$ is an independent set.
Then we define $H/\rho$ as the graph obtained from $H$ by contracting each block to a single vertex.
Formally, we set $V(H/\rho) = \rho$ and $E(H/\rho) = \{B_1B_2 \mid \exists v_1 \in B_1, v_2 \in B_2\colon v_1v_2 \in E(H)\}$.
We say a graph $F$ is a \emph{quotient} of $H$, denoted by $F \preceq H$, if $F$ is isomorphic to $H/\rho$ for some partition $\rho$ of $V(H)$ such that each block $B \in \rho$ is an independent set.

\begin{lemma}[\cite{CurticapeanDM17}]
    \label{lem:ind-to-hom-basis}
    Let $\Phi$ be a $k$-vertex graph invariant.
    Then there is a finite collection of pairwise non-isomorphic graphs $\CL$ and coefficients $\alpha_F \in \RR \setminus \{0\}$ such that
    \begin{enumerate}[label=(\Roman*)]
        \item $\numindsub{\Phi}{G} = \sum_{F \in \CL} \alpha_F \cdot \numhom{F}{G}$ for all graphs $G$,
        \item\label{item:ind-to-hom-basis-2} $F \in \CL$ for all graphs such that $\altenum{\Phi}(F) \neq 0$, and
        \item\label{item:ind-to-hom-basis-3} if $F \in \CL$, then there is some $k$-vertex graph $H$ such that $F \preceq H$ and $\altenum{\Phi}(H) \neq 0$.
    \end{enumerate}
\end{lemma}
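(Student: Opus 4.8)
The plan is to derive this from Lemma~\ref{lem:phi=sub} together with the standard change of basis from subgraph counts to homomorphism counts, and then read off the coefficients. First I would note that, since $\Phi$ is a $k$-vertex graph invariant, $\widehat{\Phi}(H) = 0$ whenever $|V(H)| \neq k$: every edge-induced subgraph $H[S]$ has vertex set $V(H)$, so $\Phi(H[S]) = 0$ in that case. Hence Lemma~\ref{lem:phi=sub} expresses $\numindsub{\Phi}{\star}$ as a finite linear combination $\sum_H \widehat{\Phi}(H)\,\numsub{H}{\star}$ in which $H$ ranges only over the finitely many $k$-vertex graphs with $\widehat{\Phi}(H)\neq 0$.

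Next I would invoke the spasm expansion: for any graph $H$,
\[
\numsub{H}{\star} \;=\; \frac{1}{|\Aut(H)|}\sum_{\rho}\mu(\hat 0,\rho)\cdot\hom(H/\rho,\star),
\]
where $\rho$ ranges over partitions of $V(H)$ and $\mu$ is the Möbius function of the partition lattice; since contracting a non-independent block creates a loop and homomorphisms into a simple graph avoid loops, only partitions into independent sets survive. Collecting partitions by the isomorphism type of $H/\rho$ yields $\numsub{H}{\star} = \sum_{F\preceq H} a_{H,F}\,\hom(F,\star)$ for suitable rationals $a_{H,F}$. The feature I will use is that the coefficient of $\hom(H,\star)$ itself equals $a_{H,H} = 1/|\Aut(H)| \neq 0$, because $H/\rho$ has $|\rho|$ vertices, so $H/\rho \cong H$ forces $\rho$ to be the discrete partition.

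Substituting into Lemma~\ref{lem:phi=sub} and swapping the order of summation gives
\[
\numindsub{\Phi}{\star} \;=\; \sum_{F}\left(\sum_{\substack{H\,:\,\widehat{\Phi}(H)\neq 0\\ F\preceq H}}\widehat{\Phi}(H)\,a_{H,F}\right)\hom(F,\star),
\]
a finite sum over isomorphism types of quotients of $k$-vertex graphs, and every $H$ appearing on the right has exactly $k$ vertices. Letting $\alpha_F$ be the parenthesised coefficient and $\CL \coloneqq \{F : \alpha_F \neq 0\}$ (one representative per isomorphism type), linear independence of the functions $\hom(F,\star)$ over pairwise non-isomorphic $F$ \cite{Lovasz12} makes this the unique such representation, which gives property~(I). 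For~(III): if $F\in\CL$ then some summand of $\alpha_F$ is nonzero, yielding a graph $H$ with $\widehat{\Phi}(H)\neq 0$ and $F\preceq H$, and such $H$ has $k$ vertices. For~(II): if $\widehat{\Phi}(F)\neq 0$ then $F$ has $k$ vertices, and since $|V(H/\rho)| = |\rho| \le |V(H)|$ with equality only for the discrete partition, the only $k$-vertex graph $H$ with $F\preceq H$ is $F$ itself; hence $\alpha_F$ reduces to $\widehat{\Phi}(F)\,a_{F,F} = \widehat{\Phi}(F)/|\Aut(F)| \neq 0$, so $F\in\CL$.

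The main obstacle is setting up the spasm expansion carefully --- that only independent-set partitions contribute, that isomorphic quotients are grouped consistently, and that the diagonal coefficient $a_{H,H}$ does not vanish --- together with tracking vertex counts, since the fact that every $H$ with $\widehat{\Phi}(H)\neq 0$ lives on exactly $k$ vertices is precisely what forces the collapse used for property~(II). All of this is standard and is carried out in full in \cite{CurticapeanDM17}; the statement above is a convenient distillation of what their construction provides.
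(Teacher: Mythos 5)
Your argument is correct. The paper itself does not prove this lemma---it is stated as a citation to \cite{CurticapeanDM17}---and your derivation (Lemma~\ref{lem:phi=sub} followed by the M\"obius/spasm expansion of $\numsub{H}{\star}$ into homomorphism counts over independent-set quotients, with the key observations that $a_{H,H}=1/|\Aut(H)|\neq 0$ and that a $k$-vertex quotient of a $k$-vertex graph forces the discrete partition) is exactly the standard argument the paper delegates to that reference, so it fills the gap faithfully and all three properties (I)--(III) follow as you describe.
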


The next lemma is an immediate consequence of Theorem \ref{thm:wl-dimension-graph-motif} and Lemma \ref{lem:ind-to-hom-basis}.

\begin{lemma}
    \label{lem:wl-lower-bound-from-alternating-enumerator}
    Let $\Phi$ be a $k$-vertex graph invariant.
    Suppose there is a $k$-vertex graph $H$ with $\altenum{\Phi}(H) \neq 0$ and $\tw(H) \geq \ell$.
    Then the WL dimension of $\numindsubstar{\Phi}$ is at least $\ell$.
\end{lemma}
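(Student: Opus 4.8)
The plan is to simply combine the two imported black boxes, Lemma~\ref{lem:ind-to-hom-basis} and Theorem~\ref{thm:wl-dimension-graph-motif}, with no extra work. First I would invoke Lemma~\ref{lem:ind-to-hom-basis} for the given $k$-vertex graph invariant $\Phi$: this yields a finite collection $\CL$ of pairwise non-isomorphic graphs and nonzero coefficients $\alpha_F \in \RR \setminus \{0\}$ such that $\numindsub{\Phi}{G} = \sum_{F \in \CL} \alpha_F \cdot \hom(F,G)$ for all graphs $G$. In particular this exhibits $\numindsubstar{\Phi}$ as a graph motif parameter written in ``reduced form'' (distinct base graphs, all coefficients nonzero), which is exactly the hypothesis needed to apply Theorem~\ref{thm:wl-dimension-graph-motif}.

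Next I would use the hypothesis that there is a $k$-vertex graph $H$ with $\widehat{\Phi}(H) \neq 0$. By item~\ref{item:ind-to-hom-basis-2} of Lemma~\ref{lem:ind-to-hom-basis}, every graph $F$ with $\widehat{\Phi}(F) \neq 0$ belongs to $\CL$; hence $H \in \CL$. Therefore $\max_{F \in \CL} \tw(F) \geq \tw(H) \geq \ell$.

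Finally, Theorem~\ref{thm:wl-dimension-graph-motif} asserts that the WL dimension of $\numindsubstar{\Phi}$ equals $\max_{F \in \CL} \tw(F)$, so the WL dimension is at least $\ell$, as claimed. I do not anticipate any genuine obstacle here: the only point requiring a moment's care is checking that Lemma~\ref{lem:ind-to-hom-basis} indeed gives the representation in the exact normalized form (pairwise non-isomorphic $F$, all $\alpha_F \neq 0$) demanded by Theorem~\ref{thm:wl-dimension-graph-motif}, which it does by construction, and that item~\ref{item:ind-to-hom-basis-2} places $H$ into $\CL$ rather than only guaranteeing some quotient or supergraph of $H$ lies in $\CL$. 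Since $H$ is itself $k$-vertex and $\widehat{\Phi}(H) \neq 0$, item~\ref{item:ind-to-hom-basis-2} applies directly, so no detour through item~\ref{item:ind-to-hom-basis-3} is needed.
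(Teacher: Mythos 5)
Your proposal is correct and is exactly the argument the paper intends: the paper states this lemma as an ``immediate consequence'' of Theorem~\ref{thm:wl-dimension-graph-motif} and Lemma~\ref{lem:ind-to-hom-basis}, and your write-up simply makes that combination explicit, correctly using item~\ref{item:ind-to-hom-basis-2} to place $H$ in $\CL$ and then reading off the lower bound from the exact WL-dimension formula.
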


We can combine Lemma \ref{lem:wl-lower-bound-from-alternating-enumerator} with Theorems \ref{thm:alternating-enumerator-small}, \ref{thm:alternating-enumerator-monotone}, \ref{thm:alternating-enumerator-symmetric}, \ref{thm:alternating-enumerator-hw}, \ref{thm:alternating-enumerator-alleven}, \ref{thm:alternating-enumerator-alleven-bipartite} and \ref{thm:alternating-enumerator-alleven-small} as well as Corollary \ref{cor:alternating-enumerator-monotone-small-image} to obtain bounds on the WL dimension of certain graph invariants.
Let us only explicitly mention the following consequence, which follows from Theorem \ref{thm:alternating-enumerator-monotone} by setting $p = 2$ and gives a lower bound that differs from the trivial upper bound of $k-1$ only by the constant factor $4$.

\begin{corollary}
    Let $k \geq 1$ and suppose $\Phi$ is a $k$-vertex graph property that is monotone and not $k$-trivial.
    Then the WL dimension of $\numindsubstar{\Phi}$ is strictly greater than $\frac{k}{4}$.
\end{corollary}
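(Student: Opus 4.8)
The plan is to derive this from the monotone degree bound of Theorem~\ref{thm:alternating-enumerator-monotone} together with the Weisfeiler--Leman lower bound of Lemma~\ref{lem:wl-lower-bound-from-alternating-enumerator}. First I would apply Theorem~\ref{thm:alternating-enumerator-monotone} with the prime $p = 2$. Assuming $\Phi$ is not $k$-trivial (if it is, then $\numindsubstar{\Phi}$ is a constant-or-near-constant function and the statement should be read with this caveat, since the WL dimension is then at most $1$), the theorem produces a $k$-vertex graph $H$ with $\widehat{\Phi}(H) \neq 0 \bmod 2$ such that the complete bipartite graph $K_{\ell,\ell}$ is a subgraph of $H$ for some $\ell > k/p^2 = k/4$. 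Since $\Phi$ is a graph property, its alternating enumerator $\widehat{\Phi}(H)$ is an integer, so $\widehat{\Phi}(H) \neq 0 \bmod 2$ already gives $\widehat{\Phi}(H) \neq 0$.

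The next step is to lower-bound the tree-width of this graph $H$. Because tree-width does not decrease when passing to a supergraph and $K_{\ell,\ell} \subseteq H$, we get $\tw(H) \geq \tw(K_{\ell,\ell})$. A standard fact about complete bipartite graphs (for instance via a balanced-separator or bramble argument, and citable rather than reproved) gives $\tw(K_{\ell,\ell}) = \ell$, hence $\tw(H) \geq \ell > k/4$.

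Finally, Lemma~\ref{lem:wl-lower-bound-from-alternating-enumerator} applies directly to $H$: a $k$-vertex graph with $\widehat{\Phi}(H) \neq 0$ and $\tw(H) \geq \ell$ forces the WL dimension of $\numindsubstar{\Phi}$ to be at least $\ell$, and since $\ell > k/4$ this gives the claimed bound.

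I expect the only real subtlety to be the bookkeeping around the trivial cases: the corollary as stated relies on $\Phi$ being non-trivial on $k$-vertex graphs for Theorem~\ref{thm:alternating-enumerator-monotone} to apply, and this hypothesis (or a separate treatment of the empty and full properties, where the WL dimension is at most $1$) should be made explicit. Everything else is a short chain of invocations of results already established in the excerpt.
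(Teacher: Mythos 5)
Your proof is correct and follows exactly the route the paper intends: apply Theorem~\ref{thm:alternating-enumerator-monotone} with $p=2$ to obtain $H$ with $\widehat{\Phi}(H)\neq 0$ containing $K_{\ell,\ell}$ for $\ell > k/4$, note $\tw(H)\geq\tw(K_{\ell,\ell})=\ell$, and conclude via Lemma~\ref{lem:wl-lower-bound-from-alternating-enumerator}. Your remark that the statement implicitly needs $\Phi$ to be non-$k$-trivial is a fair observation about the corollary as phrased, but otherwise there is nothing to add.
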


For $\alleven$, we can even determine the WL dimension of $\numindsubstar{\slice{\alleven}{k}}$ exactly.

\begin{theorem}
    \label{thm:wl-alleven}
    For every $k \geq 2$ the WL dimension of $\numindsubstar{\slice{\alleven}{k}}$ is equal to $\left\lfloor\frac{k}{2}\right\rfloor$.
\end{theorem}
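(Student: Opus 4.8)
The plan is to establish matching lower and upper bounds of $\lfloor k/2\rfloor$ on the WL dimension of $\numindsubstar{\slice{\alleven}{k}}$, using the machinery of Section~\ref{sec:wl} together with the precise description of the support of $\altalleven$ from Theorem~\ref{thm:alternating-enumerator-alleven}.

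\textbf{Lower bound.} I would exhibit a single $k$-vertex graph of large treewidth whose alternating enumerator with respect to $\alleven$ is nonzero, and then invoke Lemma~\ref{lem:wl-lower-bound-from-alternating-enumerator}. The natural candidate is the balanced complete bipartite graph $H \coloneqq K_{\lfloor k/2\rfloor,\lceil k/2\rceil}$: it has exactly $\lfloor k/2\rfloor + \lceil k/2\rceil = k$ vertices, it is bipartite, and $\tw(H) = \min(\lfloor k/2\rfloor,\lceil k/2\rceil) = \lfloor k/2\rfloor$ (the value $\tw(K_{a,b}) = \min(a,b)$ is standard; the upper bound comes from a path decomposition whose bags are the smaller side together with one vertex of the larger side, and the lower bound from the fact that $K_{\min(a,b)+1}$ is a minor of $K_{a,b}$). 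Since $H$ is bipartite, Theorem~\ref{thm:alternating-enumerator-alleven} gives $\altalleven(H) \neq 0$, so Lemma~\ref{lem:wl-lower-bound-from-alternating-enumerator} shows that the WL dimension of $\numindsubstar{\slice{\alleven}{k}}$ is at least $\tw(H) = \lfloor k/2\rfloor$.

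\textbf{Upper bound.} I would invoke Lemma~\ref{lem:ind-to-hom-basis} to write $\numindsubstar{\slice{\alleven}{k}} = \sum_{F\in\CL}\alpha_F\cdot\numhomstar{F}$ with $\CL$ finite, pairwise non-isomorphic, and all $\alpha_F\neq 0$. By Theorem~\ref{thm:wl-dimension-graph-motif}, its WL dimension equals $\max_{F\in\CL}\tw(F)$, so it suffices to prove $\tw(F)\leq\lfloor k/2\rfloor$ for every $F\in\CL$. By item~\ref{item:ind-to-hom-basis-3} of Lemma~\ref{lem:ind-to-hom-basis}, every such $F$ satisfies $F\preceq H$ for some $k$-vertex graph $H$ with $\altalleven(H)\neq 0$, hence with $H$ bipartite by Theorem~\ref{thm:alternating-enumerator-alleven}. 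The smaller side of a bipartition of $H$ is a vertex cover of size at most $\lfloor k/2\rfloor$, so $\vc(H)\leq\lfloor k/2\rfloor$. It then remains to check that vertex cover number does not increase under quotients by independent sets, i.e.\ $\vc(F)\leq\vc(H)$ whenever $F\preceq H$: if $C$ is a minimum vertex cover of $H$ and $\rho$ is a partition of $V(H)$ into independent sets with $F\cong H/\rho$, then $C' \coloneqq \{B\in\rho : B\cap C\neq\emptyset\}$ is a vertex cover of $H/\rho$ with $|C'|\leq|C|$, because each edge $B_1B_2$ of $H/\rho$ arises from an edge $v_1v_2$ of $H$ with $v_i\in B_i$ and either $v_1\in C$ or $v_2\in C$. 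Combining this with the standard inequality $\tw(F)\leq\vc(F)$ gives $\tw(F)\leq\vc(F)\leq\vc(H)\leq\lfloor k/2\rfloor$ for all $F\in\CL$, so the WL dimension is at most $\lfloor k/2\rfloor$, matching the lower bound.

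The whole argument is mostly bookkeeping once Theorem~\ref{thm:alternating-enumerator-alleven}, Lemma~\ref{lem:ind-to-hom-basis} and Theorem~\ref{thm:wl-dimension-graph-motif} are in place. The step I would treat most carefully — and the only genuine subtlety — is the monotonicity $\vc(F)\leq\vc(H)$ under contraction of independent sets: quotienting a graph can in general \emph{raise} its treewidth, so the naive attempt of bounding $\tw(F)$ directly by $\tw(H)$ fails, and one has to route the bound through the vertex cover number, which is well-behaved under such quotients, together with the inequality $\tw\leq\vc$. This is exactly what keeps all homomorphism-basis graphs of $\numindsubstar{\slice{\alleven}{k}}$ within treewidth $\lfloor k/2\rfloor$.
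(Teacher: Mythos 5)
Your proof is correct and follows essentially the same route as the paper: the lower bound via a balanced complete bipartite graph together with Theorem~\ref{thm:alternating-enumerator-alleven} and Lemma~\ref{lem:wl-lower-bound-from-alternating-enumerator}, and the upper bound via Theorem~\ref{thm:wl-dimension-graph-motif}, Lemma~\ref{lem:ind-to-hom-basis}\ref{item:ind-to-hom-basis-3}, and the chain $\tw(F)\leq\vc(F)\leq\vc(H)\leq\lfloor k/2\rfloor$ (your quotient argument is exactly the paper's Lemma~\ref{lem:vertex-cover-quotient}). Your choice of $K_{\lfloor k/2\rfloor,\lceil k/2\rceil}$ rather than $K_{\lfloor k/2\rfloor,\lfloor k/2\rfloor}$ is a harmless (in fact slightly cleaner) variant, since it is literally a $k$-vertex graph as Lemma~\ref{lem:wl-lower-bound-from-alternating-enumerator} requests.
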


The proof relies on the following simple lemma.

\begin{lemma}
    \label{lem:vertex-cover-quotient}
    Let $H$ be a graph and $F$ a quotient of $H$.
    Then $\tw(F) \leq \vc(F) \leq \vc(H)$.
\end{lemma}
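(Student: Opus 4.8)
The plan is to split the statement into its two halves, $\tw(F)\le\vc(F)$ and $\vc(F)\le\vc(H)$, and prove each by an elementary combinatorial argument.

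For the first inequality, I would invoke the standard fact that a graph with a vertex cover of size $t$ has treewidth at most $t$. Concretely, let $C\subseteq V(F)$ be a minimum vertex cover, so $|C|=\vc(F)=:t$ and $V(F)\setminus C$ is independent. I would build the star-shaped tree decomposition of $F$ whose centre bag is $C$ and which has one leaf bag $C\cup\{v\}$ for every $v\in V(F)\setminus C$. Every bag has size at most $t+1$, the two bags of each tree-edge share $C$, and every edge of $F$ lies in some bag since at least one of its endpoints is in $C$; hence the width is at most $t$, giving $\tw(F)\le\vc(F)$. The degenerate cases ($V(F)=C$, or $F$ edgeless) are handled by the same construction with the obvious simplifications.

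For the second inequality, I would transport a minimum vertex cover of $H$ along the quotient map. Write $F\cong H/\rho$, where $\rho$ is a partition of $V(H)$ into independent sets, and let $\pi\colon V(H)\to\rho$ send each vertex to its block. Take a minimum vertex cover $C$ of $H$ and set $C'\coloneqq\pi(C)=\{B\in\rho\mid B\cap C\neq\emptyset\}$, so $|C'|\le|C|=\vc(H)$. I then check $C'$ is a vertex cover of $H/\rho$: every edge $B_1B_2\in E(H/\rho)$ is witnessed by an edge $v_1v_2\in E(H)$ with $v_i\in B_i$ (and $B_1\neq B_2$, since blocks are independent); as $C$ covers $v_1v_2$ we have $v_1\in C$ or $v_2\in C$, hence $B_1\in C'$ or $B_2\in C'$. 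Therefore $\vc(F)\le|C'|\le\vc(H)$, and chaining the two inequalities finishes the proof.

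I do not expect a genuine obstacle here: both steps are textbook-level. The only points that deserve a sentence of care are the edge cases of the star decomposition when $F$ has no vertices outside its cover, and the use of the independence of the blocks of $\rho$ to guarantee that adjacent vertices of $H$ land in distinct blocks (which is what makes the forward image of a vertex cover again a vertex cover).
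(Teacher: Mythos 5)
Your proposal is correct and matches the paper's proof: the second inequality is argued identically (pushing a vertex cover of $H$ forward to the blocks it meets, using independence of the blocks), and for the first inequality you simply spell out the standard star-shaped tree decomposition that the paper dismisses as folklore.
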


\begin{proof}
    The first inequality is folklore.
    For the second inequality, let $\rho$ be a partition of $H$ into independent sets, and let $C$ be a vertex cover of $H$.
    Then $C/\rho \coloneqq \{B \in \rho \mid B \cap C \neq \emptyset\}$ is a vertex cover of $H/\rho$.
    So $\vc(H/\rho) \leq \vc(H)$.
    Since $F$ is a quotient of $H$, the second inequality holds.
\end{proof}

\begin{proof}[Proof of Theorem \ref{thm:wl-alleven}]
    Let $B \coloneqq K_{\left\lfloor k/2 \right\rfloor,\left\lceil k/2 \right\rceil}$.
    We have $\tw(B) = \left\lfloor\frac{k}{2}\right\rfloor$ and
    by Theorem \ref{thm:alternating-enumerator-alleven}, we have $\altalleven(B) \neq 0$.
    So the WL dimension of $\numindsubstar{\slice{\alleven}{k}}$ is at least $\left\lfloor\frac{k}{2}\right\rfloor$ by Lemma \ref{lem:wl-lower-bound-from-alternating-enumerator}.

    For the other direction, let $H$ be a graph such that $\altalleven(H) \neq 0$ and let $F$ be a quotient of $H$.
    Then $H$ is bipartite by Theorem \ref{thm:alternating-enumerator-alleven} which implies that $\tw(F) \leq \vc(H) \leq \left\lfloor\frac{k}{2}\right\rfloor$ by Lemma \ref{lem:vertex-cover-quotient}.
    This implies that the WL dimension of $\numindsubstar{\slice{\alleven}{k}}$ is at most $\left\lfloor\frac{k}{2}\right\rfloor$ by Theorem \ref{thm:wl-dimension-graph-motif} and Lemma \ref{lem:ind-to-hom-basis}\ref{item:ind-to-hom-basis-3}.
\end{proof}

\section{Conclusion}

We show that relatively straightforward algebraic techniques can yield strong complexity results for the problem of counting induced subgraphs satisfying a fixed property $\Phi$.
Our mostly self-contained paper supersedes significant results for this problem obtained over the last decade via different techniques.
Nevertheless, several open problems remain.

Most importantly, we are still lacking a complete understanding for which properties $\Phi$ the parameterized problem $\pindsub{\Phi}$ is $\sharpwone$-hard.
Recently, together with D{\"{o}}ring \cite{CurticapeanDN25}, we identified several non-meager graph properties $\Psi$ for which $\pindsub{\Phi}$ is polynomial-time solvable, refuting an earlier conjecture on the complexity of $\pindsub{\Phi}$ \cite{DorflerRSW22,FockeR24,RothSW24}.
This led to the following updated conjecture:

\begin{conjecture}[{\cite[Conjecture 4.2]{CurticapeanDN25}}]
    Let $\Phi$ be a computable graph property and let $(H_k)_{k \geq 1}$ be a sequence of graphs of unbounded vertex cover number such that $\altenum{\Phi}(H_k) \neq 0$ for all $k \geq 1$.
    Then $\pindsub{\Phi}$ is $\sharpwone$-hard.
\end{conjecture}

A version of this conjecture for graph \emph{invariants} rather than properties is known to fail.

Moreover, it would be interesting to obtain tighter lower bounds for the Weisfeiler-Leman dimension of $\numindsubstar{\Phi}$ for $k$-vertex graph invariants $\Phi$.
Such bounds translate into tighter inexpressibility results in logics and computational lower bounds for symmetric circuits.

\bibliographystyle{plainurl}
\bibliography{indsub-fourier}

\appendix

\section{Omitted Proofs}
\label{app:omitted-proofs}

\subsection{Vertex-Colored Graphs}

In the appendix, we require formal definitions regarding colored graphs that are not relevant for the main part.
A \emph{vertex-colored graph} with colors $C$ is an undirected graph $G=(V,E,c)$ that is given together with a \emph{coloring} $c\colon V(G) \to C$.
Write $G^{\circ}=(V,E)$ for the uncolored graph underlying $G$.
We call $G$ \emph{colorful} if $c$ is a bijection, and we call $G$ \emph{canonically colored} if $C = V(G)$ and $c$ is the identity function.

For $i\in C$, we write $V_{i}(G)$ for the vertices of color $i$.
For $i,j\in C$, we write $E_{ij}(G)$ for the edges in $G$ with one endpoint of color $i$ and another of color $j$.
For $X \subseteq C$ and $Y \subseteq \binom{C}{2}$, let $G_{\setminus X,Y}$ be the colored graph obtained from $G$ by deleting all vertices with colors from $X$ and all edges whose endpoints have a color pair from $Y$, i.e,
\[G_{\setminus X,Y}=\left(V\setminus\bigcup_{i\in X}V_{i},\ E\setminus\bigcup_{ij\in Y}E_{ij}\right).\]
Given vertex-colored graphs $H$ and $G$, we write $\numsub{H}{G}$ for the number of subgraphs $F$ of $G$ that are isomorphic to $H$ by an isomorphism that respects colors; for colored graphs $F$ and $H$ with colorings $c_{F}$ and $c_{H}$, this is an isomorphism $\pi\colon V(F) \to V(H)$ with $c_{F}(v)=c_{H}(\pi(v))$ for all $v\in V(H)$.
We write $F \cong F'$ to denote that two (colored or uncolored) graphs $F,F'$ are isomorphic.

Given a canonically colored graph $H$, a graph $G$ on the same set of vertex-colors is \emph{$H$-colored} if $E_{ij}(G) = \emptyset$ for all $ij \notin E(H)$.
The following fact is immediate.

\begin{fact}
    \label{fact:colsub-preprocess}
    For vertex-colored graphs $H$ and $G$ with canonically colored $H$, we have 
    \[\numsub{H}{G} = \numsub{H}{G_{\setminus\emptyset,\overline{E(H)}}}.\]
\end{fact}

Thus, when considering $\numsub{H}{G}$ for graphs $H$ and $G$ with canonically colored $H$, we may assume $G$ to be $H$-colored without loss of generality.

\subsection{Isolating Colorful Subgraphs from Linear Combinations}

As in similar works~\cite{Curticapean15,AminiFS12}, we use the inclusion-exclusion principle for reductions between subgraph counting problems. More specifically, inclusion-exclusion allows us to count subgraphs not contained in any \emph{bad} sets $B_1,\ldots,B_s$ (e.g., avoiding a color they should contain) by determining the cardinalities of intersections of bad sets. We use a version with additive weights that appears, e.g., in elementary probability theory.

\begin{lemma}[see, e.g., {\cite[Lemma~1.3]{MitzenmacherU05}}]
    \label{lem:incl-excl}
    Let $\Omega$ be a finite set with weights $w:\Omega \to \FF$ for a field $\FF$.
    Given a subset $F \subseteq \Omega$, define its weight as $w(F) \coloneqq \sum_{x\in F} w(x)$.
    Let $B_{1},\ldots,B_{s}\subseteq\Omega$.
    For $X\subseteq[s]$, write $B_{X} \coloneqq \bigcap_{i\in X}B_{i}$ with the convention $B_{\emptyset}=\Omega$. 
    Then
    \[ w\left(\Omega\setminus\bigcup_{i=1}^{s}B_{i}\right)
    =
    \sum_{X\subseteq[s]}(-1)^{|X|}\,w(B_{X}).\]
\end{lemma}

Using this, we can compute $\numsub{H}{G}$ for vertex-colored graphs $H$ and $G$, with colorful $H$, provided access to a linear combination of \emph{uncolored} subgraph counts $f(\star)$ that contains $\numsubstar{H}$ with non-zero coefficient. 

\begin{lemma}
    \label{lem:sub-monotonicity}
    Let $k\in\mathbb{N}$ and $\FF$ be a field. 
    For coefficients $\alpha_{1},\ldots,\alpha_{s}\in\FF$
    and pairwise non-isomorphic uncolored graphs $H_{1},\ldots,H_{s}$, each on exactly $k$ vertices, consider the graph invariant
    \begin{equation}
    \label{eq:appendix-graphmotif}
        f(\star) \coloneqq \sum_{i=1}^{s}\alpha_{i} \cdot \numsubstar{H_{i}}.
    \end{equation}
    Let $H$ be a canonically colored graph with $H^{\circ} \cong H_b$ for some $b\in[s]$, and let $G$ be an $H$-colored graph.
    Then we have
    \[
    \alpha_{b}\cdot\numsub{H}{G}=\sum_{\substack{X\subseteq V(H)\\
    Y\subseteq E(H)
    }
    }(-1)^{|X|+|Y|}f(G_{\setminus X,Y}^{\circ}).
    \]
\end{lemma}

Note that this lemma handles two separate issues for us: First, it allows us to evaluate one \emph{individual} subgraph count $\numsub{H}{G}$
with an oracle for evaluating a linear combination of subgraph counts.
Moreover, it allows us to reduce subgraph counts from \emph{colorful} graphs to evaluations of $f$ on uncolored graphs.
As it turns out, these two issues are best handled together; this was recently done similarly for linear combinations of homomorphism counts~\cite{Curticapean24}.

\begin{proof}
    For $i\in [s]$, denote the set of all $H_i$-subgraph copies in $G^\circ$ as
    \[\Omega_i \coloneqq \{F\subseteq G^{\circ}\mid F\cong H_i\}.\]
    These sets are pairwise disjoint, since the graphs $H_1,\ldots,H_s$ are pairwise non-isomorphic.
    Let $\Omega \coloneqq \bigcup_{i=1}^s \Omega_i$ and endow $\Omega$ with a weight function $w:\Omega\to\mathbb R$ by setting $w(F)=\alpha_i$ if $F \in \Omega_i$.
    Then the sum in \eqref{eq:appendix-graphmotif} can be rephrased as $w(\Omega)$. 
    
    Our aim is to isolate the graphs $F\in\Omega$ that are isomorphic to $H^\circ$. 
    We achieve this by using the weighted inclusion-exclusion principle to enforce two conditions on graphs $F \in \Omega$ to be counted in terms of ``bad subsets'' of $\Omega$:

    \begin{itemize}
        \item $F$ should be vertex-colorful under the coloring of $G$. This is equivalent to requiring $V_{i}(F)\neq\emptyset$ for all colors $i\in[k]$,
        since $F$ has exactly $k$ vertices by the prerequisites of the lemma. We capture this by defining, for color $i\in[k]$, the bad set $B_{i}:=\{F\in\Omega\mid V_{i}(F)=\emptyset\}$
        of subgraphs $F$ that miss color $i$.
        \item $F$ should have $E_{ij}(F)\neq\emptyset$ for all $ij\in E(H)$.
        For $ij\in E(H)$, we therefore define the bad set $B_{ij}:=\{F\in\Omega\mid E_{ij}(F)=\emptyset\}$
        violating this condition for $ij$.
    \end{itemize}
    Write $A \subseteq \Omega$ for the set of ``good'' graphs $F \in \Omega$, i.e.,
    \[A = \Omega \setminus \left( \bigcup _{i\in [k]}B_i  \ \cup   \bigcup _{ij\in E(H)}B_{ij} \right).\]
    Given $F \in \Omega$, we write $F^{c}$ for the colored graph obtained by inheriting the colors from $G$.
    We claim that $A^c \coloneqq \{F^c \mid F\in A\}$ is precisely the set of color-respecting $H$-copies in $G$, which implies that $w(A)=\alpha_b \cdot \numsub{H}{G}$ as required. 
    
    To see this claim, observe that the following statements hold for every graph $F^{c}$ with $F \in A$:
    \begin{itemize}
        \item $F^{c}$ is \emph{colorful}: At least one vertex from each color class of $G$ is contained in $F$, because $F\notin B_i$ for all $i \in [k]$, and $F$ has $k$ vertices in total.
        \item Because $G$ is $H$-colored and $F^{c}$ is a colorful subgraph of $G$, it follows that $F^{c} \cong H'$ for a \emph{subgraph} $H' \subseteq H$.
        \item Moreover, since $F\notin B_{ij}$ for all $ij \in E(H)$, we have $E_{ij}(F) \neq \emptyset$ for $ij\in E(H)$. It follows that $F^c$ is isomorphic to a \emph{supergraph} of $H$.
    \end{itemize}
    Since $F^c$ is colorful and isomorphic to both a subgraph and a supergraph of $H$, it follows that $F^c$ is actually isomorphic to $H$.
    This shows that $A$ contains precisely the colorful $H$-copies in $G$ and proves the claim by Lemma~\ref{lem:incl-excl}.
\end{proof}

\subsection{Complexity Lower Bounds by Isolating Subgraphs}

We are ready to use Lemma~\ref{lem:sub-monotonicity} to obtain complexity lower bounds.
First, the following useful fact allows us to restrict alternating enumerators to slices.

\begin{fact}
    \label{fact:slice-restriction}
    For every graph invariant $\Phi$ and $k \geq 1$, it holds that $\altenum{(\slice{\Phi}{k})}(H) = \altenum{\Phi}(H)$ for all $k$-vertex graphs $H$.
\end{fact}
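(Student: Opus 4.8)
The plan is to simply unfold the definition of the alternating enumerator (Definition~\ref{def:alt-enum}) on both sides and observe that the two sums are term-by-term identical. Recall that for a $k$-vertex graph $H$ and a subset $S \subseteq E(H)$, the edge-induced subgraph $H[S] = (V(H),S)$ retains \emph{all} vertices of $H$; in particular $|V(H[S])| = |V(H)| = k$ for every $S \subseteq E(H)$.

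First I would fix an arbitrary $k$-vertex graph $H$ and write out
\[
\widehat{\slice{\Phi}{k}}(H) = (-1)^{|E(H)|} \sum_{S \subseteq E(H)} (-1)^{|S|}\, \slice{\Phi}{k}(H[S]).
\]
By the definition of the slice $\slice{\Phi}{k}$ together with the observation above, $\slice{\Phi}{k}(H[S]) = \Phi(H[S])$ for every $S \subseteq E(H)$, since each $H[S]$ has exactly $k$ vertices and the slice only zeroes out graphs on a different number of vertices. Substituting this equality termwise into the sum turns the right-hand side into $(-1)^{|E(H)|} \sum_{S \subseteq E(H)} (-1)^{|S|}\, \Phi(H[S]) = \widehat{\Phi}(H)$, which is the claim.

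There is essentially no obstacle: the only point requiring care is the convention that $H[S]$ denotes the edge-induced subgraph on the full vertex set $V(H)$, rather than a vertex-induced subgraph (which could change the vertex count and thus interact nontrivially with the slice). Once this is pinned down, the identity is immediate. The fact is included only to reconcile notation: the main-text lemmas are phrased in terms of $\widehat{\Phi}$ restricted to $k$-vertex graphs, whereas the reduction arguments in this appendix are stated for the $k$-th slice $\slice{\Phi}{k}$, and this fact lets us pass between the two without comment.
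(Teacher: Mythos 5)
Your proof is correct and is exactly the paper's argument, merely written out in full: the paper's one-line proof observes that $\widehat{\Phi}(H)$ depends only on the values of $\Phi$ on $k$-vertex graphs, which is precisely the point you make explicit by noting that each edge-induced subgraph $H[S]$ retains all $k$ vertices, so $\slice{\Phi}{k}(H[S]) = \Phi(H[S])$ termwise.
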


\begin{proof}
    This follows directly from Definition \ref{def:alt-enum}, since $\altenum{\Phi}(H)$ depends only on the values of $\Phi$ on $k$-vertex graphs.
\end{proof}

We base our hardness results on the following parameterized problem:
For a fixed class of uncolored graphs $\mathcal H$, the problem $\pcolsub{\mathcal H}$ asks, on input an uncolored graph $H \in \mathcal H$ and a vertex-colored graph $G$, to determine the number $\numsub{\can{H}}{G}$, where $\can{H}$ denotes the canonically colored version of $H$.
The problem is parameterized by $|V(H)|$.
Under different names, it has featured in a similar role as a reduction source throughout parameterized counting complexity, e.g., in~\cite{Curticapean15,CurticapeanDM17,CurticapeanM14,DoringMW24,RothSW24}, and it is known to be hard if $\mathcal H$ has unbounded treewidth and satisfies very mild constructibility conditions, e.g., recursive enumerability.

\begin{theorem}
    \label{thm:grid-minor}
    If $\mathcal{H}=(H_{t})_{t\geq1}$ is a recursively enumerable sequence of graphs of unbounded treewidth,
    then $\pcolsub{\mathcal H}$ is $\sharpwone$-hard and $\pmodcolsub{p}{\mathcal H}$ is $\modwone{p}$-hard for every prime $p$.
\end{theorem}

\begin{proof}
    For integer $\ell \geq 1$, let $\boxplus_{\ell}$ denote the uncolored $\ell \times \ell$ square grid, and let $\mathcal H_\boxplus =\{\boxplus_{\ell} \mid \ell \geq 1\}$ denote the class of grids.
    The problem $\pcolsub{\mathcal H_\boxplus}$ is \sharpwone-hard by a parsimonious parameterized reduction from counting $k$-cliques (cf.~\cite[Lemma~5.7]{Curticapean15}), so $\pcolsub{\mathcal H_\boxplus}$ and $\pmodcolsub{p}{\mathcal H_\boxplus}$ for primes $p$ are $\sharpwone$-hard and $\modwone{p}$-hard, respectively.
     
    The excluded grid theorem~\cite{RobertsonS86,ChekuriC16} implies that, for every $\ell\in\mathbb{N}$, there exists an index $t(\ell)$ such that $\boxplus_{\ell}$ is a minor of $H_{t(\ell)}$. 
    Consider two recursively enumerable graph classes $\mathcal{F}$ and $\mathcal{H}$ such that, for every graph $F\in \mathcal F$, there is a graph $H \in \mathcal H$ containing $F$ as a minor:
    A parsimonious parameterized reduction from $\pcolsub{\mathcal F}$ to $\pcolsub{\mathcal H}$ is known in this setting~\cite[Lemma~5.8]{Curticapean15}.
    Taking $\mathcal F = \mathcal H_\boxplus$ and $\mathcal H$ as above, it follows that $\pcolsub{\mathcal H}$ is $\sharpwone$-hard and $\pmodcolsub{p}{\mathcal H}$ is $\modwone{p}$-hard for every prime $p$.
\end{proof}

\begin{proof}[Proof of Theorem~\ref{thm:hardness-w}]
    We reduce $\pcolsub{\mathcal H}$ to $\pindsub{\Phi}$ via Theorem~\ref{thm:grid-minor} and Lemma~\ref{lem:sub-monotonicity}, which carry over to the modular setting.
    Towards this, let $\Phi$ be a computable graph invariant. We describe a procedure that constructs a graph sequence $\mathcal{H}=(H_{t})_{t\geq1}$ with $\altenum{\Phi}(H_{t})\neq0$ and $\tw(H_t)\geq t$ for all $t \in \NN$:
    Maintain a variable $T \in \NN$ for the largest treewidth seen so far, and for increasing $k\in \NN$, enumerate all graphs on vertex set $[k]$ in lexicographic order.
    Whenever this process reaches a graph $F$ with $\altenum{\Phi}(F)\neq0$ and treewidth $t>T$ (where both conditions on $F$ are computable), set $H_i \coloneqq F$ for all $i$ with $T <i \leq t$, and update $T\coloneqq t$.
    By the assumed sequence of unbounded treewidth in the theorem statement, this sequence is infinite, and by construction, $\mathcal{H}$ is recursively enumerable.
    
    Theorem~\ref{thm:grid-minor} shows that $\pcolsub{\mathcal H}$ is $\sharpwone$-hard. This problem in turn reduces to $\pindsub{\Phi}$ via Lemma~\ref{lem:sub-monotonicity}:
    Consider an instance $(H,G)$ for $\pcolsub{\mathcal H}$ and let $k \coloneqq |V(H)|$ and $n \coloneqq |V(G)|$.
    We wish to determine $\numsub{\can{H}}{G}$. 
    Using Fact~\ref{fact:colsub-preprocess}, we may assume that $G = G_{\setminus\emptyset,\overline{E(H)}}$.
    By Corollary~\ref{cor:phi=sub} we have
    \[\numindsub{\slice{\Phi}{k}}{\star} = \sum_F \altenum{(\slice{\Phi}{k})}(F) \cdot \numsub{F}{\star} =: f(\star),\]
    where $F$ ranges over all $k$-vertex graphs.
    Invoking Lemma~\ref{lem:sub-monotonicity}, we obtain that
    \begin{equation}
        \label{eq:using-submon-in-reduction}
        \altenum{(\slice{\Phi}{k})}(H) \cdot \numsub{H}{G} = \sum_{\substack{X\subseteq V(H)\\Y\subseteq E(H)}}(-1)^{|X|+|Y|}f(G_{\setminus X,Y}^{\circ}).
    \end{equation}
    Since $\altenum{(\slice{\Phi}{k})}(H) = \altenum{\Phi}(H)$ by Fact \ref{fact:slice-restriction} and $\altenum{\Phi}(H) \neq 0$ by assumption, we obtain a parameterized Turing reduction from $\pcolsub{\mathcal H}$ to $\pindsub{\Phi}$ by evaluating the right-hand side of \eqref{eq:using-submon-in-reduction} and dividing by $\altenum{\Phi}(H) \neq 0$.
    The division is admissible over every field $\FF$.
    
    Note that all values $f(G_{\setminus X,Y}^{\circ})$ can be obtained by the oracle calls $\numindsub{\slice{\Phi}{k}}{G^\circ_{\setminus X,Y}}$, without parameter increase, in overall time $2^{|V(H)|+|E(H)|} \cdot n^{O(1)}$.
    The value $\altenum{\Phi}(H)$ can be computed by brute-force evaluating $\Phi$ on $2^{O(k^2)}$ many $k$-vertex graphs.
    This yields the first statement of the theorem.

    For the second statement, note that $\ZZ_p$ for prime $p$ is a field, so the right-hand side of \eqref{eq:using-submon-in-reduction} is well-defined and the division is admissible.
    Thus, the above approach also yields a parameterized reduction from $\pmodcolsub{p}{\mathcal H}$ to $\pmodindsub{p}{\Phi}$.
\end{proof}

Next, we give a proof for Theorem~\ref{thm:hardness-eth}.
For a \emph{fixed} graph $H$, the problem $\colsub{H}$ asks, given as input a vertex-colored graph $G$, to determine the number $\numsub{\can{H}}{G}$, where $\can{H}$ denotes the canonically colored version of $H$. Note that only $G$ is part of the input, while $H$ is fixed.
As usual, we use $\modcolsub{p}{H}$ to denote the modular counting version.
Also, we write $\deccolsub{H}$ to denote the the decision version, i.e., $\deccolsub{H}$ asks to decide whether $\numsub{\can{H}}{G} > 0$.

The proof of Theorem~\ref{thm:hardness-eth} builds on Theorem~\ref{fact:density-makes-hard}, which gives lower bounds for $\deccolsub{H}$ for all graphs $H$ with large edge-density, cf.~\cite{CurticapeanDNW25}.
For Theorem~\ref{thm:hardness-eth}\ref{item:hardness-eth-3}, we need the lower bound of Theorem~\ref{fact:density-makes-hard} even for the decision version rather than the counting version.

\begin{theorem}[{\cite[Theorem 1.3]{CurticapeanDNW25}}]
    \label{fact:density-makes-hard-decision}
    Assuming ETH, there is a constant $\beta > 0$ such that, for every graph $H$ with average degree $d$, no algorithm solves $\deccolsub{H}$ in time $O(n^{\alpha \cdot d})$.
\end{theorem}

\begin{proof}[Proof of Theorem~\ref{thm:hardness-eth}]
    We first prove Item~\ref{item:hardness-eth-1}.
    Let $\Phi$ be a $k$-vertex graph invariant and suppose $H$ is a graph with $\altenum{\Phi}(H) \neq 0$ and $E(H) \geq k \cdot \ell \geq N_0$.
    Without loss of generality assume $V(H) = [k]$.
    We give an algorithm for $\colsub{H}$ that uses an algorithm for $\sindsub{\Phi}$ as a subroutine.

    Let $G$ be the input graph for the problem $\colsub{H}$.
    We may assume that $G = G_{\setminus\emptyset,\overline{E(H)}}$ by Fact~\ref{fact:colsub-preprocess}.
    We wish to determine $\numsub{\can{H}}{G}$.
    By Corollary~\ref{cor:phi=sub} we have
    \[\numindsub{\Phi}{\star} = \sum_F \altenum{\Phi}(F) \cdot \numsub{F}{\star} =: f(\star),\]
    where $F$ ranges over all $k$-vertex graphs.
    Invoking Lemma~\ref{lem:sub-monotonicity}, we obtain again that
    \begin{equation}
        \label{eq:using-submon-in-reduction-2}
        \altenum{\Phi}(H) \cdot \numsub{\can{H}}{G} = \sum_{\substack{X\subseteq V(H)\\Y\subseteq E(H)}}(-1)^{|X|+|Y|}f(G_{\setminus X,Y}^{\circ}).
    \end{equation}
    We can therefore compute $\numsub{\can{H}}{G}$ by evaluating the right-hand side of \eqref{eq:using-submon-in-reduction-2} and dividing by $\altenum{\Phi}(H) \neq 0$.
    Note that all relevant values $f(G_{\setminus X,Y}^{\circ})$ can be obtained by the oracle calls $\numindsub{\Phi}{G^\circ_{\setminus X,Y}}$ without parameter increase in overall time $2^{|V(H)|+|E(H)|} \cdot n^{O(1)}$.
    The value $\altenum{\Phi}(H)$ can be computed by brute-force by evaluating $\Phi$ on $2^{O(k^2)}$ many $k$-vertex graphs.
    Hence, an $O(n^{\beta \cdot \ell})$ algorithm for $\sindsub{\Phi}$ gives an $O(n^{c \cdot \beta \cdot \ell})$ for $\colsub{H}$ for some suitable fixed constant $c$.
    Now, Item~\ref{item:hardness-eth-1} follows from Theorem~\ref{fact:density-makes-hard-decision}.

    For Item~\ref{item:hardness-eth-3}, we first observe that, by the same arguments as before, an $O(n^{\beta \cdot \ell})$ algorithm for $\smodindsub{p}{\Phi}$ gives an $O(n^{c \cdot \beta \cdot \ell})$ for $\modcolsub{p}{H}$ for some suitable fixed constant $c$.
    We can design a randomized algorithm for $\deccolsub{H}$ as follows:
    Given an input graph $G$, we pick a subset $X \subseteq V(G)$ uniformly at random, and by \cite[Lemma~2.1]{WilliamsWWY15}, with probability at least $2^{-k}$, we have $\numsub{\can{H}}{G} > 0$ if and only if $\numsub{\can{H}}{G[X]} \neq 0 \bmod p$.
    The latter can be decided using the oracle for $\modcolsub{p}{H}$.
    Now, Item~\ref{item:hardness-eth-3} also follows from Theorem~\ref{fact:density-makes-hard-decision}.
\end{proof}

Finally, we show an algorithmic upper bound omitted from the main part.

\begin{proof}[Proof of Theorem~\ref{thm:algorithm-alternating-enumerator-vc}]
    Let $(G,k)$ be an input for $\pindsub{\Phi}$, where $G$ is a graph and $k \geq 1$ is an integer.
    We wish to compute $\numindsub{\slice{\Phi}{k}}{G}$ for the $k$-th slice $\slice{\Phi}{k}$ of $\Phi$. 
    Corollary~\ref{cor:phi=sub} yields, with $H$ ranging over all unlabelled $k$-vertex graphs,
    \begin{equation}
        \label{eq:numphi-via-sub}
        \numindsub{\slice{\Phi}{k}}{G} = \sum_H \altenum{(\slice{\Phi}{k})}(H) \cdot \numsub{H}{G}.
    \end{equation}
    Every graph $H$ with non-zero coefficient $\altenum{(\slice{\Phi}{k})}(H) = \altenum{\Phi}(H)$ (recall Fact~\ref{fact:slice-restriction} for the equality) satisfies $\vc(H) \leq t(k)$.
    Using Definition~\ref{def:alt-enum}, all coefficients for all $k$-vertex graphs $H$ can be computed in time $2^{O(k^2)}$ by brute-force.
    Thus, it remains to compute $\numsub{H}{G}$ for each of the at most $2^{O(k^2)}$ graphs $H$ with non-zero coefficient $\altenum{\Phi}(H)$; the non-zeroness of the coefficient implies $\vc(H) \leq t(k)$.

    By \cite[Theorem~1.1]{CurticapeanDM17}, the value $\numsub{H}{G}$ on input graphs $H$ and $G$, with $k$ and $n$ vertices respectively, can be computed in time $f(k)\cdot n^{q(H)+1}$, where $q(H) \coloneqq \max_{F \preceq H} \tw(F)$ and $f$ is computable. Here $F \preceq H$ denotes that $F$ is a quotient of $H$, i.e., it can be obtained repeatedly identifying vertices; see also Section~\ref{sec:wl}. 
    Given a graph $H$ with $\vc(H)\leq t(k)$, we have $q(H) \leq t(k)$ by Lemma \ref{lem:vertex-cover-quotient}. 
    Using \cite[Theorem~1.1]{CurticapeanDM17}, it follows that an individual value $\numsub{H}{G}$ for $H$ with $\altenum{(\slice{\Phi}{k})}(H) = \altenum{\Phi}(H) \neq 0$ can be computed in time $f(k)\cdot n^{t(k)+1}$.
    Hence, $\numindsub{\slice{\Phi}{k}}{G}$ can be computed in time $2^{O(k^2)} f(k)\cdot n^{t(k)+1}$ via \eqref{eq:numphi-via-sub}.
\end{proof}

\section{Few Graphs Avoid a Fixed Induced Graph}
\label{sec:F-avoid}

A stronger bound on the asymptotic number of induced $F$-free graphs for fixed $F$ is known~\cite{PromelS92}, but it requires a  significantly more sophisticated proof. For completeness, we give an elementary proof of a weaker upper bound.
In the following, an \emph{induced $F$-copy} in $G$ is an induced subgraph of $G$ isomorphic to $F$.

\begin{theorem}
    \label{thm:Fcopies}
    For every fixed graph $F$, there is a constant $\varepsilon > 0$ such that, for $k$ large enough, at most $(2-\varepsilon)^{\binom{k}{2}}$ labeled graphs on vertex set $[k]$ do not contain an induced $F$-copy.
\end{theorem}

\begin{proof}
    We show that, for every fixed $F$, there is a constant $c>0$ such that, for $k$ large enough, a random graph $H\sim\mathcal{G}(k,1/2)$ contains an induced $F$-copy with probability at least $1-\exp(-ck^{2})$.
    This implies the theorem, since then at most $\exp(-ck^{2}) \cdot  2^{\binom{k}{2}} \leq (2-\varepsilon)^{\binom{k}{2}}$ graphs for constant $\varepsilon > 0$ have \emph{no} induced $F$-copy.

    In our inductive proof, we show a stronger statement: Let $A_{k,F,t}$ be the event that a random graph $H\sim\mathcal{G}(k,1/2)$ contains
    at least $t$ vertex-disjoint induced $F$-copies, and let
    \[p(k,b,t) \coloneqq \min_{b\text{-vertex graph }F}\mathrm{Pr}(A_{k,F,t}).\]
    Note that $p(k,1,t)=1$ for all $t\leq k$. In the main part of the proof, we show:

    \begin{claim}
        \label{claim:Fcopies-induction}
        There are constants $\beta_{1}\geq\beta_{2}\geq\ldots$ such that, for even $b$ and large enough even $k$,
        \begin{equation}
            \label{eq:Fcopy-recurse}
            p(k,b,t)\geq p(k/2,b/2,2t)^{2}\cdot\left[1-\exp(-\beta_{b}\cdot t^{2})\right].
        \end{equation}
    \end{claim}

    Before proving this, let us first show how the theorem follows from \eqref{eq:Fcopy-recurse} by induction:
    First, we can assume by padding that $k$ and $b$ are powers of two.
    Then we have
    \begin{align*}
        p(k,b,1) & \geq p(k,b,k/b^{2})\\
        & \geq p(k/2,b/2,2k/b^{2})^{2}\cdot\left[1-\exp(-\beta_{b}\cdot k^{2}/b^4)\right]\geq\ldots \\ & \geq\prod_{s=0}^{\log_{2}b-1}\left[1-\exp(-\beta_{b/2^{s}} \cdot 4^s \cdot k^{2}/b^4 )\right]^{2^{s}} \geq 1-\exp(-0.99\cdot \beta_{b}\cdot k^{2}/b^4).
    \end{align*}
    Here, the first inequality is trivial, as $k/b^{2}$ disjoint $F$-copies in particular give \emph{one} $F$-copy.
    The next inequalities are inductive applications of (\ref{eq:Fcopy-recurse}), terminating at $p(k/b,1,k/b)=1$.
    The last inequality follows from standard bounds and the fact that $\beta_{b}\leq\beta_{r}$ for all $r\leq b$, so the factor $1-\exp(-\beta_b\cdot k^2/b^4)$ grows slowest in the overall product over $s$.

    In the remainder of the proof, we show Claim~\ref{claim:Fcopies-induction}.
    Let $H\sim\mathcal{G}(k,1/2)$ and let $F$ be a $b$-vertex graph minimizing $\mathrm{Pr}(A_{k,F,t})$.
    Partition $F$ equitably into $F_{L}$ and $F_{R}$, and partition $H$ equitably into $H_{L}$ and $H_{R}$, by choosing the first half of vertices in the respective graphs.
    With probability at least $p(k/2,b/2,2t)^{2}$, there are $2t$ vertex-disjoint induced $F_{L}$-copies $F_{L}^{1},\ldots,F_{L}^{2t}$ in $H_{L}$, and analogous copies $F_{R}^{1},\ldots,F_{R}^{2t}$ in $H_{R}$.
    Note that these two events are independent.

    Construct an auxiliary bipartite graph $X$ on vertices $\ell_{1},\ldots,\ell_{2t}$ and $r_{1},\ldots,r_{2t}$ corresponding to the copies of $F_{L}$ and $F_{R}$ in $H$.
    The edge $\ell_{i}r_{j}$ is present in $X$ exactly if $V(F_{L}^{i})\cup V(F_{R}^{j})$ induces an $F$-copy in $H$.

    \begin{claim}
        The distribution of $X$ follows a \emph{bipartite} random graph model with $\alpha k$ vertices and independent edge probability $q=q(b)>0$.
    \end{claim}
    \begin{claimproof}
        Indeed, $\ell_{i}r_{j}\in E(X)$ occurs with constant probability $q$, depending only on $b$. Moreover, these events are mutually independent:
        Whether $V(F_{L}^{i})\cup V(F_{R}^{j})$ induces an $F$-copy in $H$ depends only on the ``cross-edges'' $C_{i,j}=V(F_{L}^{i})\times V(F_{R}^{j})$, since $F_{L}^{i}$ and $F_{R}^{j}$ are already proper copies of $F_{L}$ and $F_{R}$.
        The cross-edges $C_{i,j}$ and $C_{i',j'}$ in $H$ are disjoint for $(i,j)\neq(i',j')$, and any events about disjoint edge-sets in $H$ are independent.
    \end{claimproof}

    Next, observe that every $t$-edge matching $M$ in $X$ yields $t$ vertex-disjoint induced $F$-copies in $H$.
    A simple union bound allows us to show that such a matching is very likely to occur.
    \begin{claim}
        There is a constant $\beta_b >0$ such that, for $t$ large enough, the random graph $X$ (i.e., a bipartite random graph on $2t+2t$ vertices with constant edge probability) contains a $t$-edge matching with probability at least $1-\exp(-\beta_b \cdot t^{2})$.
    \end{claim}
    \begin{claimproof}
        The absence of such a matching in $X$ would imply the existence of an independent set in $X$ with $t$ left and $t$ right vertices.
        There are at most $\binom{2t}{t}^{2}\leq2^{4t}$ many candidate sets, and each has probability $(1-q)^{t^{2}}$.
        This upper-bounds the absence probability by $2^{4t}(1-q)^{t^{2}}\leq\exp(-\beta_{b}t^{2})$ for a constant $\beta_{b}>0$ depending on $b$ and large enough $t$.
    \end{claimproof}
    Overall, We have shown Claim~\ref{claim:Fcopies-induction}. This concludes the proof of the theorem.
\end{proof} 

\end{document}